\newcommand\coker{\mathop{{\rm coker}}}
\newcommand\p{\partial}
\newcommand\rd{{\rm d}}
\renewcommand\sout{\bgroup \color{red} \ULdepth=-.5ex \ULset}
\definecolor{mydarkred}{RGB}{233,20,35}
\definecolor{mypurple}{RGB}{120, 35, 160}
\definecolor{mydarkpurple}{RGB}{128, 100, 162}
\definecolor{mybrown}{RGB}{255, 195, 0}
\definecolor{myaqua}{RGB}{29, 153, 168}
\definecolor{myblue}{RGB}{91, 129, 184}  
\definecolor{mygreen}{RGB}{155, 187, 89}  
\definecolor{mybrightblue}{RGB}{0, 140, 255}  
\tikzstyle{species}=[
\tikzstyle{reaction}=[rectangle,draw=black!100,fill=black!15,thick, inner sep=0pt,minimum size=6mm]
\theoremstyle{remark}
\newtheorem{remark}{Remark}
\newtheorem{theorem}{Theorem}
\newtheorem{corollary}{Corollary}
\newtheorem{definition}{Definition}
\newtheorem{example}{Example}
\begin{document}
\preprint{KEK-TH-2296}

\title{
Structural reduction of chemical reaction networks \\
based on topology 
}

\author{Yuji~Hirono}
\email{yuji.hirono@gmail.com}
\affiliation{Asia Pacific Center for Theoretical Physics, Pohang 37673, Korea}
\affiliation{Department of Physics, POSTECH, Pohang 37673, Korea}

\author{Takashi~Okada}
\email{takashi.okada@riken.jp}
\affiliation{RIKEN iTHEMS, RIKEN, Wako 351-0198, Japan}

\author{Hiroyasu~Miyazaki}
\email{hiroyasu.miyazaki@riken.jp}
\affiliation{RIKEN iTHEMS, RIKEN, Wako 351-0198, Japan}

\author{Yoshimasa~Hidaka}
\email{hidaka@post.kek.jp}
\affiliation{KEK Theory Center, Tsukuba 305-0801, Japan}
\affiliation{Graduate University for Advanced Studies (Sokendai), Tsukuba 305-0801, Japan}
\affiliation{RIKEN iTHEMS, RIKEN, Wako 351-0198, Japan}

\date{\today}

\begin{abstract}
We develop a model-independent reduction method of  chemical reaction systems based on the stoichiometry, 
which determines their network topology.
A subnetwork can be eliminated systematically 
to give a reduced system with fewer degrees of freedom. 
This subnetwork removal is accompanied by rewiring 
of the network, which is prescribed by the Schur complement of the stoichiometric matrix. 
Using homology and cohomology groups 
to characterize the topology of chemical reaction networks, 
we can track the changes of the network topology 
induced by the reduction through the changes in those groups. 
We prove that, when certain topological conditions are met, the steady-state chemical concentrations and reaction rates of the reduced system are ensured to be the same as those of the original system. 
This result holds regardless of the modeling of the reactions, 
namely chemical kinetics, since the conditions only involve topological information. 
This is advantageous because the details of reaction kinetics and parameter values are difficult to identify in many practical situations. 
The method allows us to reduce a reaction network while preserving its original steady-state properties, thereby complex reaction systems can be studied efficiently. 
We demonstrate the reduction method in hypothetical networks and the central carbon metabolism of {\it Escherichia coli}.
\end{abstract}

\maketitle

\tableofcontents

\section{Introduction}
Chemical reactions in living systems form complex networks~\cite{kanehisa2000kegg,
jeong2000large,ravasz2002hierarchical}. 
They operate in a highly coordinated manner, 
and are responsible for various cellular functions. 
Experimentally, high-throughput measurements have been conducted 
to study cellular responses to perturbations 
for the purpose of elucidating underlying regulatory mechanisms (see, for example, Refs.~\cite{ishii2007multiple, munger2008systems,zamboni200913}).
One approach to the theoretical studies of biological systems is to build elaborated models, employing particular kinetics, parameter values, and initial/external conditions.
Although these models can provide detailed quantitative predictions, 
a faithful modeling is challenging for most biological systems, because our prior knowledge about kinetics and parameter values is limited, and also because many parameters are difficult to measure experimentally. 
Furthermore, the complexity of models may confound model-independent features with model-dependent ones.

To address these difficulties, it is desirable to reduce complex reaction systems to simpler ones.
A reduction is practically useful since it can reduce the number of variables and parameters needed to be included in the analysis, and it can also identify features essential to focal phenomena or properties of interest (such as biomass production of a metabolic network).
It also relates to a conceptual question of the robustness of biochemical processes \cite{barkai1997robustness,alon1999robustness,kitano2004biological,kitano2007towards,shinar2010structural,LARHLIMI20111,10.3389/fgene.2012.00067,eloundou2016network}. 
Chemical reaction networks inside living organisms are highly interconnected, and yet are robust under internal fluctuations and environmental perturbations.
If a system is insensitive to the details of its substructure, 
it is natural to expect that a reduction is possible, 
in the spirit of renormalization. 
To the best of our knowledge, the reduction methods 
\cite{okino1998simplification,snowden2017methods}
of chemical reaction systems 
studied so far are 
based on timescale separation, 
lumping \cite{kuo1969lumping,wei1969lumping}, 
sensitivity analysis \cite{iet:/content/journals/10.1049/iet-syb.2011.0015,DEGENRING2004729,APRI201216} 
or optimizations \cite{dano2006reduction,taylor2008oscillator}. 
To apply those approaches, 
we need detailed information about the reactions. 
For example, to exploit the timescale separation, 
we should know which reactions are fast and which are slow. 
The sensitivity analysis also requires 
the dependence of the system on various parameters.

In this paper, we develop a systematic method of reducing chemical reaction networks based on their topology (see Figs.~\ref{fig:schem-reduction} and \ref{fig:schematic}). 
One motivation for the reduction method comes from 
the {\it law of localization}  \cite{PhysRevLett.117.048101,PhysRevE.96.022322, PhysRevE.98.012417};
 if a certain topological index, which we call the {\it  influence index}, is zero for a subnetwork, perturbations inside the subnetwork do not affect the steady state of the remaining elements of the network (see Sec.~\ref{sec:lol} for the precise statement). 
This observation indicates that certain subnetworks are `irrelevant', as far as the remaining part of the network is concerned.  
As we will show, a reduction can be systematically performed through the Schur complementation 
of the stoichiometric matrix with respect to a subset of 
chemical species and reactions. 
The well-definedness of the reduction process 
requires that the subnetwork should satisfy a condition called the {\it output-completeness}. 
The behavior of the reduced system depends on the topological nature of the subnetwork. 
As a central result, 
we prove that, when the influence index of the subnetwork vanishes,
the steady-state chemical concentrations and reaction rates of the reduced system are exactly the same as those of the original system, as far as the remaining degrees of freedom are concerned.

We emphasize that those conditions are {\it topological} ones
and determined solely by the network structure;  
hence, are insensitive to the details of how the reactions are modeled. 
Thus, the result is broadly applicable, 
because it holds regardless of the kinetics or parameter values. 
This is of practical merit since the kinetics of reactions or the values of parameters are difficult to identify in many situations. 
To characterize the topology of reaction networks, 
we introduce the homology and cohomology groups for chemical reaction networks. 
The change of the topology of chemical reaction networks 
under the reduction is captured by the change of the (co)homology groups.  
The tools of algebraic topology are convenient for tracking those changes. 
We recommend the readers who are interested in practical aspects of reduction to directly go to Sec.~\ref{sec:reduction}, where we discuss the reduction procedure with simple examples.

\begin{figure}[tb] 
\centering
\includegraphics[clip, trim=0cm 6cm 0cm 0cm, width=12cm] 
{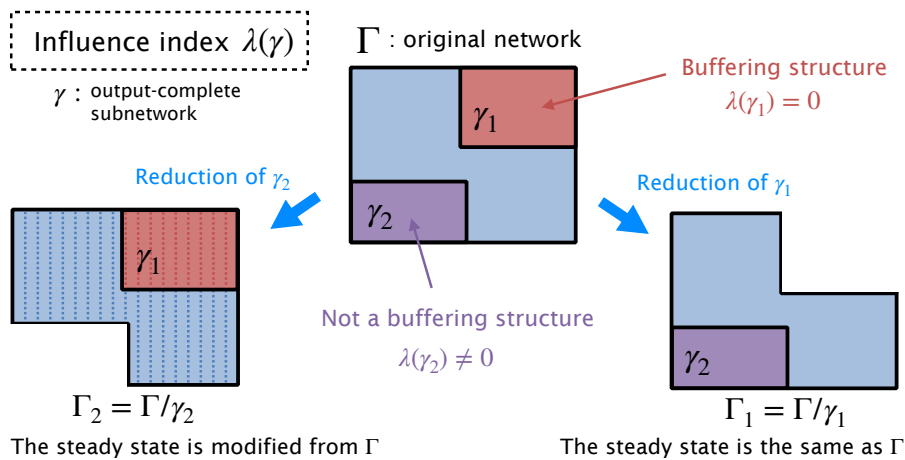}
\caption{
Schematic of the reduction procedure. 
For a given subnetwork $\gamma$
satisfying a condition called output-completeness, 
we assign a nonnegative integer that we call the influence index, $\lambda(\gamma)$. 
A subnetwork with vanishing influence index is called a buffering structure. 
Although an elimination of a subnetwork ($\gamma_2$ in the figure) generally modifies the steady state of the remaining part of the network, 
a buffering structure ($\gamma_1$) can be reduced while preserving 
the original steady state of the remaining part.
}
 \label{fig:schem-reduction} 
\end{figure}

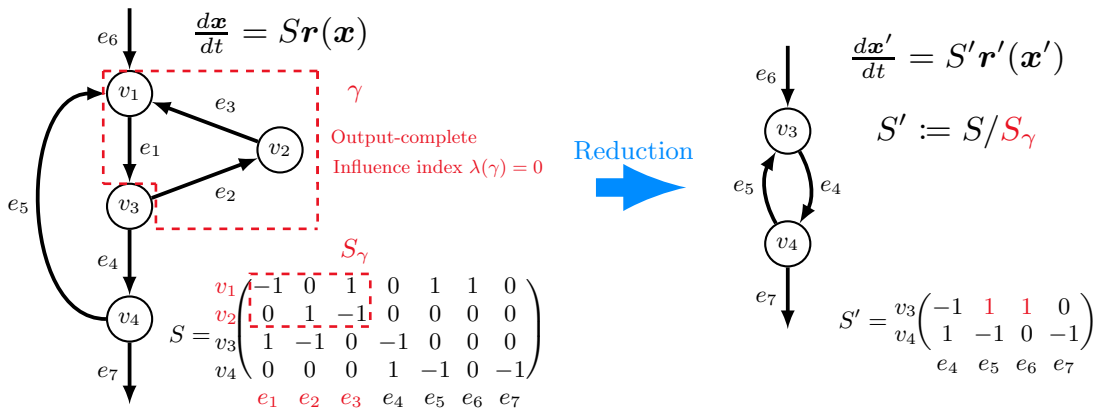
\begin{figure}[tb] 
\centering
\begin{tikzpicture} 
\node at (2, 0.8) {  \scalebox{1.4} { $\frac{d \bm x}{dt}=S \bm r (\bm x)$
} };  
\node at (11, 0.5) {  \scalebox{1.4} { $\frac{d \bm  x'}{dt}=S'
 \bm r' (\bm x')$
} };  
\node at (11, -0.5) {\scalebox{1.4} {$S'\coloneqq S / \color{mydarkred}{S_\gamma}$}};

    \node[species] (v1) at (0,0) {$v_1$}; 
    \node[species] (v3) at (0,-1.5) {$v_3$};
    \node[species] (v4) at (0,-3) {$v_4$};
    \node[species] (v2) at (2,-0.75) {$v_2$}; 
    
    \node (d1) at (0,1.25) {}; 
    \node (d2) at (0,-4.25) {}; 
  
    \draw [-latex,line width=0.5mm] (d1) edge node[left]{$e_6$} (v1);
    \draw [-latex,line width=0.5mm] (v1) edge node[right]{$e_1$} (v3);
    \draw [-latex,line width=0.5mm] (v3) edge node[below right]{$e_2$} (v2);
    \draw [-latex,line width=0.5mm] (v2) edge node[above right]{$e_3$} (v1);
    \draw [-latex,line width=0.5mm] (v3) edge node[left]{$e_4$} (v4);
    \draw [-latex,line width=0.5mm] (v4) edge node[left]{$e_7$} (d2); 
    \draw [-latex,line width=0.5mm,out=180,in=180] 
    (v4) edge node[left]{$e_5$} (v1);

 \node at (3.,0) { \scalebox{1.1} {\color{mydarkred} $\gamma$}};
 \node at (3.6,-0.6) { \scalebox{0.8} { \color{mydarkred}Output-complete}};
 \node at (4.1,-1.) { \scalebox{0.8} {\color{mydarkred}Influence index $\lambda (\gamma) =0$
 }};
  
  \draw [mydarkred,dashed,line width=1]
  (-0.35,0.3) -- (2.5,0.3); 
   \draw [mydarkred,dashed,line width=1]
  (-0.35,-1.2) -- (0.35,-1.2); 
     \draw [mydarkred,dashed,line width=1]
  (0.35,-1.2) -- (0.35,-1.8); 
     \draw [mydarkred,dashed,line width=1]
  (0.35,-1.8) -- (2.5,-1.8); 
    \draw [mydarkred,dashed,line width=1]
  (-0.35,0.3) -- (-0.35,-1.2); 
     \draw [mydarkred,dashed,line width=1]
  (2.5,0.3) -- (2.5,-1.2); 
     \draw [mydarkred,dashed,line width=1]
  (2.5,0.3) -- (2.5,-1.8); 

 \node at (0.8, -3.2) {
 $S=$};
 \node at (3.2, -3.5) {
    {\small $
 \begin{blockarray}{cccccccc}
 \begin{block}{c(ccccccc)}
 {\color{mydarkred}v_1}\,\, & -1 & 0 & 1 & 0 & 1 & 1 & 0 \\
 {\color{mydarkred}v_2}\,\,& 0 & 1 & -1 & 0 & 0 & 0 & 0 \\
 {v_3}\,\,& 1 & -1 & 0 & -1 & 0 & 0 & 0 \\
 {v_4}\,\,& 0 & 0 & 0 & 1 & -1 & 0 & -1 \\
 \end{block} 
 & {\color{mydarkred}e_1} & {\color{mydarkred}e_2}&  {\color{mydarkred}e_3} & {e_4}& {e_5}& e_6&  e_7
  \end{blockarray} 
          $}
      }; 
  \draw[mydarkred, dashed,line width=1] (1.6,-2.4) rectangle (3.2,-3.1);
   \node at (3, -2.1) {  \scalebox{1.1} {\color{mydarkred} $S_\gamma$} };

 \draw[mybrightblue,-latex, line width=7pt] (6.2,-1.25) -- (7.4,-1.25); 
 
 
 \node at (6.7,-0.75){ \scalebox{1.1} {\color{mybrightblue}Reduction}};

    \node[species] (red_v2) at (8.75,-0.5) {$v_3$}; 
    \node[species] (red_v3) at (8.75,-2) {$v_4$};

    \node (red_d1) at (8.75,0.75) {};
    \node (red_d2) at (8.75,-3.25) {};
    
    \draw [-latex,line width=0.5mm] 
    (red_d1) edge node[left]{$e_6$} (red_v2);
    \draw [-latex,line width=0.5mm,out=-60,in=60]
    (red_v2) edge node[right]{$e_4$} (red_v3);
   \draw [-latex,line width=0.5mm,out=120,in=240] 
    (red_v3) edge node[left]{$e_5$} (red_v2); 
    \draw [-latex,line width=0.5mm]
    (red_v3) edge node[left]{$e_7$} (red_d2);

\node at (9.75, -3.){$S'=$};
 \node at (11.4, -3.4) {
   {\small  $
 \begin{blockarray}{cccccccc}
 \begin{block}{c(ccccccc)}
 {\color{black}v_3}\,\,&  -1 & \color{mydarkred}{1} & \color{mydarkred}{1} & 0 \\
 {v_4}\,\,&  1 & -1 & 0 & -1 \\
 \end{block} 
 & {\color{black}e_4}& {\color{black}e_5}& e_6&  e_7
  \end{blockarray} 
          $}
          };
\end{tikzpicture}
\caption{
Example of the reduction. 
In a chemical reaction network with a stoichiometric matrix $S$, if a subnetwork $\gamma$ with $S_\gamma$ is output-complete, the system can be reduced to a smaller system, whose stoichiometric matrix $S'$ is given by the generalized Schur complement $S'\coloneqq S/S_\gamma$. 
The reduced system can reproduce the same steady-state properties of the original system, if the influence index of the subnetwork is zero. 
Note that $S'$ is in general different from the corresponding submatrix of $S$ (compare the lower-right block of $S$ with $S'$, where the difference between them is indicated by the colored components in $S'$). This alteration is pictorially represented as `rewiring' of the network (e.g., the head of $e_5$ is rewired to $v_3$). See Fig.~\ref{fig:ecoli} for an application to the central carbon metabolism of {\it E.~coli}. }
 \label{fig:schematic}
\end{figure}

The rest of the paper is organized as follows. 
In Sec.~\ref{sec:crn}, we introduce concepts for characterizing the structure of chemical reaction networks. 
Further, we introduce the homology and cohomology groups for chemical reaction networks, 
and the steady-state reaction rates and concentrations 
are determined by the elements of the cohomology groups. 
In Sec.~\ref{sec:lol}, we review the structural sensitivity analysis and the law of localization. 
We also show that the influence index is submodular 
as a function over output-complete subnetworks. 
In Sec.~\ref{sec:reduction}, we introduce the reduction procedure and illustrate the method with simple examples. 
In Sec.~\ref{sec:red-buff}, we discuss the relation between 
the structural sensitivity analysis and the reduction method. 
We show that the reduction of a buffering structure, 
that is an output-complete subnetwork with vanishing influence index, 
has a particularly nice property: The reduced system admits the same steady states as the original system.
In Sec.~\ref{sec:ex-ecoli}, as an application to realistic networks, we demonstrate the reduction method for the metabolic pathways of 
{\it Escherichia~coli}. 
Section \ref{sec:summary} is devoted to summary and outlook. 
In Appendix~\ref{sec:app-hodge}, we discuss the Hodge decomposition and Laplace operators for chemical reaction networks. 
In Appendix~\ref{sec:app-cycles}, we provide intuitive interpretations of the cycles and conserved charges of various types 
that appear in the decomposition of the influence index. 
We also illustrate how the decomposition of the index 
can be seen visually in the structure of the A-matrix, which characterizes the response of the steady state to the perturbations of parameters.  
In Appendix \ref{sec:app-emergent}, the role of emergent conserved charges in subnetworks is discussed. 
In Appendix \ref{sec:app-ecoli}, 
we provide the details of 
the metabolic pathways of {\it E.~coli.} discussed in Sec.~\ref{sec:ex-ecoli}.

\section{Topology of chemical reaction networks }\label{sec:crn}

In this section, we introduce definitions and concepts
for characterizing the topology of chemical reaction networks. 
Those concepts will be used to track the change of reaction networks under reductions. 

\subsection{Chemical reaction networks} 

\begin{definition}[Chemical reaction network]
  A {\it chemical reaction network} (CRN) $\Gamma$ 
  is a quadruple $\Gamma =(V,E,s,t)$,
  where 
  $V$ is a set of chemical species, 
  $E$ is a set of chemical reactions, 
  and $s$ and $t$ are source and target functions, 
  \begin{equation}
    s: E \to {\mathbb N}^{V}, 
    \quad 
    t: E \to {\mathbb N}^{V}, 
\end{equation}
which specify the reactants/products of a reaction. 
Here, $\mathbb N$ indicates nonnegative integers,
and the elements of $\mathbb N^V$ are maps from $V$ to $\mathbb N$. 
\end{definition}

Let us explain the definition in more detail. 
We will use the indices $i,j,k,\cdots$ for chemical species 
and $A,B,C,\cdots$ for chemical reactions. 
Given a reaction $e_A \in E$, 
we have a map, 
$s(e_A): V \to \mathbb N$, 
and 
$s(e_A) (v_i) \in \mathbb N$ for $v_i \in V$ 
indicates how many $v_i$ are needed as reactants 
for the reaction $e_A$. 
Similarly, $t(e_A)(v_i)\in \mathbb N$ 
is the number of $v_i$ created in reaction $e_A$. 
An element of $\mathbb N^{V}$ will be referred to as a {\it chemical complex}. 
The system can be an {\it open reaction network} \footnote{
 The compositional aspect of open reaction networks 
 has been studied in the language of category theory~\cite{fong2015decorated,Baez_2017,baez2020structured}. 
 Non-equilibrium thermodynamic analysis of open reaction networks with mass-action kinetics and with reversible reactions is performed in Refs.~\cite{PhysRevX.6.041064,polettini2014irreversible}. 
}, 
when there is a reaction whose 
source or target function is zero for any species (see the example reactions below). 
When $t(e_A)(v_i)=0$  for any $v_i \in V$, 
the product of reaction $e_A$ is deposited to the outer world. 
Similarly, 
a reaction with $s(e_A)(v_i)=0$  for any $v_i \in V$
is sourced from outside. 
A reaction is usually represented in the following form, 
\begin{equation}
  e_A :
 \sum_i y_{iA} v_i \to 
 \sum_i \bar y_{iA} v_i , 
\end{equation}
where $v_i \in V$, and 
$y_{iA}$ and $\bar y_{iA}$ are nonnegative integers. 
Those integers are given by the source and target functions as 
\begin{equation}\label{eq:yybar}
y_{iA} = s (e_A) (v_i) , 
 \quad 
\bar y_{iA} = t (e_A) (v_i). 
\end{equation}
The stoichiometry of the reaction is 
specified by the {\it stoichiometric matrix} $S$, 
whose components are given by 
\begin{equation}
  S_{iA} \coloneqq\bar y_{iA} - y_{iA}. 
  \label{eq:s-def} 
\end{equation}

\begin{remark}
There are several equivalent ways to formulate 
a chemical reaction network 
such as a hypergraph \cite{klamt2009hypergraphs} 
or a Petri net \cite{heiner2008petri,reddy1993petri}. 
\end{remark}
\begin{remark}
A reaction that involves at most one chemical species 
as reactants and products, 
such as $v_1 \to v_2$, is called {\it monomolecular}. 
When all the reactions in the system are monomolecular, 
the corresponding reaction network is a usual directed graph. 
In this case, the stoichiometric matrix is 
the incidence matrix of the graph. 
If we regard $\Gamma$ as a directed hypergraph, 
the stoichiometric matrix is 
the incidence matrix of a directed hypergraph. 
\end{remark}

We consider formal summations 
of species and reactions with real coefficients,
and 
consider vector spaces whose bases are chemical species/reactions.  
We denote the resulting vector spaces as 
\begin{eqnarray}
  C_0 (\Gamma) &\coloneqq &
  \big\{
  \sum_i a_i v_i
  \, |  \, 
  v_i \in V, \, a_i \in \mathbb R 
  \big\} , 
  \\
  C_1 (\Gamma) &\coloneqq & 
  \big\{
\sum_A b_A e_A 
  \, |  \,
  e_A \in E, \, b_A \in \mathbb R 
  \big\} .  
\end{eqnarray}
Elements of those spaces are referred to 
as 0-chains and 1-chains. 
Higher ($n \ge 2$) chains do not exist in the current setting. 
The stoichiometric matrix 
provides us with natural {\it boundary operators} on the spaces of chains, 
\begin{equation}
  \p_n : C_n (\Gamma) \to C_{n-1}(\Gamma) . 
\end{equation}
The action of $\p_1$ is 
defined by its action on the basis 
$e_A \in C_1(\Gamma)$ and $v_i \in C_0(\Gamma)$,
\begin{equation}
  \p_1 e_A 
  =\sum_i 
  (S^T)_{Ai} \, v_i , 
  \quad 
  \p_0 v_i = 0.  
\end{equation}
We often use the notation of linear algebra, 
where an element $\sum_i a_i v_i \in C_0 (\Gamma)$ is 
represented by the vector $\bm a = (a_1, a_2, \cdots)^T$, 
and we also write $\bm a \in C_0 (\Gamma)$. 
For $\bm b \in C_1 (\Gamma)$, the action of the boundary operator is given by the multiplication of the stoichiometric matrix, 
\begin{equation}
    \p_1: \bm b \mapsto S \bm b \in C_0 (\Gamma). 
\end{equation}

On the spaces of chains, 
let us define inner products by 
\begin{equation}
\langle e_A, e_B \rangle_1 = \delta_{AB}, 
\quad 
\langle v_i, v_j \rangle_0 = \delta_{ij}. 
\end{equation}
With these inner products, we can define 
the adjoint of the boundary operator, 
$\p_1^\dag : C_0 (\Gamma) \to C_1(\Gamma)$
such that $\langle \partial_1^\dag v_i,e_A\rangle_1=\langle v_i,\partial_1e_A\rangle_0$.
The action 
on the basis $v_i \in C_0(\Gamma)$ 
is given by 
\begin{equation}
  \p_1^\dag v_i = \sum_A S_{iA} \, e_A . 
\end{equation}
In the linear-algebra notation, 
the action of $\p^\dag_1$ is the multiplication of the transpose of $S$ to $\bm a \in C_0 (\Gamma)$, 
\begin{equation}
    \p^\dag_1 : \bm a \mapsto S^T \bm a \in C_1(\Gamma). 
\end{equation}

\begin{example}
Let us consider a reaction network $\Gamma = (\{v_1,v_2,v_3,v_4,v_5 \},\{ e_1,e_2,e_3,e_4,e_5,e_6 \} )$ 
given by the following set of chemical reactions, 
\begin{align}
    e_1 &: \text{(input)} \to v_1, \notag \\
    e_2 &: \text{(input)} \to v_2, \notag \\
    e_3 &: v_1 + v_2 \to v_3 + v_4 , \notag\\
    e_4 &: v_3 \to v_5 ,  \\
    e_5 &: v_4 \to \text{(output)}, \notag \\
    e_6 &: v_5 \to \text{(output)} .\notag  
\end{align}
The stoichiometric matrix of the network is 
\begin{equation}
S =
\begin{pmatrix} 
1&0&-1&0 &0&0 \\
0&1&-1&0 &0&0 \\
0&0&1 &-1&0&0 \\
0&0&1 &0&-1&0 \\
0&0&0 &1&0&-1 
\end{pmatrix}. 
\end{equation}
It can be drawn as 
\begin{equation}
\begin{tikzpicture}[bend angle=45] 
  \node[species] (v1) at (0,1) {$v_1$}; 
  \node[species] (v2) at (0,0) {$v_2$};
  \node[species] (v3) at (3,1) {$v_3$};
  \node[species] (v4) at (3,0) {$v_4$}; 

  \node (d1) at (-1.25,1) {}; 
  \node (d2) at (-1.25,0) {};  

  \node[species] (v5) at (4.5,1) {$v_5$}; 
  \node (d3) at (4.25,0) {}; 
  \node (d4) at (5.75,1) {}; 

  \node[reaction] (e3) at (1.5,0.5) {$e_3$}; 

   \draw [-latex,line width=0.5mm]
   (d1) edge node[below]{$e_1$} (v1);
   \draw [-latex,line width=0.5mm]
   (d2) edge node[below]{$e_2$} (v2);

  \draw [-latex,line width=0.5mm] (v1) -- (e3); 
  \draw [-latex,line width=0.5mm] (v2) -- (e3); 
   \draw [-latex,line width=0.5mm] (e3) -- (v3); 
   \draw [-latex,line width=0.5mm] (e3) -- (v4); 

   \draw [-latex,line width=0.5mm]
   (v3) edge node[below]{$e_4$} (v5);
   \draw [-latex,line width=0.5mm]
   (v4) edge node[below]{$e_5$} (d3);
   \draw [-latex,line width=0.5mm]
   (v5) edge node[below]{$e_6$} (d4);

\end{tikzpicture}
\end{equation}
We represent a monomolecular reaction by a single arrow, 
and we use a rectangle to represent a multimolecular reaction. 
In this network, $e_3$ is a multimolecular reaction 
and others are all monomolecular. 
The action of the boundary operator is, for example, 
\begin{equation}
    \p_1 e_4 = v_5 - v_3, 
    \quad 
    \p_1 e_3 = v_3 + v_4 - v_1 - v_2, 
    \quad 
    \p_1 e_1 = v_1, 
\end{equation}
and so on. 
Those are intuitively understood from the figure. 
The network is open, since we have inputs from the outside 
($e_1$ and $e_2$) and outputs to the external world ($e_5$ and $e_6$). 
For example, $s (e_1) (v_i) = 0$ for any $v_i \in V$. 
The action of $\p^\dag_1$ is 
\begin{equation}
    \p_1^\dag v_1 = e_1 - e_3, 
    \quad 
    \p_1^\dag v_3 = e_3 - e_4, 
\end{equation}
for example. 
Namely, the operator $\p_1^\dag$ measures 
the net inflow of the reactions on a vertex. 
\end{example}

The {\it chemical concentrations} and {\it reaction rates} are 
$\mathbb R$-valued linear maps 
over 0-chains and 1-chains, respectively, 
\begin{equation}
  C^n (\Gamma): C_n (\Gamma) \to \mathbb R, 
\end{equation}
for $n=0,1$. 
Given an $x \in C^0 (\Gamma)$, 
$x(v_i) \in \mathbb R$ 
represents the concentration of the chemical species $v_i$. 
Similarly, for a given $r \in C^1 (\Gamma)$, 
$r(e_A) \in \mathbb R$ represents the rate of the reaction $e_A$. 
We will also use short-hand notations $x_i \coloneqq x(v_i)$
and $r_A \coloneqq r (e_A)$. 
We will also denote 
an element as a vector
as 
$\bm x \in C^0 (\Gamma)$
and 
$\bm r \in C^1 (\Gamma)$, 
where the components of 
$\bm x$ and $\bm r$ are given by $x_i$ and $r_A$, respectively.

We define a {\it coboundary operator} in a usual way 
using the boundary operator, 
\begin{equation}\label{eq:coboundary}
(\rd_0 x) (e_A) \coloneqq 
x(\p_1 e_A) 
= 
x
\left( 
\sum_i (S^T)_{Ai} \, v_i 
\right)
= 
\sum_i (S^T)_{Ai}\, x( v_i )  ,
\end{equation}
where we have used the linearity of the map $x$. 
Thus, we can identify the coboundary operator 
that acts on the chemical concentration 
$\bm x \in C^0 (\Gamma)$ 
as the multiplication of the matrix $S^T$. 

We define the inner product 
of $n$-cochains as \footnote{
More generally, 
one may define the inner product with a weight function as 
\[
  \langle f, g \rangle_n  
  \coloneqq \sum_{ c\in C_n (\Gamma) } w(c) f(c) g(c), 
\]
where $w$ is a $\mathbb R$-valued function over $C_n(\Gamma)$. 
}
\begin{equation}
  \langle x, y \rangle_0 
  \coloneqq \sum_{ i } x(v_i) y(v_i), 
  \quad 
  \langle r,s \rangle_1
  \coloneqq \sum_{ A } r(e_A) s(e_A),
\end{equation} 
where $x,y \in C^0(\Gamma)$ and $r,s \in C^1(\Gamma)$. 
With these inner products, 
the adjoint of the coboundary operator $\rd_n$,
$\rd_n^\dag : C^{n+1}(\Gamma) \to C^n (\Gamma)$, 
is defined by 
\begin{equation}
  \langle f_{n+1}, \rd_n g_n  \rangle_{n+1} 
  = 
  \langle \rd^\dag_n f_{n+1},  g_n \rangle_{n} . 
\end{equation}
Following the definition, 
we can identify $\rd^\dag_0$ as follows,  
\begin{equation}
  \begin{split}
  \langle 
  r , \rd_0 x
  \rangle_1 
  &= 
  \sum_{A} 
  r (e_A ) \, (\rd_0 x) (e_A )  \\
  &= 
  \sum_{i,A} 
  S_{iA} r (e_A)  x(v_i)  \\ 
  &\coloneqq 
  \sum_i  
  (\rd_0^\dag r)(v_i) \, x(v_i) \\
  &= 
  \langle \rd_0^\dag r,   x  \rangle_0 \, ,
    \end{split} 
\end{equation}
where $r \in C^1(\Gamma)$ and $x \in C^0(\Gamma)$. 
Thus, the action of $\rd_0^\dag$ is given by 
\begin{equation}\label{eq:ajoint coboundary}
(\rd^\dag_0 r)(v_i) \coloneqq \sum_A S_{iA} r(e_A). 
\end{equation}
By construction, the adjoint of coboundary operator satisfies $(\rd^\dag_0 r)(v_i)= r(\partial^\dag_1v_i)$.

\subsection{Homology, cohomology, and steady states}

With the (co)chains and (co)boundary operators defined above,
we can discuss (co)homology groups. 
We have the following chain complex, 
\begin{equation}
  \xymatrix{
  0 \ar[r] &
  C_1(\Gamma) 
  \ar[r]^{\p_1} 
  &
C_0(\Gamma) 
  \ar[r]  &  0
}  . 
\label{eq:chain-complex}
\end{equation}
Noting that the action of $\p_1$ is the multiplication
of the stoichiometric matrix $S$, 
we can identify the homology groups as
\begin{eqnarray}
  H_0 (\Gamma)
&=&
C_0(\Gamma)
/ \p_1 C_1 (\Gamma) 
=
C_0(\Gamma)
/ {\rm im\,} S
= \coker S ,
\\
  H_1 (\Gamma)
&=& \ker S . 
\end{eqnarray}

\begin{remark}\label{rem:identification}
Note that $C_0(\Gamma)$ is endowed with a standard inner product, with respect to which we can take the orthogonal linear subspace $({\rm im}\,S)^\perp$. 
Moreover, the restriction of the quotient map $C_0(\Gamma) \to \coker S$ to $({\rm im}\,S)^\perp$ induces an isomorphism $({\rm im}\,S)^\perp \xrightarrow{\cong} \coker S$.
Therefore, we can always regard $\coker S$ as a linear subspace of $C_0(\Gamma)$.
Note also that the orthogonal subspace $({\rm im}\,S)^\perp$ 
is the same as 
the kernel of the transpose of $S$, $\ker S^T$. 
Combined with the above observation, this implies that we can always identify $\coker S$ with $\ker S^T \subset C_0(\Gamma)$.
\end{remark}

Similarly, with the coboundary operator $\rd_0$, 
we can define a complex of cochains as 
\begin{equation}
  \xymatrix{
  0 \ar[r] &
  C^0(\Gamma) 
  \ar[r]^{\rd_0} 
  &
C^1(\Gamma) 
  \ar[r]  &  0
}  . 
\label{eq:gamma-comples}
\end{equation}
The associated cohomology groups are 
\begin{eqnarray}
  H^0 (\Gamma)
&=&
\{
\bm d \in C^0(\Gamma) \,|\, S^T \bm d =0
\}
=
({\rm im\,} S)^\perp
\cong 
C_0 (\Gamma)/{\rm im\,}S
=\coker S ,
\\
  H^1 (\Gamma)
&=& 
C^1 (\Gamma)
/ 
\rd_0 C^0(\Gamma)
=
C^1 (\Gamma)
/ {\rm im\, }S^T 
\cong 
({\rm im\, }S^T )^\perp
=
\ker S ,
\end{eqnarray}
where $(-)^\perp$ denotes taking orthogonal spaces with respect to the standard inner product on $C_0(\Gamma)$ and $C^1(\Gamma)$.

An Euler number for this complex can be defined as 
\begin{equation}
  \chi (\Gamma) 
  \coloneqq
  |H_0(\Gamma)| - |H_1(\Gamma)| 
  = 
  | \coker S| - |\ker S|,  
\end{equation}
where $|W|$ indicates the dimension of the vector space $W$. 

Several remarks on the homology and cohomology groups are in order: 
\begin{remark}
Since we consider the $\mathbb R$ coefficients, 
the homology and cohomology groups are the same, 
$H_n (\Gamma) \cong H^n (\Gamma)$ for $n=0,1$. 
\end{remark}
\begin{remark}
In the chemistry literature, 
the elements of $H_1 (\Gamma)$ are referred to as cycles, 
and this is consistent with the mathematical terminology. 
\end{remark} 
\begin{remark}
When the network is monomolecular and the corresponding network is a directed graph, the dimension $|H_0(\Gamma)|$ is the number of connected components. 
\end{remark}
\begin{remark}
  Similarly to the homology groups of topological spaces, 
Laplace operators can be defined 
and we can perform Hodge decomposition of $C^1 (\Gamma)$. 
See Appendix \ref{sec:app-hodge}. 
\end{remark}

The cohomology groups defined above are closely related 
to the steady states of a reaction network as we see below. 
Let us consider the time evolution of 
spatially homogeneous chemical concentrations. 
The change of the chemical concentration 
is driven by the reactions. 
The time derivative of the concentration 
of species $v_i$ is given by 
the divergence of the reaction rate, 
\begin{equation}
  \frac{d}{dt} x (v_i) = (\rd^\dag_0 r) (v_i) , 
\end{equation}
which is more explicitly written as 
\begin{equation}
  \frac{d}{dt} x_i (t) = \sum_A S_{iA} \, r_A . 
  \label{eq:rate}
\end{equation}
To solve the rate equations, 
we have to specify {\it kinetics} of chemical reactions, 
such as the mass-action kinetics
and the Michaelis-Menten kinetics. 
A reaction's kinetics gives the reaction rate $r_A$ as a function of 
its substrate concentrations (i.e., the concentrations of species with $y_{iA}>0$) and parameters, 
$r_A = r_A (\bm x; k_A)$, where  $k_A$ represents  any one of the parameters for the $A$-th reaction; for example, in  the Michaelis-Menten kinetics, $k_A$ represents the Michaelis constant or the maximum rate. 

The elements of $H^0 (\Gamma)$ and $H^1 (\Gamma)$ \footnote{
Although the natural choice is to consider $x$ and $r$ as the elements of cohomology groups, 
we can equivalently consider them as elements of homology groups, 
since they are isomorphic in the current setting. 
}
characterize the steady states of chemical reaction networks. 
The rate equation (\ref{eq:rate}) at the steady state reads 
\begin{equation}
(\rd^\dag_0 r ) (v_i) = 0 , 
\text{   or equivalently   } 
\sum_A S_{iA} \, r_A = 0, 
\end{equation} 
which means that 
the steady-state reaction rate is an element of the 
kernel of $S$, $\bm r \in \ker S \cong H^1 (\Gamma)$. 
The cokernel of $S$ is related to conserved quantities of the system. 
Given $\bm d \in \coker S \cong H^0 (\Gamma)$, 
that satisfies $\sum_i d_i S_{iA}  = 0$, 
we have 
\begin{equation}
  \frac{d}{dt} 
 \langle d, x \rangle_0 
=
  \frac{d}{dt} (\sum_i d_i x_i) 
  = 
  \sum_{i,A} d_i S_{iA} r_A 
  = 0. 
\end{equation}
Thus, 
the linear combination $\sum_i d_i x_i$ is independent of time 
and hence is conserved. 
For this reason, we refer to the elements of $\coker S$ as {\it conserved charges} \footnote{
``Conserved moiety'' may be more chemistry-oriented terminology. 
}.
To find the steady-state solutions, 
we have to specify the value of all the conserved charges. 
A steady state is specified 
by an element of $H^0 (\Gamma)$ and $H^1(\Gamma)$, 
\begin{equation}
  \sum_{\bar \alpha} \ell_{\bar \alpha }\, \bm d^{\bar \alpha} 
  \in H^0 (\Gamma), 
  \quad 
  \sum_\alpha \mu_\alpha (\bm k, \bm \ell) \, \bm c^\alpha \in H^1 (\Gamma), 
\end{equation}
where
$\{\bm d^{\bar\alpha}\}$
and $\{\bm c^{\alpha}\}$ 
are basis vectors of $H^0(\Gamma)$ and $H^1(\Gamma)$, respectively. 
The coefficients $\mu_\alpha (\bm k, \bm \ell)$ depend 
on the parameters $\bm k$ and $\bm \ell$. 

\begin{example}\label{ex:v4e5}
We consider a network $\Gamma = (\{v_1,v_2,v_3,v_4 \},\{e_1,e_2,e_3,e_4,e_5 \})$ 
with the following reactions, 
\begin{align}
    e_1 &: \text{(input)} \to v_1, \notag \\
    e_2 &: v_1 \to v_2, \notag \\
    e_3 &: v_2 \to \text{(output)} ,\\
    e_4 &: v_1+v_2 \to v_3+v_4 , \notag \\
    e_5 &: v_3+v_4 \to v_1+v_2. \notag
\end{align}
The network structure can be drawn as 
\begin{equation}
    \begin{tikzpicture}[bend angle=45] 
    \node[species] (v1) at (0,0) {$v_1$}; 
    \node[species] (v2) at (2,0) {$v_2$};
 
    \node (d1) at (-1.25,0) {};
    \node (d2) at (3.25,0) {};
    
    \node[reaction] (e4) at (0, -1.65) {$e_4$}; 

    \node[reaction] (e5) at (2,-1.65) {$e_5$}; 

    \node[species] (v3) at (0,-3.25) {$v_3$}; 
    \node[species] (v4) at (2,-3.25) {$v_4$};

    \draw[-latex, line width=0.5mm] (d1) edge node[below] {$e_1$} (v1); 
    \draw[-latex, line width=0.5mm] (v1) edge node[below] {$e_2$} (v2); 
    \draw[-latex, line width=0.5mm] (v2) edge node[below] {$e_3$} (d2); 
  
    \draw[-latex, line width=0.5mm] (v1) -- (e4); 
    \draw[-latex, line width=0.5mm] (v2) -- (e4); 
    \draw[-latex, line width=0.5mm] (e4) -- (v3); 
    \draw[-latex, line width=0.5mm] (e4) -- (v4); 
 
    \draw[-latex, line width=0.5mm] (v3) -- (e5); 
    \draw[-latex, line width=0.5mm] (v4) -- (e5); 
    \draw[-latex, line width=0.5mm] (e5) -- (v1); 
    \draw[-latex, line width=0.5mm] (e5) -- (v2); 
 
\end{tikzpicture}
\end{equation}
We here take the mass-action kinetics, 
and the equations of motion are written as 
\begin{equation}
    \frac{d}{dt}
    \begin{pmatrix}
    x_1 \\
    x_2 \\
    x_3 \\
    x_4 
    \end{pmatrix}
    = 
    \begin{pmatrix}
      1 & -1 & 0 & -1 & 1 \\
      0 & 1 & -1 & -1 & 1 \\
      0 & 0 & 0 & 1 & -1 \\
     0 & 0 & 0 & 1 & -1 
    \end{pmatrix}
    \begin{pmatrix}
      r_1 \\
      r_2 \\
      r_3 \\
      r_4 \\
      r_5 
    \end{pmatrix}, 
    \quad 
    \begin{pmatrix}
      r_1 \\
      r_2 \\
      r_3 \\
      r_4 \\
      r_5 
    \end{pmatrix}
    = 
    \begin{pmatrix}
      k_1 \\
      k_2 x_1 \\
      k_3 x_2 \\
      k_4 x_1 x_2 \\
      k_5 x_3 x_4 
    \end{pmatrix}, 
\end{equation}
where $x_i = x(v_i)$ and $r_A = r (e_A)$ are the concentration 
and reaction rate for the species $v_i$ and reaction $e_A$, respectively. 
The kernel and cokernel of the stoichiometric matrix are given by 
\begin{align}
    \ker S &= {\rm span\,} \{ 
    \begin{pmatrix}
      1 & 1 & 1 & 0 & 0 
    \end{pmatrix}^T,
    \begin{pmatrix}
      0 & 0 & 0 & 1 & 1 
    \end{pmatrix}^T 
    \},  \\
\coker S &= {\rm span\,} \{ 
    \begin{pmatrix}
     0 & 0 & 1 & -1 
    \end{pmatrix}^T     
    \} ,
\end{align}
where ${\rm span\,}\{ \bm v_1, \bm v_2, \cdots \}$ 
indicates the vector space spanned by 
vectors $\bm v_1, \bm v_2, \cdots$. 
The cokernel is one-dimensional and 
the system has one conserved charge. 
To find the steady states, we need to specify the value 
of the charge as 
\begin{equation}
    \ell = x_3 - x_4 . 
\end{equation}
The steady-state reaction rates and concentrations are 
\begin{align}
    \bar{\bm r}
    &= k_1 
    \begin{pmatrix}
      1 & 1 & 1 & 0 & 0 
    \end{pmatrix}^T
    +
    \frac{k_4 k_1^2}{k_2 k_3} 
    \begin{pmatrix}
      0 & 0 & 0 & 1 & 1 
    \end{pmatrix}^T ,\\ 
    \bar {\bm x} 
    &= 
    \begin{pmatrix}
      \frac{k_1}{k_2} & \frac{k_1}{ k_3}
      & \frac{1}{2} 
      \left(
       \ell + \sqrt{\ell^2 + 4 K}
      \right)
      & 
      & \frac{1}{2} 
      \left(
      - \ell + \sqrt{\ell^2 + 4 K}
      \right)
    \end{pmatrix}^T, 
\end{align}
where we set $K \coloneqq k_4 k_1^2 / k_2 k_3$.
The vector $\bar {\bm r}$ is spanned by the basis vectors of $\ker S$ 
and their coefficients are $\mu_\alpha$. 

\end{example}

\subsection{Subnetworks}\label{sec:sub-n}

Let us consider a subset of chemicals and reactions, 
$\gamma \subset \Gamma$, 
which we specify by $\gamma = (V_\gamma, E_\gamma)$
with $V_\gamma \subset V$ and $E_\gamma \subset E$. 
Correspondingly, we have a submatrix 
$S_\gamma$ of the stoichiometric matrix $S$, 
whose components are given by 
\begin{equation}
  (S_\gamma)_{iA} = S_{iA} , 
\end{equation}
where the indices are restricted to those of the subnetwork, 
$v_i \in V_\gamma, e_A \in E_\gamma$. 
We denote 
the space of relative chains by 
\begin{equation}
 C_n (\gamma)
  \coloneqq C_n(\Gamma) / C_n (\Gamma \setminus \gamma) ,
\end{equation}
where $\Gamma \setminus \gamma \coloneqq (V \setminus V_\gamma, E \setminus E_\gamma)$ 
is the complement of the subnetwork $\gamma$. 
The homology and cohomology groups for the subnetwork 
can be defined similarly. 
The chain complex for a subnetwork $\gamma$ is 
\begin{equation}
 \xymatrix{
  0 \ar[r] &
  C_1(\gamma) 
  \ar[r]^{ \p_1}
  &
C_0(\gamma) 
  \ar[r]  &  0
} , 
\label{eq:rel-hom} 
\end{equation}
where 
the action of the boundary operator $\p_1$ 
on the basis of $C_1(\gamma)$ is defined with 
the partial stoichiometric matrix $S_\gamma$, 
\begin{equation}
    \p_1 e_A =\sum_i (S^T_\gamma)_{Ai}\, v_i . 
\end{equation}
The associated homologies with the complex (\ref{eq:rel-hom}) are 
\begin{eqnarray}
  H_0 (\gamma)
  &=&
  C_0(\gamma) / \p_1 C_1 (\gamma) 
  =
  C_0(\gamma) / {\rm im\,} S_{\gamma} 
  = 
  \coker S_\gamma, 
\\
  H_1 (\gamma)
  &=&
    \ker S_\gamma . 
\end{eqnarray}
The Euler number for a subnetwork is given by 
\begin{equation}
  \chi(\gamma) 
 \coloneqq
 | H_0(\gamma)| -| H_1(\gamma)| . 
\end{equation}
Note that 
\begin{equation}
  \chi(\gamma) 
  = 
 | H_0(\gamma)| -| H_1(\gamma)|
  = 
  | C_0(\gamma)| - |C_1(\gamma)|
  = |V_\gamma| - |E_\gamma|  . 
  \label{eq:euler-sub}
\end{equation}

The value of the concentrations and 
reaction fluxes
inside a subset $\gamma$ 
are given by 
$\mathbb R$-valued functions over 
the space of chemicals and reactions, 
\begin{equation}
  C^n (\gamma): C_n(\gamma) \to \mathbb R . 
\end{equation}
The cohomology for subnetworks 
can be defined similarly to the homology.

\subsection{Mayer-Vietoris exact sequence}

In this subsection, we give a long exact sequence of homology groups that connects local and global information. 
Suppose that there are two 
subnetworks 
$\gamma_1, \gamma_2 \subset \Gamma$, 
which consist of 
$\gamma_1 = (V_{\gamma_1}, E_{\gamma_1})$
and 
$\gamma_2 = (V_{\gamma_2}, E_{\gamma_2})$. 
We can consider the intersection and union of the subnetworks, 
\begin{equation}
    \gamma_1 \cap \gamma_2 
    \coloneqq 
    (V_{\gamma_1} \cap V_{\gamma_2},E_{\gamma_1} \cap E_{\gamma_2} ), 
    \quad 
    \gamma_1 \cup \gamma_2 
    \coloneqq 
    (V_{\gamma_1} \cup V_{\gamma_2},E_{\gamma_1} \cup E_{\gamma_2} ). 
\end{equation}
The exact sequence \eqref{eq:MV} below explains the relationship among cohomology groups of $\gamma_1 \cup \gamma_2$, $\gamma_1$, $\gamma_2$ and $\gamma_1 \cap \gamma_2$.
Regarding the family $\{\gamma_1, \gamma_2\}$ as a `covering' of $\gamma_1 \cup \gamma_2$, we can think of Eq.~\eqref{eq:MV} as an analogue of the Mayer-Vietoris sequence associated with an open covering of a topological space. 
Following the usual technique in topology, we will derive the long exact sequence from a short exact sequence of chain complexes.
We have the following short exact sequence of chain complexes, 
\begin{equation}
  \xymatrix{
    &
    0 
    \ar[d]
    &
    0 \ar[d]
    & 
    0 \ar[d]
    \\
   0 \ar[r] &
   C_1(\gamma_1 \cap \gamma_2 )
   \ar[r]^{f_1}
   \ar[d]^{\p_1} 
   &
   C_1(\gamma_1) \oplus C_1(\gamma_2)
   \ar[r]^{g_1}
   \ar[d]^{\p_1}
   & 
   C_1(\gamma_1 \cup \gamma_2 ) 
   \ar[r]
   \ar[d]^{\p_1}
   & 0
   \\
0
 \ar[r] 
&  
   C_0(\gamma_1 \cap \gamma_2 )
   \ar[r]^{f_0}
   \ar[d]   
   & 
   C_0(\gamma_1) \oplus C_0(\gamma_2) 
   \ar[r]^{g_0}
   \ar[d] 
   &
   C_0(\gamma_1 \cup \gamma_2 )
   \ar[r]
   \ar[d]
   &
   0
   \\
   & 
   0 
   & 0 
   & 0 
 }
 \label{eq:mv-ses}
\end{equation}
where the horizontal maps are given by 
\begin{equation}
  f_n : c \mapsto (c, -c), 
  \quad 
  g_n : (c_1,c_2) \mapsto c_1 + c_2 . 
\end{equation}
By applying the snake lemma to Eq.~\eqref{eq:mv-ses}, we obtain
\begin{equation}\label{eq:MV}
  \xymatrix{
   0 \ar[r] &
   H_1(\gamma_1 \cap \gamma_2 )
   \ar[r]
   &
   H_1(\gamma_1) \oplus H_1(\gamma_2)
   \ar[r] 
   &
   H_1(\gamma_1 \cup \gamma_2 ) \\
   \ar[r] &  
   H_0(\gamma_1 \cap \gamma_2 )
   \ar[r]
   & 
   H_0(\gamma_1) \oplus H_1(\gamma_2)
   \ar[r] 
   &
   H_0(\gamma_1 \cup \gamma_2 )
   \ar[r]
   &
   0.
 } 
\end{equation}
In general, if there is an exact sequence of finite-dimensional vector spaces, the alternating sum of the dimensions of them is equal to zero.
Therefore, the exact sequence \eqref{eq:MV} implies 
\begin{equation}
    \chi(\gamma_1 \cup \gamma_2)  
    = 
    \chi(\gamma_1 )
    +  
    \chi(\gamma_2 )
    -
    \chi(\gamma_1 \cap \gamma_2) . 
\end{equation}

\section{Law of localization} \label{sec:lol}

A sensitivity analysis studies the response of the system to the perturbations of reaction parameters or initial conditions (conserved charges). 
In the context of metabolic networks, a theoretical framework called  the metabolic control analysis has been developed  \cite{kacser1973control,fell1992metabolic,szathmary1993deleterious,maclean2010predicting,bagheri2004evolution}. 
Under the mass-action framework, biologically insightful results have been obtained regarding the sensitivity to conserved charges \cite{shinar2010structural,eloundou2016network} as well as stability properties of stable states  \cite{feinberg1995existence,feinberg1995multiple,craciun2005multiple,craciun2006multiple}, although the mass-action law is not necessarily appropriate for some biological systems.
Among the studies on sensitivity analysis, the structural sensitivity analysis \cite{MOCHIZUKI2015189,doi:10.1002/mma.3436,doi:10.1002/mma.4668} aims at constraining the response of reaction systems from the network structure alone. 

In this section, we first review 
the structural sensitivity analysis and the law of localization  \cite{PhysRevLett.117.048101,PhysRevE.96.022322, PhysRevE.98.012417}. 
For a given subnetwork, 
we assign a nonnegative integer, which we call the influence index. 
The influence index is determined from the topology of the subnetwork, 
and plays a decisive role in structural sensitivity.
When the influence index is zero, 
the perturbation of the parameters and conserved charges 
inside the subnetwork does not affect the rest of the network. 
Such a structure is called a buffering structure. 
In Sec.~\ref{sec:submodularity}, we prove that the influence index 
is submodular as a function over subnetworks. 
As a corollary of this property, 
we show that buffering structures are closed under intersection and union.

\subsection{Structural sensitivity analysis}

At the steady state, 
the reaction rates and the chemical concentrations satisfy 
\begin{align}
\sum_A  S_{iA}   r_A ( \bm x (\bm k, \bm \ell), k_A ) &= 0, \label{eq:sr} \\
 \sum_i d^{\bar \alpha}_i x_i (\bm k, \bm \ell)
 &= \ell^{\bar\alpha},  \label{eq:l-dx} 
\end{align} 
where 
$\{\bm d^{\bar\alpha}\}$ is a basis of $\coker S$
and 
the second equation specifies the values of conserved charges. 
Considerable effort has been devoted to 
the study of the existence or uniqueness of steady states 
under the mass-action kinetics \cite{feinberg2019foundations}. 
In the current analysis, we assume the existence of a steady state, 
and we focus on how it is perturbed under the change of parameters. 
The steady-state values of 
the concentrations and reaction rates are determined
by the values of rate parameters and conserved charges, 
$\{k_A, \ell^{\bar\alpha}\}$.
The reaction rates $r_A (\bm x (\bm k, \bm \ell), k_A)$ 
have explicit dependence on $k_A$, 
and also dependence on $\bm k$ and $\bm \ell$ through $x_i (\bm k, \bm \ell)$. 
Equation (\ref{eq:sr}) means that the reaction rates are in the kernel of $S$ and  can be expanded 
using a basis $\{\bm c^\alpha\}$ of $\ker S$ as 
\begin{equation}
  r_A ( \bm x (\bm k, \bm \ell), k_A ) 
  =- \sum_\alpha \mu_\alpha (\bm k, \bm \ell) c^\alpha_A . 
  \label{eq:r-mu-c}
\end{equation}
We are interested in the sensitivity of the reaction 
rates and concentrations under the perturbation of the parameters, 
\begin{equation}
  k_A \mapsto k_A + \delta k_A, 
  \quad 
  \ell^{\bar\alpha} \mapsto 
  \ell^{\bar\alpha} + \delta   \ell^{\bar\alpha}. 
\end{equation}
By taking the derivative 
of Eqs.~(\ref{eq:l-dx}) and (\ref{eq:r-mu-c})
with respect to $k_B$ and $\ell^{\bar\beta}$, 
we obtain the following equations, 
\begin{align}
 \sum_i \frac{\p r_A}{\p x_i} 
  \frac{\p x_i}{\p k_B} 
  + 
  \frac{\p r_A}{\p k_B}
  &= - \sum_\alpha \frac{\p \mu_\alpha }{\p k_B} c^\alpha_A  ,\\
  \sum_i \frac{ \p r_A }  {\p x_i } 
 \frac{\p x_i}{\p \ell^{\bar\alpha }} 
 &= - 
 \sum_\alpha  \frac{ \p \mu_\alpha }  {\p \ell^{\bar\alpha }} c^\alpha_A , \\
\sum_i d_i^{\bar\alpha} \frac{\p x_i}{\p k_A} &= 0,  
 \\
\sum_i  d_i^{\bar\alpha} \frac{\p x_i}{\p \ell^{\bar\beta}}
 &=  \delta^{\bar\alpha \bar\beta} . 
\end{align}
Note that $r_A ( \bm x(\bm k, \bm \ell), k_A)$ 
depends explicitly on $k_A$ 
and also depends implicitly on 
$\bm k$ and $\bm \ell$ through $\bm x$. 
The equations can be compactly written in the matrix form, 
\begin{equation}
  A \, 
  \begin{pmatrix}
 \p_B x_i \\
  \p_B \mu_\alpha     
  \end{pmatrix}
  = - 
  \begin{pmatrix}
  \p_B  r_A \\
    \bm 0
\end{pmatrix}
, 
\quad 
A \, 
\begin{pmatrix}
\p_{\bar\beta} x_i \\
\p_{\bar\beta} \mu_\alpha     
\end{pmatrix}
= 
\begin{pmatrix}
 \bm 0 \\
 \delta^{\bar\alpha \bar\beta }
\end{pmatrix}, 
\label{eq:sensitivity-1}
\end{equation}
where 
$\p_B \coloneqq \partial /\partial k^B$ 
,
$\p_{\bar\beta} \coloneqq \partial /\partial \ell^{\bar\beta}$, 
and 
we have introduced a partitioned square matrix, 
\begin{equation}
  A 
  \coloneqq 
  \begin{pmatrix}
    \p_i r_A 
    & c^\alpha_A \\
    d^{\bar\alpha}_i & \bm 0 
  \end{pmatrix},
\end{equation}
 where the upper-left block is an $|A|\times |i|$ matrix whose $(A,i)$-th element is given by $\frac{\p r_A}{\p x_i}$ evaluated at the steady state, the upper-right one  an $|A| \times |\alpha|$ matrix consisting of the basis $\{\bm c^\alpha\}$ of $\ker S$, the lower-left one $d^{\bar\alpha}_i$ an $|\bar\alpha|\times |i|$ matrix consisting of the basis $\{\bm d^{\bar \alpha}\}$ of $\coker S$, and the lower-right one the $|\bar \alpha|\times |\alpha|$ zero matrix. Here, we use the notation that index $i$ for chemicals runs from $1$ to $|i|$.  The matrix $A$ is square due to the identity,
\begin{equation}
  |i| + |\alpha| = |A| + |\bar\alpha|. 
\end{equation}

One can see from Eq.~(\ref{eq:sensitivity-1}) that 
the response to the change of the parameter is determined 
by the inverse of the matrix $A$,
\begin{equation}
  \begin{pmatrix}
 \p_B x_i \\
  \p_B \mu_\alpha     
  \end{pmatrix}
  = - 
  A^{-1}
  \begin{pmatrix}
  \p_B  r_A \\
    \bm 0
\end{pmatrix}
, 
\quad 
\begin{pmatrix}
\p_{\bar\beta} x_i \\
\p_{\bar\beta} \mu_\alpha     
\end{pmatrix}
=  A^{-1}
\begin{pmatrix}
 \bm 0 \\
 \delta^{\bar\alpha \bar\beta }
\end{pmatrix}.  
\label{eq:sensitivity-2}
\end{equation}
We refer to $A$ as the {\it A-matrix} (``A'' indicates that it is an \underline{a}ugmented matrix). 
Its inverse, $- A^{-1}$ determines the sensitivity of the system and is called the {\it sensitivity matrix}. 
If we partition $A^{-1}$ as 
\begin{equation}
  A^{-1} = 
  \begin{pmatrix}
    (A^{-1})_{iA} & (A^{-1})_{i \bar\alpha  } \\
    (A^{-1})_{ \alpha A } & (A^{-1})_{ \alpha \bar\alpha }  
  \end{pmatrix}, 
\end{equation}
and noting that $\p_B r_A$
is a diagonal matrix, 
$\p_B r_A \propto \delta_{BA}$, the responses of steady-state  concentrations and reaction rates [or equivalently, the coefficients $\mu_\alpha$ in Eq.~\eqref{eq:r-mu-c}] to the perturbations of $k_A$ and $\ell^{\bar\alpha}$ are given by 
\begin{equation}
  \p_A x_i \propto  (A^{-1})_{iA} , 
  \quad 
  \p_{\bar\alpha} x_i \propto  (A^{-1})_{i \bar\alpha } , 
  \quad 
  \p_{\bar\alpha} \mu_{\alpha} \propto (A^{-1})_{\alpha \bar\alpha },
  \quad 
  \p_{A} \mu_{\alpha} \propto (A^{-1})_{\alpha A} . 
  \label{eq:response}
\end{equation}
In this paper, we consider the following class of chemical reaction systems: 
\begin{definition}[Regularity of a chemical reaction network with kinetics] 
A chemical reaction network with kinetics 
is called {\it regular}, 
if it admits a stable steady state 
and the associated $A$-matrix is invertible. 
\end{definition}
Note that whether a reaction system is regular or not 
depends on the choice of kinetics. 
Throughout the paper, 
we assume the regularity unless otherwise stated 
so that $A$ is invertible and the response of the system is well-defined. 
The regularity implies the asymptotic stability of the steady state, 
through the relation between $\det A$ and the determinant of the Jacobian~\cite{PhysRevE.98.012417}.

\subsection{Law of localization}\label{sec:lol-lol}

\begin{definition}[Output-completeness] \label{def:oc} 
When a subnetwork $\gamma = (V_\gamma, E_\gamma)$ 
satisfies the condition that 
$E_\gamma$ includes all the chemical reactions 
affected by $V_\gamma$, 
$\gamma$ is called {\it output-complete}. 
\end{definition}

\begin{definition}[Influence index] 
For an output-complete subnetwork $\gamma$, 
the {\it influence index} is defined by 
\begin{equation}
  \lambda(\gamma)
  \coloneqq 
-  |V_\gamma |
 + 
  |E_\gamma |
  - 
  |(\ker S)_{{\rm supp\,} \gamma }|
  + 
  | P^0_\gamma (\coker S) | . 
  \label{eq:index-def} 
\end{equation}
\end{definition}
The definitions of the spaces 
that appear in the influence index 
are given as follows: 
\begin{eqnarray}
 (\ker S)_{{\rm supp}\,\gamma}
 &\coloneqq &
 \left\{
  \bm c 
  \, \middle| \,
  \bm c \in \ker S, 
  P^1_\gamma \bm c = \bm c 
  \right\}, 
   \\
   P^0_\gamma (\coker S) 
   &\coloneqq &
   \left\{
   P^0_\gamma \bm d 
   \, \middle| \,
   \bm d \in \coker S 
   \right\}, 
\end{eqnarray}  
where $S$ is the stoichiometric matrix, 
$P^0_\gamma$ and $P^1_\gamma$
are the projection matrices to $\gamma$
in the space of chemical species and reactions, respectively. 
Namely, $(\ker S)_{{\rm supp}\,\gamma}$
is the space of vectors of $\ker S$ 
supported inside $\gamma$, 
and $P^0_\gamma (\coker S)$
is the projection of $\coker S$ to $\gamma$. 
Here, recall from Remark \ref{rem:identification} that we regard $\coker S$ as a subspace of $C_0(\Gamma)$ via the identification $\coker S \cong ({\rm im}\,S)^\perp$.
We will use similar identifications throughout this paper.

\begin{remark}\label{rem:lambda-ge-0}
The influence index is nonnegative, $\lambda(\gamma) \ge 0$, 
for a regular chemical reaction network. 
It will be shown in the proof of Theorem~\ref{th:lol}. 
\end{remark}

\begin{theorem}[Law of localization]  \label{th:lol}
Let $\gamma$ be an output-complete subnetwork of 
a regular chemical reaction network $\Gamma$. 
When $\gamma$ is a buffering structure, $\lambda(\gamma)=0$, 
chemical concentrations and reaction rates 
outside $\gamma$ do not change 
under the perturbation of rate parameters or conserved charges 
inside $\gamma$. 
\end{theorem}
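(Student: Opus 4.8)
The plan is to work entirely at the level of the $A$-matrix and its inverse, exploiting the block structure induced by the subnetwork $\gamma$. First I would order the chemical species so that those in $V_\gamma$ come first, and order the reactions so that those in $E_\gamma$ come first; output-completeness of $\gamma$ is precisely what guarantees that the block of $\partial r_A/\partial x_i$ coupling species inside $\gamma$ to reactions outside $\gamma$ vanishes, because a reaction outside $E_\gamma$ cannot be affected by a species in $V_\gamma$. Similarly, the columns $c^\alpha_A$ coming from $(\ker S)_{\mathrm{supp}\,\gamma}$ are supported on $E_\gamma$, and the rows built from $\coker S$ projected away from $\gamma$ involve species outside $V_\gamma$. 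The upshot is that, after this reordering, the matrix $A$ acquires a block-triangular form in which the ``$\gamma$'' degrees of freedom and the ``outside'' degrees of freedom are coupled in only one direction.

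Next I would translate the claim into a statement about specific blocks of $A^{-1}$. By Eq.~\eqref{eq:response}, the responses of concentrations and reaction rates outside $\gamma$ to perturbations of parameters $k_A$ and charges $\ell^{\bar\alpha}$ inside $\gamma$ are governed by the off-diagonal blocks of $A^{-1}$ that map $\gamma$-indexed sources to outside-indexed responses. The statement to prove is therefore that these particular blocks of $A^{-1}$ vanish. For a block-triangular matrix, the inverse is block-triangular with the same zero pattern, so the vanishing of the relevant block of $A^{-1}$ follows once I establish that $A$ itself is block-triangular with the corresponding zero block, \emph{provided} the diagonal blocks are themselves invertible.

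This is where the influence index enters, and I expect it to be the main obstacle. The condition $\lambda(\gamma)=0$ must be shown to be exactly equivalent to the invertibility of the diagonal block of $A$ associated with $\gamma$, i.e.\ the square submatrix $A_\gamma$ built from $\partial r_A/\partial x_i$ restricted to $V_\gamma,E_\gamma$, together with the $\gamma$-supported cycles $(\ker S)_{\mathrm{supp}\,\gamma}$ and the $\gamma$-projected charges $P^0_\gamma(\coker S)$. The definition \eqref{eq:index-def} is engineered so that $\lambda(\gamma)$ is precisely the dimension of the cokernel (equivalently, the corank) of this $\gamma$-block: the four terms $-|V_\gamma|+|E_\gamma|-|(\ker S)_{\mathrm{supp}\,\gamma}|+|P^0_\gamma(\coker S)|$ count, with signs, the rows and columns of $A_\gamma$. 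The hard part is the careful bookkeeping showing that the $\gamma$-block is square with $\lambda(\gamma)$ measuring its degeneracy, so that $\lambda(\gamma)=0$ is equivalent to $\det A_\gamma \ne 0$; this also yields the nonnegativity asserted in Remark~\ref{rem:lambda-ge-0}, since regularity forces $\det A\ne0$ and the corank of any square block of an invertible matrix is bounded below by zero and controlled by the surrounding structure.

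Finally, once $A_\gamma$ is invertible I would invoke the block-triangular inversion formula (Schur-complement style) to read off that the off-diagonal blocks of $A^{-1}$ responsible for the outside response to inside perturbations are zero, which is exactly the assertion of the theorem. Throughout, I would use the identifications of Remark~\ref{rem:identification} to regard $\coker S$ as $({\rm im}\,S)^\perp \subset C_0(\Gamma)$, so that projections $P^0_\gamma$ and $P^1_\gamma$ act cleanly and the dimension counts in \eqref{eq:index-def} are unambiguous.
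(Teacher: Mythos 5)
Your proposal follows essentially the same route as the paper's own proof: reorder $A$ so the $\gamma$-indexed rows/columns form the upper-left block, use output-completeness together with the supports of $(\ker S)_{\mathrm{supp}\,\gamma}$ and $P^0_\gamma(\coker S)$ to make the coupling block vanish, identify $\lambda(\gamma)=0$ with squareness of the $\gamma$-block (regularity, $\det A\neq 0$, then automatically forces that square block to be invertible and also yields $\lambda(\gamma)\ge 0$), and read off the vanishing outside responses from the block-triangular structure of $A^{-1}$ via Eq.~\eqref{eq:response}, with the rate responses following by the chain rule and output-completeness. The one slight imprecision is describing $\lambda(\gamma)$ as the corank of the $\gamma$-block: by Eq.~\eqref{eq:index-def} it is the row-minus-column count of that block, which coincides with the corank only once regularity forces full column rank, but this does not affect the argument.
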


\begin{definition}[Buffering structures] 
For a given chemical reaction network $\Gamma$, 
an output-complete subnetwork $\gamma$  
with the vanishing influence index, $\lambda(\gamma)=0$, 
is called a {\it buffering structure}. 
\end{definition}

\begin{example}
The influence index of the empty subnetwork is zero. 
The index of the whole network $\Gamma$ is also zero, 
\begin{equation}
  \begin{split}
    \lambda(\Gamma)
    &= 
  -  |C_0 (\Gamma)|
    + 
    |C_1 (\Gamma)|
    - 
    |\ker S|
    +
    |\coker S|
    \\
    &= 
    |H_0(\Gamma)|
    -
    |H_1(\Gamma)| 
   - (|C_0 (\Gamma)|
    - 
    |C_1 (\Gamma)|
   )
    \\
    &= 0 . 
  \end{split}
\end{equation}
This is natural in the sense that 
there is no ``outside'' of the whole network. 
\end{example}

\begin{example}
Let us take the same network as Example~\ref{ex:v4e5}. 
The A-matrix of this system is 
\begin{equation}
    A 
    =
    \left(
    \begin{array}{cccc|cc}
      0 & 0 & 0 & 0             &1&0 \\
      r_{2,1} & 0 & 0 & 0       &1&0    \\
    0&r_{3,2} & 0 & 0            &1&0\\
    r_{4,1} & r_{4,2} & 0 & 0   &0&1 \\
    0 & 0 & r_{5,3} & r_{5,4}   &0&1 \\ \hline 
    0 & 0 & 1 & -1 & 0 & 0 
    \end{array}
    \right) , 
\end{equation}
where $r_{A,i} \coloneqq \p r_A / \p x_i$ and it is evaluated at the steady state. 
With this matrix $A$, 
the responses of the concentration and reaction rates 
to the change of parameters and the value of conserved charges can be obtained by Eq.~(\ref{eq:sensitivity-2}). 
The subnetwork 
$\gamma_1 = (\{v_3,v_4\},\{e_5\})$ 
is output-complete 
and is a buffering structure, since 
$\lambda(\gamma_1) = -2 + 1 - 0 + 1 = 0$. 
The output-complete subnetwork 
$\gamma_2 = (\{v_3,v_4\},\{e_4,e_5\})$ 
is also a buffering structure, 
$\lambda(\gamma_2) = -2 + 2 -1 +1 =0$, 
which contains a cycle supported on $\gamma_2$. 
This explains the fact that $x_1$ and $x_2$ 
do not depend on the value of conserved charge $\ell = x_3 - x_4$. 
The subnetwork 
$\gamma_3 = (\{v_1,v_3,v_4\},\{e_2,e_4,e_5\})$
is also a buffering structure, 
$\lambda(\gamma_3) = -3 + 3 - 1 + 1 = 0$, 
and hence 
$x_2$ does not depend on $k_2,k_4,k_5,$ and $\ell$. 
\end{example}

\begin{proof} 
The law of localization follows from the structure of the matrix $A$. 
Given an output-complete subnetwork $\gamma$, 
we can bring the rows and columns associated with $\gamma$ 
in the way shown in Fig.~\ref{fig:mat-a-lol}. 
All the component of the lower-left part is zero, 
because 
the reaction rate $r_A$ outside $\gamma$
does not depend on the chemical species in $\gamma$
(since $\gamma$ is output-complete), 
and those cycles are supported in $\gamma$. 
\begin{figure}[tb]
  \centering
  \includegraphics[keepaspectratio, scale=0.3]{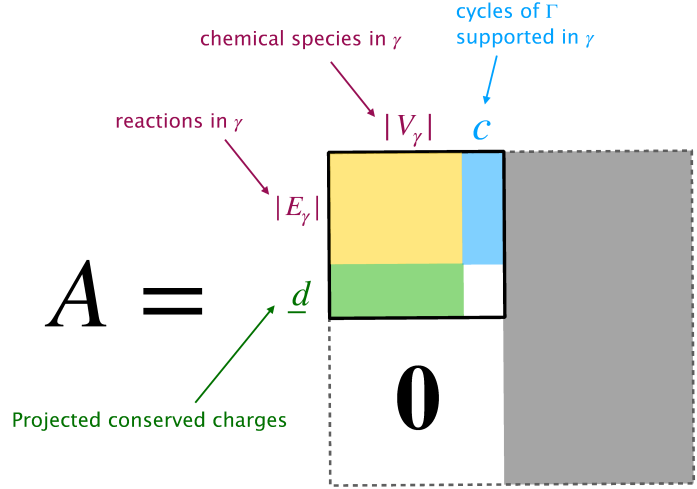} 
  \caption{
Structure of the A-matrix. 
The numbers $c$ and $\underline{d}$ are given by 
$c = |(\ker S)_{{\rm supp}\, \gamma}|$ 
and $\underline{d} = |P^0_\gamma (\coker S)|$. 
  }
  \label{fig:mat-a-lol}
\end{figure}
The index $\lambda(\gamma)$ measures 
how far the black rectangle on the upper-left corner is 
from a square matrix. 
The numbers $c$ and $\underline{d}$ in Fig.~\ref{fig:mat-a-lol} 
are given by 
$c = |(\ker S)_{{\rm supp}\, \gamma}|$ 
and $\underline{d} = |P^0_\gamma (\coker S)|$, 
which appear in Eq.~(\ref{eq:index-def}). 
Because of the assumption of regularity, we have $\det A \neq 0$, 
and the black rectangle on the upper-left corner 
should be vertically long 
(if it is horizontally long, the determinant vanishes), 
which is equivalent to the condition $\lambda(\gamma) \ge 0$. 

When $\lambda(\gamma) =0$, 
the black box in the upper-left corner is a square matrix. 
Then, $A^{-1}$ inherits the same structure, 
\begin{equation}
  A^{-1}
  = 
  \begin{pmatrix}
    * & * \\
    \bm 0 & * 
  \end{pmatrix}.
\end{equation}
Namely, if we denote the generic index of $(A^{-1})$ as $\mu,\nu, \cdots$ and 
write the index inside and outside $\gamma$ 
as $\mu^\star$ and $\mu'$, respectively, 
we have $(A^{-1})_{\mu' \nu^\star} = 0$. 
Because of this structure, 
\begin{equation}
 \p_{A^\star} x_{i'} \propto  (A^{-1})_{i' A^\star} = 0 , 
\end{equation}
which means that the concentrations out of $\gamma$ 
do not depend on the parameter $k_B$ inside $\gamma$. 
Consequently, we have 
\begin{equation}
  \p_{A^\star} r_{A'} \propto 
  \sum_{i'} \p_{i'} r_{A'} \p_{A^\star} x_{i'} = 0 , 
\end{equation}
where we used the fact that $r_{A'}$ only depends on 
the concentrations outside $\gamma$ because of the output-completeness. 
The same is true for the perturbation of the conserved charge, 
\begin{equation}
  \p_{{\bar\alpha}^\star} x_{i'} 
  \propto  (A^{-1})_{i' {\bar\alpha}^\star} = 0 , 
\quad 
\p_{{\bar\alpha}^\star} r_{A'} 
\propto 
\sum_{i'} \p_{i'} r_{A'} \p_{{\bar\alpha}^\star} x_{i'}
= 0 . 
\end{equation} 
\end{proof}

\subsection{Submodularity of the influence index}\label{sec:submodularity} 

The influence index $\lambda (\gamma)$ can be regarded as a function 
over subnetworks. 
We here show that the influence index 
satisfies an inequality. 
As a corollary, we show that the buffering structures 
are closed under union and intersection. 
This fact is useful in enumerating buffering structures in large reaction networks.

We first note that: 
\begin{itemize}
  \item Given output-complete subnetworks $\gamma_1, \gamma_2 \subset \Gamma$, 
  the union and intersection, $\gamma_1 \cup \gamma_2$ and 
$\gamma_1 \cap \gamma_2$, 
are also output-complete. 
This follows from the definition of output-completeness. 
\item A function $f(\gamma)$ 
over a set is called {\it submodular}, when it satisfies 
\begin{equation}
  f(\gamma_1 \cup \gamma_2)
  \le 
 f(\gamma_1 ) + f(\gamma_2 )
  - f(\gamma_1 \cap \gamma_2) . 
\end{equation}
When $\le$ is replaced with $\ge$, 
the function satisfying the replaced equation is 
called {\it supermodular}. 
\end{itemize}

\begin{theorem}
Let $\gamma_1, \gamma_2 \subset \Gamma$
be output-complete subnetworks. 
The influence index satisfies 
\begin{equation}
  \lambda (\gamma_1 \cup \gamma_2)
   \le 
\lambda (\gamma_1 )
+
\lambda (\gamma_2 )
 - 
 \lambda (\gamma_1  \cap \gamma_2 ) . 
 \label{eq:lm-ineq}
\end{equation}
Namely, $\lambda(\gamma)$ is a submodular function 
over output-complete subnetworks. 
\end{theorem}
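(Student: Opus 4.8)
The plan is to split the index \eqref{eq:index-def} into its four constituent terms and to show that each is either modular or carries the correct one-sided modularity, so that their sum is submodular. Writing $\lambda(\gamma) = -|V_\gamma| + |E_\gamma| - |(\ker S)_{{\rm supp}\,\gamma}| + |P^0_\gamma(\coker S)|$, the first two pieces are plain set cardinalities. Since $V_{\gamma_1\cup\gamma_2} = V_{\gamma_1}\cup V_{\gamma_2}$ and $V_{\gamma_1\cap\gamma_2} = V_{\gamma_1}\cap V_{\gamma_2}$ (and likewise for $E$), inclusion–exclusion gives $|V_{\gamma_1\cup\gamma_2}| + |V_{\gamma_1\cap\gamma_2}| = |V_{\gamma_1}| + |V_{\gamma_2}|$ and the analogue for $E$. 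Hence $-|V_\gamma| + |E_\gamma| = -\chi(\gamma)$ is modular, contributing \eqref{eq:lm-ineq} with equality; this is exactly the Euler-characteristic relation already extracted from \eqref{eq:MV}. It thus remains to handle the kernel and cokernel terms.

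The crux is a purely linear-algebraic lemma: for a fixed subspace $W$ of a coordinate space, the dimension of its support-restriction is supermodular in the support. Concretely, regard $\ker S \subseteq C_1(\Gamma)$; for $F \subseteq E$ let $\mathbb R^{F} \subseteq C_1(\Gamma)$ be the coordinate subspace spanned by $\{e_A \in F\}$, and set $W_F := W \cap \mathbb R^{F}$. I would verify directly that $W_{F_1} \cap W_{F_2} = W_{F_1 \cap F_2}$ (a vector of $W$ supported on both $F_1$ and $F_2$ is supported on $F_1 \cap F_2$) and $W_{F_1} + W_{F_2} \subseteq W_{F_1 \cup F_2}$. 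The Grassmann dimension formula then gives
\begin{equation}
|W_{F_1}| + |W_{F_2}| = |W_{F_1}+W_{F_2}| + |W_{F_1}\cap W_{F_2}| \le |W_{F_1\cup F_2}| + |W_{F_1\cap F_2}|,
\end{equation}
so $F \mapsto |W_F|$ is supermodular. Applying this with $W = \ker S$ and $F = E_\gamma$, and noting $(\ker S)_{{\rm supp}\,\gamma} = W_{E_\gamma}$, shows $|(\ker S)_{{\rm supp}\,\gamma}|$ is supermodular, hence $-|(\ker S)_{{\rm supp}\,\gamma}|$ is submodular.

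For the cokernel term I would pass to the complement via rank–nullity. Regarding $\coker S \subseteq C_0(\Gamma)$ as in Remark \ref{rem:identification} and viewing $P^0_\gamma$ as the coordinate projection onto $V_\gamma$, its kernel restricted to $\coker S$ consists of the charges supported on $V\setminus V_\gamma$, so
\begin{equation}
|P^0_\gamma(\coker S)| = |\coker S| - |(\coker S) \cap \mathbb R^{V\setminus V_\gamma}|.
\end{equation}
Setting $G_i := V\setminus V_{\gamma_i}$ and using that complementation swaps union and intersection, $V\setminus V_{\gamma_1\cup\gamma_2} = G_1\cap G_2$ and $V\setminus V_{\gamma_1\cap\gamma_2} = G_1\cup G_2$, the supermodularity lemma applied to $W = \coker S$ becomes, after the sign flip from rank–nullity, submodularity of $|P^0_\gamma(\coker S)|$.

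Combining, a sum of modular and submodular functions is submodular, which yields \eqref{eq:lm-ineq}. The only structural hypothesis used is that $\gamma_1\cup\gamma_2$ and $\gamma_1\cap\gamma_2$ are again output-complete, so that $\lambda$ is defined on them; the inequality itself is insensitive to output-completeness. I expect the main obstacle to be the cokernel term: one must correctly identify $|P^0_\gamma(\coker S)|$ through rank–nullity and carefully track the union/intersection exchange under complementation, so that it genuinely reduces to the same supermodularity lemma established for the kernel rather than introducing an unwanted sign.
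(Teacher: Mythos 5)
Your proof is correct, and its skeleton---splitting $\lambda$ into the Euler term, the kernel term, and the cokernel term, showing the first is modular and the other two carry the right one-sided modularity---is the same as the paper's. Your treatment of $|(\ker S)_{{\rm supp}\,\gamma}|$ is essentially identical to the paper's: it forms $Z = (\ker S)_{{\rm supp}\,\gamma_1} + (\ker S)_{{\rm supp}\,\gamma_2}$, applies the dimension formula, and notes $Z \subset (\ker S)_{{\rm supp}\,\gamma_1\cup\gamma_2}$, which is exactly your Grassmann argument. Where you genuinely diverge is the cokernel term. The paper proves submodularity of $|P^0_\gamma(\coker S)|$ directly by a matrix-rank argument: choosing a basis matrix $w$ of $\coker S$ and extending row bases from $w[v_1\cap v_2,N]$ to $w[v_1,N]$ to $w[v_1\cup v_2,N]$---the classical proof that rank is a submodular function of the row set. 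You instead pass through rank--nullity, $|P^0_\gamma(\coker S)| = |\coker S| - |(\coker S)\cap \mathbb{R}^{V\setminus V_\gamma}|$, and use the fact that complementation exchanges union and intersection, thereby reducing the cokernel term to the very same support-supermodularity lemma you used for the kernel. Both are valid; your route is more economical in that a single linear-algebra lemma powers both terms (the two arguments are dual to each other), while the paper's basis-extension argument is self-contained at the level of matrices. Notably, the rank--nullity identity you invoke does appear in the paper---Eq.~(\ref{eq:p-cok-s}) and its footnote---but only later, in the decomposition of the influence index, not in the submodularity proof. You are also right that output-completeness of $\gamma_1\cup\gamma_2$ and $\gamma_1\cap\gamma_2$ matters only so that $\lambda$ is defined on them, and your inclusion--exclusion handling of the Euler term is an elementary substitute for the paper's appeal to the Mayer--Vietoris sequence.
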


\begin{proof}
We show that 
\begin{equation}
   \lambda(\gamma)
  =
- |V_\gamma|
+ |E_\gamma| 
-
 |(\ker S)_{{\rm supp}\,\gamma}|
+ |P^0_\gamma (\coker S)| 
 \label{eq:lambda-chi}
\end{equation}
is submodular. 
Recall that $\chi(\gamma)
=  |V_\gamma|- |E_\gamma|
=  |H_0 (\gamma)| - |H_1 (\gamma)| 
$ is the Euler number for subnetwork $\gamma = (V_\gamma, E_\gamma)$. 
We note that $\chi(\gamma)$ is a modular function, 
meaning that it satisfies 
\begin{equation}
\chi(\gamma_1 \cup \gamma_2)  
= 
\chi(\gamma_1 )
+  
\chi(\gamma_2 )
-
\chi(\gamma_1 \cap \gamma_2) , 
\end{equation}
which is derived from the Mayer-Vietoris exact sequence~\eqref{eq:MV}. 
Thus, it suffices to show that the last 
two terms on the right-hand side (RHS) of Eq.~(\ref{eq:lambda-chi})
are submodular. 
In fact, we show that each of them is submodular.

Let us first look at 
$|P^0_\gamma (\coker S)|$. 
If denote $W \coloneqq \coker S$, 
the submodularity of $|P^0_\gamma (\coker S)|$ reads 
\begin{equation}
 |P^0_{\gamma_1 \cup \gamma_2} W|
 \le 
 |P^0_{\gamma_1} W |
 + 
 |P^0_{\gamma_2} W |
 - 
 |P^0_{\gamma_1 \cap \gamma_2} W| . 
 \label{eq:p-w-submod}
\end{equation}
We prove this equation just after this proof. 
Thus, we have shown the submodularity of $|P^0_\gamma (\coker S)|$. 

Next, we show 
that $ |(\ker S)_{{\rm supp}\,\gamma}|$ is supermodular. 
Consider the following vector space, 
\begin{equation}
    Z \coloneqq (\ker S)_{{\rm supp}\,\gamma_1} + (\ker S)_{{\rm supp}\,\gamma_2} . 
\end{equation}
Its dimension is given by 
\begin{equation}
\begin{split}
    |Z| 
    &=
    |(\ker S)_{{\rm supp}\,\gamma_1}| + |(\ker S)_{{\rm supp}\,\gamma_2} |
    - 
    |(\ker S)_{{\rm supp}\,\gamma_1} \cap (\ker S)_{{\rm supp}\,\gamma_2} | \\
    &= 
    |(\ker S)_{{\rm supp}\,\gamma_1}| + |(\ker S)_{{\rm supp}\,\gamma_2} |
  - |(\ker S)_{{\rm supp}\,\gamma_1 \cap \gamma_2} |. 
\end{split}
\end{equation}
Since any element of $Z$ is supported in $\gamma_1 \cup \gamma_2$, we have 
$(\ker S)_{{\rm supp}\, \gamma_1 \cup \gamma_2} \supset  Z$, 
which implies $|(\ker S)_{{\rm supp}\, \gamma_1 \cup \gamma_2}| \ge |Z|$. 
Thus, we have shown that $|(\ker S)_{{\rm supp}\,\gamma}|$ is a supermodular function, 
and $- |(\ker S)_{{\rm supp}\,\gamma}|$ is submodular. 

Therefore, $|P^0_\gamma (\coker S)|$ and 
$- |(\ker S)_{{\rm supp}\,\gamma}|$ are both submodular function, and we obtain the claim. 
\end{proof}

\noindent
{\it Proof of Eq.~(\ref{eq:p-w-submod})}.  
Let us pick a basis of the space $W$ as 
$\{ \bm w_1, \cdots, \bm w_n \}$
and we denote the basis as a matrix, $w_{ij}$. 
The set of possible row and column 
indices are denoted as $V$ and $N$, respectively. 
For a subset of indices $v \subset V$, 
we denote the corresponding submatrix of $w_{ij}$ as 
$w[v,N]$. 
With this notation, 
the dimension of a projected subspace of $W$ is written as 
$|P^0_\gamma W| = {\rm rank\,} w[v_{\gamma}, N]$ 
for a subnetwork $\gamma = (v_\gamma, e_\gamma)$. 

Let us pick two subnetworks $\gamma_1$ and $\gamma_2$, 
and denote the sets of chemical species 
by $v_1$ and $v_2$, respectively. 
We consider the submatrix $w [v_1 \cap v_2, N]$. 
We can pick a row basis as 
$\{ \bm a_i^T \,|\, i \in \alpha_{1 \cap 2} \}$, 
where $\alpha_{1 \cap 2} \subset v_1 \cap v_2$.
Here, 
$|\alpha_{1\cap 2}| = {\rm rank\,} w[v_1 \cap v_2, N]$. 
We can form a row basis of $w[v_1, N]$
by adding row vectors from $w[v_1 \setminus v_2, N]$
to $\alpha_{1\cap 2}$. 
Let $\alpha_1$ be the picked indices, 
then 
\begin{equation}
 |\alpha_{1\cap 2}| + |\alpha_1| =
 {\rm rank\, } w[v_1, N] . 
\end{equation}
We can further pick row vectors from 
$w[v_2 \setminus v_1, N]$ 
and form a basis of $w[v_1 \cup v_2 , N]$. 
Let us denote the added indices as $\alpha_2$, then 
\begin{equation}
  |\alpha_{1\cap 2}|
  + 
  |\alpha_{1}|
  + 
  |\alpha_{2}| 
  = {\rm rank\, } w[v_1 \cup v_2, N]. 
\end{equation}
Since the vectors specified by the indices 
$\alpha_{1\cap 2} \cup \alpha_2$ 
are linearly independent and $\alpha_{1\cap 2} \cup \alpha_2 \subset v_2$,
we have 
$
|\alpha_{1\cap 2} |+ | \alpha_2 | 
\le
{\rm rank\,} w[v_2, N]
$. 
This can be written as 
\begin{equation}
{\rm rank\,} w[v_1\cup v_2, N]
\le 
{\rm rank\,} w[v_1, N]
+ {\rm rank\,} w[v_2, N]
- {\rm rank\,} w[v_1\cap v_2, N] . 
\end{equation}
This is equivalent to Eq.~(\ref{eq:p-w-submod}). 
\begin{flushright}
 $\Box$  
\end{flushright}

\begin{corollary}\label{cor:intersection-union}
Let $\Gamma$ be a regular chemical reaction network. 
The union and the intersection of two buffering structures inside $\Gamma$ 
are also buffering structures. 
\end{corollary}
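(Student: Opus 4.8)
The plan is to combine two facts already established in the excerpt: the submodularity inequality \eqref{eq:lm-ineq} for the influence index, and the nonnegativity $\lambda(\gamma)\ge 0$ for output-complete subnetworks of a regular network (Remark~\ref{rem:lambda-ge-0}). The key observation is that nonnegativity turns the submodularity inequality into a pinching argument once the right-hand side is known to vanish.

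First I would record the structural input. Let $\gamma_1,\gamma_2\subset\Gamma$ be buffering structures, so by definition they are output-complete and satisfy $\lambda(\gamma_1)=\lambda(\gamma_2)=0$. From the itemized observation preceding the submodularity theorem, both $\gamma_1\cup\gamma_2$ and $\gamma_1\cap\gamma_2$ are again output-complete; this is the point that makes the whole argument run, since it guarantees that $\lambda$ is even defined on the union and intersection, and—crucially—that Remark~\ref{rem:lambda-ge-0} applies to them, giving $\lambda(\gamma_1\cup\gamma_2)\ge 0$ and $\lambda(\gamma_1\cap\gamma_2)\ge 0$.

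Next I would apply the submodularity inequality \eqref{eq:lm-ineq} to these two subnetworks and substitute $\lambda(\gamma_1)=\lambda(\gamma_2)=0$, rearranging to obtain
\begin{equation}
\lambda(\gamma_1\cup\gamma_2)+\lambda(\gamma_1\cap\gamma_2)\le \lambda(\gamma_1)+\lambda(\gamma_2)=0.
\end{equation}
Since both summands on the left are nonnegative and their sum is at most zero, each must vanish individually: $\lambda(\gamma_1\cup\gamma_2)=0$ and $\lambda(\gamma_1\cap\gamma_2)=0$. Together with output-completeness, this says precisely that $\gamma_1\cup\gamma_2$ and $\gamma_1\cap\gamma_2$ are buffering structures, which is the claim.

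There is essentially no hard computational step here; the entire content of the corollary is packaged into the two prior results. The only point requiring care—the place I would flag as the genuine logical hinge rather than an obstacle—is verifying that the nonnegativity result genuinely applies to $\gamma_1\cup\gamma_2$ and $\gamma_1\cap\gamma_2$, which rests on their output-completeness and on $\Gamma$ being regular. Once those hypotheses are confirmed, the pinching argument is immediate, and I would not expect any additional technical difficulty.
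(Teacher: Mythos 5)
Your proof is correct and is essentially identical to the paper's own argument: the paper likewise combines the submodularity inequality with nonnegativity of the influence index (Remark~\ref{rem:lambda-ge-0}) to pinch both $\lambda(\gamma_1\cup\gamma_2)$ and $\lambda(\gamma_1\cap\gamma_2)$ to zero, with output-completeness of the union and intersection supplied by the observation preceding the submodularity theorem.
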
 
\begin{proof}
Suppose $\gamma_1$ and $\gamma_2$ are buffering structures inside $\Gamma$. 
Then $\lambda(\gamma_1)=\lambda(\gamma_2)=0$. 
From the submodularity of the influence index, 
we have 
\begin{equation}
  \lambda(\gamma_1 \cup \gamma_2) 
  + 
  \lambda(\gamma_1 \cap \gamma_2) 
  \le 0.
\end{equation} 
Since influence indices are nonnegative for a regular chemical reaction network, 
we have 
$
\lambda(\gamma_1 \cup \gamma_2) 
= 
\lambda(\gamma_1 \cap \gamma_2) 
=0
$. Thus, we obtain the claim. 
\end{proof}

\section{Reduction of chemical reaction networks}\label{sec:reduction}

Generically, a reduction is a process to reduce the number of degrees of freedom 
while keeping some features of the original system. 
Let us here introduce a reduction method which 
consists of the following two steps, 
\begin{itemize}
    \item[(1)] Identify a subnetwork to be reduced. 
    \item[(2)] Perform the reduction for given a subnetwork. 
\end{itemize}
As a result, we obtain a new reaction network 
with fewer chemical species and reactions, 
\begin{equation}
  \Gamma \longrightarrow \Gamma' . 
\end{equation}
The reduced network is 
characterized by a new stoichiometric matrix, 
\begin{equation}
  S \longrightarrow S'.  
\end{equation}
Crucial points are, how to identify a subnetwork 
to be eliminated, 
and how to obtain the new stoichiometric matrix $S'$, 
which determines the structure of the reduced network. 
In this section, we mainly discuss step (2). 
We will discuss more on the choice of a subnetwork in 
Sec.~\eqref{sec:red-buff}.

\subsection{Reduction procedure}

Let us here illustrate 
a method of reduction based on the network topology. 
We denote the whole reaction network by $\Gamma = (V, E)$,
where 
$V$ and $E$ are the sets 
of chemical species and reactions, respectively. 
We choose a subnetwork $\gamma = (V_\gamma, E_\gamma)$,
where $V_\gamma \subset V$ and $E_\gamma \subset E$, 
and eliminate the degrees of freedom inside $\gamma$. 
We refer to the chemical species 
and reactions inside $\gamma$ 
as {\it internal}, and 
those in $\Gamma \setminus \gamma$ as {\it boundary}. 
For the given subnetwork $\gamma$, 
we separate the chemical concentrations and reaction rates as 
\begin{equation}
    \bm x = 
    \begin{pmatrix}
      \bm x_1 \\
      \bm x_2
    \end{pmatrix}, 
    \quad 
    \bm r = 
    \begin{pmatrix}
      \bm r_1 \\
      \bm r_2
    \end{pmatrix}, 
\end{equation}
where $1$ and $2$ correspond to 
internal and boundary degrees of freedom, respectively. 
Accordingly, the stoichiometric matrix $S$ 
can be partitioned as 
\begin{equation}
  S = 
\begin{pmatrix}
  S_{11} & S_{12} \\
  S_{21} & S_{22} 
\end{pmatrix}. 
\label{eq:s-sepa}
\end{equation}
Note that the submatrix $S_{11}$ is the same matrix as $S_\gamma$ 
that appeared in Sec.~\ref{sec:sub-n}. Hereafter we use $S_{11}$ for notational convenience. 
With the separation of internal and boundary 
degrees of freedom, 
the rate equations of the whole reaction system is written as 
\begin{equation}
  \frac{d}{dt}
  \begin{pmatrix}
    \bm x_1 \\
    \bm x_2
  \end{pmatrix}
  = 
  \begin{pmatrix}
    S_{11} & S_{12} \\
    S_{21} & S_{22} 
  \end{pmatrix} 
  \begin{pmatrix}
    \bm r_1 \\
    \bm r_2
  \end{pmatrix}
  = 
  \begin{pmatrix}
    S_{11} \bm r_1 + S_{12} \bm r_2 \\    
    S_{21} \bm r_1 + S_{22} \bm r_2 
  \end{pmatrix}.  
  \label{eq:rate-eq-block}
\end{equation}
While 
the internal reaction rates 
$\bm r_1 = \bm r_1 (\bm x_1, \bm x_2)$ 
in general depend on both of the internal and boundary 
chemical concentrations, 
when $\gamma$ is chosen to be output-complete, 
the boundary reaction rates $\bm r_2 = \bm r_2 (\bm x_2)$ 
do not depend on the internal chemical concentrations $\bm x_1$. 
The first equation 
of Eq.~(\ref{eq:rate-eq-block}) can be solved for 
$\bm r_1$ as 
\begin{equation}
  \bm r_1 = S_{11}^+ \frac{d}{dt} \bm x_1- S_{11}^+ S_{12} \bm r_2 + \bm c_{11}, 
\end{equation}
where $S_{11}^+$ is the Moore-Penrose inverse of $S_{11}$, 
and $\bm c_{11} \in \ker S_{11}$. 
Substituting this to the second equation of Eq.~(\ref{eq:rate-eq-block}), we get 
\begin{equation}
  \frac{d}{dt}
  \left(
 {\bm x}_2 - S_{21} S^+_{11} {\bm x}_1
 \right)
 = 
(S_{22} - S_{21} S_{11}^+ S_{12} ) \bm r_2 
 +
S_{21} \bm c_{11} . 
 \label{eq:rate-red-1}
\end{equation}
When the following condition is satisfied, 
\begin{equation}
  \ker S_{11} \subset \ker S_{21}  , 
  \label{eq:s11-s21-cond}
\end{equation}
$S_{21} \bm c_{11}=\bm 0$ 
and the second term of the RHS of Eq.~(\ref{eq:rate-red-1}) vanishes.\footnote{
As we discuss later, this condition is 
the same as the absence of emergent cycles in $\gamma$. 
See the text around Eq.~(\ref{eq:sc1-eq-0}). 
} 
Then, the rate equation is written as 
\begin{equation}
  \frac{d}{dt}
  \left(
 {\bm x}_2 - S_{21} S^+_{11} {\bm x}_1
 \right)
 = 
S' \bm r_2 , 
\end{equation}
where $S'$ is the generalized Schur complement, 
\begin{eqnarray}
S' = S / S_{11} \coloneqq S_{22}  - S_{21} S_{11}^+ S_{12}. 
\label{eq:def-sp} 
\end{eqnarray}
As long as steady states are concerned, 
the subnetwork $(\bm x_2, \bm r_2)$ 
satisfies the rate equation 
whose stoichiometric matrix is $S'$. 
This motivates us to consider the subnetwork 
$(\bm x_2, \bm r_2)$
whose rate equation is given by 
\begin{equation}
  \frac{d}{dt} {\bm x}_2 
   = S' \bm r_2 (\bm x_2).  
\end{equation}  
Based on the considerations above, 
we define the reduction 
of a reaction system in the following way: 
\begin{definition}[Reduction]
Let $\Gamma = (V, E)$ be a chemical reaction network 
with stoichiometric matrix $S$ 
and 
$\gamma =(V_\gamma, E_\gamma)$ be an output-complete subnetwork
whose stoichiometric matrix is denoted by $S_{11}$. 
We define a reduced network 
$\Gamma' = (V \setminus V_\gamma, E \setminus E_\gamma)$ 
obtained by eliminating $\gamma$ from $\Gamma$, 
by a stoichiometric matrix $S'$
given by the generalized Schur complement~\eqref{eq:def-sp}. 
%
We denote the resultant 
reaction network by $\Gamma' = \Gamma / \gamma$. 
Accordingly, 
the chemical concentrations 
and reaction rates of the reduced system $(\bm x', \bm r')$ 
are obtained from the original ones $(\bm x, \bm r)$ as 
\begin{align}
  \bm x = 
  \begin{pmatrix}
    \bm x_1 \\
    \bm x_2
  \end{pmatrix}
  &\longrightarrow
  \bm x' = \bm x_2 ,\\
  \bm r (\bm x) = 
  \begin{pmatrix}
    \bm r_1 (\bm x_1, \bm x_2)\\
    \bm r_2 (\bm x_2)
  \end{pmatrix}
  &\longrightarrow
  \bm r' (\bm x') = \bm r_2 (\bm x_2), 
\end{align}
and the rate equation of the reduced system is given by 
\begin{equation}
  \frac{d}{dt} {\bm x}' 
   = S' \bm r' (\bm x'). 
\end{equation}  
\end{definition}

\begin{remark}
The structure of the reduced network is determined 
by the generalized Schur complement~\eqref{eq:def-sp} of the stoichiometric matrix. 
The second term in Eq.~\eqref{eq:def-sp} 
represents the rewiring of the network 
associated with the elimination of $\gamma$. 
\end{remark}

\begin{remark}
The reduced system can be always defined 
if $\gamma$ is output-complete; otherwise, the reduction is ill-defined since the reduced system would depend on $\bm x_1$ through $\bm r_2$. 
We emphasize that the output-completeness is a topological condition determined by the stoichiometry 
and the details of the reactions, namely the kinetics, are irrelevant. 
Thus, the reduction is applicable to any kind of kinetics. 
How the reduced system is related to the original system 
depends on further nature of $\gamma$. 
In the following sections, we will discuss more on 
the features of the subnetworks that behave nicely 
under reductions. 
In Sec.~\ref{sec:red-buff-th}, 
we prove that, when 
$\gamma$ has a vanishing influence index (see Sec.~\ref{sec:lol}), 
which is determined by the network topology, 
the steady state of the reduced system 
is assured to be the same as the steady state of the original system. 
\end{remark}

\begin{remark}
In Sec.~\ref{sec:red-mor}, 
we show that the reduction we introduced here can be regarded as a 
morphism of reaction networks 
that `shrinks' a subnetwork to a point, followed by the removal of degenerate (chemically meaningless) reactions. 
\end{remark}

\begin{remark}
We note that the elements of the matrix $S'$ are rational, 
since the Moore-Penrose inverse of an integral matrix is rational \cite{KOECHER1985187}. 
The matrix $S'$ can be always transformed into an integral matrix 
by columnwise rescaling of $S'$ together with the rescaling of reaction rates. 
\end{remark}

\begin{remark}
The stoichiometric matrix given by the generalized 
Schur complement has appeared previously 
in flux balance analysis \cite{van1994linear,klamt2002calculability}. 
The current method is different from the ones discussed for reaction networks with the mass-action kinetics 
in detailed balanced \cite{van2013mathematical} 
and complex balanced \cite{rao2013graph} situations, 
where the Schur complementation is performed for the weighted Laplacian similarly to the Kron reduction of electrical circuits \cite{kron1939tensorbook,8347206,6316101}. 
In the current formulation, the Schur complementation is performed for the stoichiometric matrix. 
\end{remark}

\subsection{Simple examples of reduction}

To illustrate the reduction procedure, 
here we discuss simple examples. 
In Sec.~\ref{sec:ex-ecoli}, 
we discuss the reduction of the metabolic pathway of {\it E.~coli} as a more realistic example. 

\begin{example}\label{ex:red-v2e3}
We consider a monomolecular reaction network that consists of 
$(V,E)=(\{v_1,v_2\}, \{ e_1,e_2,e_2\})$. 
We take a subnetwork $\gamma = (\{v_1 \}, \{ e_2 \})$ 
to be reduced. 
Under the reduction, the stoichiometric matrix changes as 
\begin{equation}
  \begin{tikzpicture}
    \node at (0, 0) {
      $
      S =
      \begin{blockarray}{cccc}
        && \\
        \begin{block}{c(ccc)}
         {\color{mydarkred}v_1} \quad\, &-1&1& 0\\
         v_2\quad\, &1& 0 &-1 \\
        \end{block} 
         & {\color{mydarkred}e_2} & e_1 & e_3 
      \end{blockarray}
          $
      }; 
      \draw[mydarkred, dashed,line width=1] 
      (-0.05, 0.2) rectangle (0.5,0.7);
      \node at (0.1, 1) {\color{mydarkred}$S_{11}$} ;

   \draw [mybrightblue,ultra thick,-latex]   (2.3,0) -- (3.3, 0); 

    \node at (5, 0) {
      $
      S' =
      \begin{blockarray}{ccc}
        && \\
        \begin{block}{c(cc)}
         v_2 \, \, \, & 1 &-1 \\
        \end{block} 
         & e_1 & e_3 
      \end{blockarray}
          $
      }; 

\end{tikzpicture} 
\end{equation}
where we have brought the reduced part to the upper-left part. 
The reduction looks like 
\begin{equation}
  \begin{tikzpicture}

    \node[species] (x) at (0,0) {$v_1$}; 
    \node[species] (y) at (1.5,0) {$v_2$}; 
 
    \node (D1) at (-1.25,0) {}; 
    \node (D2) at (2.75,0) {}; 
    
   \draw [-latex,line width=0.5mm] (D1) edge node[below]{$e_1$} (x);
   \draw [-latex,line width=0.5mm] (x) edge node[below]{$e_2$} (y);
   \draw [-latex,line width=0.5mm] (y) edge node[below]{$e_3$} (D2);

   \draw[mydarkred, dashed,line width=1] (-0.3,-1) rectangle (1.2,1);
   \node at (0, 0.7) {  \scalebox{1.1} {\color{mydarkred} $\gamma$} };
   
 \draw [mybrightblue,ultra thick,-latex]  (3.25,0) -- (4.25,0); 

  \node[species] (y2) at (6,0) {$v_2$};
  
  \node (D12) at (4.75,0) {};
  \node (D22) at (7.25,0) {};  
  
   \draw [-latex,line width=0.5mm] (D12) edge node[below]{$e_1$} (y2);
   \draw [-latex,line width=0.5mm] (y2) edge node[below]{$e_3$} (D22);

\end{tikzpicture}
\end{equation}
The original rate equation is 
\begin{equation}
  \frac{d}{dt}
  \begin{pmatrix}
    x_1 \\
    x_2 
  \end{pmatrix}
  = 
  \begin{pmatrix}
    -1 & 1&0 \\
    1 & 0 &-1 \\
  \end{pmatrix}
  \begin{pmatrix}
    r_2(x_1) \\
    r_1 \\
    r_3 (x_2)
  \end{pmatrix}, 
\end{equation}
where $x_1 = x (v_1)$, $r_2 = r (e_2)$ and so on. 
If we eliminate $r_2 (x_1)$, 
\begin{equation}
  \frac{d}{dt}
\left(  x_2  + x_1   \right)
  = 
  \begin{pmatrix}
    1 & -1 
  \end{pmatrix} 
 \begin{pmatrix}
   r_1 \\
   r_3 (x_2) 
 \end{pmatrix}. 
\end{equation}
The reduced equation of motion is obtained by replacing 
$x_2  + x_1$ with $x_2$ on the left-hand side. 

To compute the steady-state solutions, 
let us for example employ the mass-action kinetics, 
\begin{equation}
\begin{pmatrix}
      r_2 (x_1)\\
      r_1 \\
      r_3(x_2)\\
    \end{pmatrix}
    = 
    \begin{pmatrix}
      k_2 x_1 \\
      k_1 \\
      k_3 x_2 \\
\end{pmatrix}. 
\end{equation}
The steady-state reaction rates and concentrations are given by 
\begin{equation}
    \begin{pmatrix}
      \bar r_2 \\
      \bar r_1 \\
      \bar r_3\\
\end{pmatrix}
= 
k_1 
\begin{pmatrix}
      1 \\
      1 \\ 
      1 
\end{pmatrix}, 
\quad 
\begin{pmatrix}
  \bar x_1 \\
  \bar x_2 
\end{pmatrix}
= 
\begin{pmatrix}
  k_1 / k_2 \\
  k_1 / k_3 
\end{pmatrix}.
\label{eq:ex-sol-orig}
\end{equation}
The steady-state solutions of the reduced system are 
\begin{equation}
    \begin{pmatrix}
      \bar r_1 \\
      \bar r_3 
\end{pmatrix}
= 
k_1 
\begin{pmatrix}
      1 \\
      1 
\end{pmatrix}, 
\quad 
\bar x_2 = \frac{k_1}{k_3}.
\end{equation}
Note that this solution of the reduced system 
is exactly the same as the solution (\ref{eq:ex-sol-orig}) of the original system for the boundary concentrations and rates. 
Indeed, this is a special property of buffering structures. 
In this example, the subnetwork has a vanishing influence index, $\lambda(\gamma) = -1+1-0+0=0$, and hence is a buffering structure. 
Generically, 
when the reduced subnetwork is a buffering structure, 
the steady-state solution of the reduced system 
is the same as the original system,
and this is the content of Theorem~\ref{th:reduction}. 
Although we used the mass-action kinetics in this example, 
the theorem applies to any kind of kinetics. 
We give a proof of the theorem in Sec.~\ref{sec:red-buff-th}.

\end{example}

\begin{example} \label{ex:v2e3}
$(V,E) =(\{v_1,v_2,v_3\}, \{ e_1,e_2,e_3\})$. 
The stoichiometric matrices of the original and reduced system 
are given by 
\begin{center}
  \begin{tikzpicture}
    \node at (0, 0) {
      $
      S =
      \begin{blockarray}{cccc}
        && \\
        \begin{block}{c(ccc)}
         {\color{mydarkred}v_1} \quad\, &-1&0 &1\\
         v_2\quad\, &1&-1 &0\\
         v_3\quad\, &0&1 &-1\\
        \end{block} 
         & {\color{mydarkred}e_1} & e_2 & e_3 
      \end{blockarray}
      $
      }; 
      \draw[mydarkred, dashed,line width=1] 
      (-0.2,0.4) rectangle (0.5,0.9);

   \draw [mybrightblue,ultra thick,-latex]   (2.3,0) -- (3.3, 0); 

    \node at (5.3, 0) {
      $
      S' =
      \begin{blockarray}{ccc}
        && \\
        \begin{block}{c(cc)}
         v_2\quad\, &-1 &1\\
         v_3\quad\, &1 &-1\\
        \end{block} 
         & e_2 & e_3 
      \end{blockarray}
      $
      };
      
      \node at (0.2, 1.3) {\color{mydarkred}$S_{11}$} ;
  \end{tikzpicture}    
\end{center}
where we reduced the subnetwork $\gamma = (\{ v_1\}, \{ e_1\})$. 
The reduction is visually expressed as 
\begin{equation}
  \begin{tikzpicture}

    \node[species] (v1) at (0,0) {$v_1$}; 
    \node[species] (v2) at (1,1.73) {$v_2$}; 
    \node[species] (v3) at (2,0) {$v_3$}; 
  
    \draw [-latex,line width=0.5mm] (v1) edge node[left]{$e_1$} (v2);
    \draw [-latex,line width=0.5mm] (v2) edge node[right]{$e_2$} (v3);
    \draw [-latex,line width=0.5mm] (v3) edge node[below]{$e_3$} (v1); 

   \node at (-0.5, 1) {  \scalebox{1.1} {\color{mydarkred} $\gamma$} };
   
   \draw[rotate around={-30:(0,0)}, color=mydarkred, dashed, line width=1]
    (-0.12,0.43) ellipse (0.5 and 1.2); 
   
 \draw [mybrightblue,ultra thick,-latex]   (3,0.8) -- (4,0.8); 

    \node[species] (rv2) at (5,1.73) {$v_2$}; 
    \node[species] (rv3) at (6,0) {$v_3$}; 

    \draw [-latex,line width=0.5mm,out=-30,in=90] 
    (rv2) edge node[right]{$e_2$} (rv3); 

    \draw [-latex,line width=0.5mm,out=150,in=-90] 
    (rv3) edge node[left]{$e_3$} (rv2); 

\end{tikzpicture}
\end{equation}
Suppose that we take the mass-action kinetics, 
\begin{equation}
\begin{pmatrix}
      r_1(x_1) \\
      r_2(x_2) \\
      r_3(x_3)\\
    \end{pmatrix}
    = 
    \begin{pmatrix}
      k_1 x_1 \\
      k_2 x_2 \\
      k_3 x_3 \\
\end{pmatrix}. 
\end{equation}
The system has one conserved charge and we specify the value as 
$\ell = x_1 + x_2 + x_3.$ 
The steady-state reaction rates of the original system are 
\begin{equation}
    \begin{pmatrix}
      \bar r_1 \\
      \bar r_2 \\
      \bar r_3 
    \end{pmatrix}
    = \ell K 
    \begin{pmatrix} 1 \\
    1\\
    1
    \end{pmatrix} , 
\end{equation}
where $K$ is defined by 
$
\frac{1}{K} \coloneqq \frac{1}{k_1} + \frac{1}{k_2} + \frac{1}{k_3}  . 
$
In the reduced system, $\ell' = x_2 + x_3$ is a conserved charge.
The steady-state rates in the reduced system are 
\begin{equation}
    \begin{pmatrix}
      \bar r_2 \\
      \bar r_3 
   \end{pmatrix}
    = \ell' K' 
    \begin{pmatrix} 
    1 \\
    1
    \end{pmatrix} , 
\end{equation}
where 
$
    \frac{1}{K'} \coloneqq \frac{1}{k_2} + \frac{1}{k_3}  . 
$
Note that, if we want to have the same steady-state in the reduced system 
as the one in the original system, 
we have to choose the parameters so that $\ell K = \ell' K'$. 
This is in contrast to Example~\ref{ex:red-v2e3}, 
where no fine-tuning of the parameters is needed. 
The difference is attributed to the fact that 
the subnetwork $\gamma$ is not a buffering structure 
and the index is nonzero, $\lambda(\gamma) = -1 + 1 - 0 + 1 = 1$.

\end{example}

\begin{example}
$(V,E)=(\{v_1,v_2,v_3,v_4\}, \{ e_1,e_2,e_3,e_4,e_5,e_6\})$. 
The stoichiometric matrix changes under reduction as 
\begin{equation}
  \begin{tikzpicture}
    \node at (0, 0) {
      $
      S =
      \begin{blockarray}{ccccccc}
        && \\
        \begin{block}{c(cccccc)}
     {\color{mydarkred}v_1} \quad\, &     -1 & 0 & 1 & 0 & 1 & 0 \\
 {\color{mydarkred}v_2} \quad\, & 1 & -1 & 0 & 0 & 0 & 0 \\
v_3\quad\, & 0 & 1 & 0 & -1 & 0 & -1 \\
 v_4\quad\, &0 & 0 & 0 & 1 & -1 & 0 \\
        \end{block} 
         & {\color{mydarkred}e_2} & {\color{mydarkred}e_3}& e_1 &e_4&e_5&e_6
      \end{blockarray}  
          $
      }; 
      \draw[mydarkred, dashed,line width=1] 
      (-1,0.2) rectangle (0.3,1);
      \node at (-0.3, 1.25) {\color{mydarkred}$S_{11}$} ;

  \draw [mybrightblue,ultra thick,-latex]   (3,0) -- (4, 0); 

  \node at (6.5, 0) {
      $
      S' =
      \begin{blockarray}{ccccc}
        && \\
        \begin{block}{c(cccc)}
v_3\quad\, & 1 & -1 & 1 & -1\\
 v_4\quad\, &0 & 1 & -1 & 0 \\
        \end{block} 
         &  e_1 &e_4&e_5&e_6
      \end{blockarray}
          $ 
      }; 

  \end{tikzpicture} 
\end{equation}
where we chose the subnetwork $\gamma = (\{v_1,v_2\},\{e_2,e_3\})$ to be reduced. 
The reduction is visually expressed as 
\begin{equation}\begin{tikzpicture} 
    \node[species] (v1) at (1.25,0) {$v_1$}; 
    \node[species] (v2) at (2.5,0) {$v_2$};
    \node[species] (v3) at (3.75,0) {$v_3$};
    \node[species] (v4) at (2.5,1.5) {$v_4$}; 
    \node (d1) at (0,0) {}; 
    \node (d2) at (5,0) {}; 

    \node at (1.2, -0.7) {  \scalebox{1.1} {\color{mydarkred} $\gamma$} };
 
    \draw [-latex,draw,line width=0.5mm] (d1) edge node[below]{$e_1$} (v1);
    \draw [-latex,draw,line width=0.5mm] (v1) edge node[below]{$e_2$} (v2);
    \draw [-latex,draw,line width=0.5mm] (v2) edge node[below]{$e_3$} (v3);
    \draw [-latex,draw,line width=0.5mm] (v3) edge node[above right]{$e_4$} (v4);
    \draw [-latex,draw,line width=0.5mm] (v4) edge node[above left]{$e_5$} (v1);
    \draw [-latex,draw,line width=0.5mm] (v3) edge node[below]{$e_6$} (d2);

   \draw[mydarkred, dashed,line width=1] (0.9,-1.) rectangle (3.45,0.3);
       
   \draw [mybrightblue,ultra thick,-latex]
   (5.5,0) -- (6.5,0); 
   
    \node[species] (red_v4) at (8.25,1.5) {$v_4$}; 
    \node[species] (red_v3) at (8.25,0) {$v_3$};
    \node (red_d1) at (7,0) {}; 
    \node (red_d2) at (9.5,0) {}; 

    \draw [-latex,line width=0.5mm] (red_d1) edge node[below]{$e_1$} (red_v3);
    \draw [-latex,line width=0.5mm,out=60,in=-60] (red_v3) edge node[right]{$e_4$} (red_v4);
    \draw [-latex,line width=0.5mm,out=240,in=120] (red_v4) edge node[left]{$e_5$} (red_v3);
    \draw [-latex,line width=0.5mm] (red_v3) edge node[below]{$e_6$} (red_d2);

\end{tikzpicture}
\end{equation}
The subnetwork is a buffering structure, $\lambda(\gamma) = -2 + 2 - 0 + 0 = 0$. 

\end{example}

\begin{example}
$(V,E)=(\{v_1,\ldots, v_9\}, \{ e_1,\ldots, e_{13}\})$
with the following stoichiometric matrix, 
\begin{center}
  \begin{tikzpicture}
    \node at (0, 0) {
      $
      S =
      \begin{blockarray}{ccccccccccccccc}
        && \\
        \begin{block}{c(cccccccccccccc)}
   {\color{mydarkred}v_3} \quad\, &  -1 & -1 & 0 & 0 & 0 & 1 & 0 & 0 & 0 & 1 & 0 & 0 & 0 & 0\\
  {\color{mydarkred}v_4} \quad\, &   1 & 0 & -1 & 0 & 0 & 0 & 0 & 0 & 0 & 0 & 1 & 0 & 0 & 0 \\
  {\color{mydarkred}v_5} \quad\, & 0 & 1 & 0 & 1 & -1 & 0 & 0 & 0 & 0 & 0 & 0 & 0 & 0 & 0 \\
  {\color{mydarkred}v_7} \quad\, &  0 & 0 & 0 & 0 & 1 & -1 & -1 & 0 & 0 & 0 & 0 & 0 & 0 & 0 \\

  {\color{mydarkred}v_8} \quad\, & 0 & 0 & 0 & 0 & 0 & 0 & -1 & 0 & 0 & 0 & 0 & 1 & 0 & 0 \\
v_1\quad\, &  0 & 0 & 0 & 0 & 0 & 0 & 0 & 1 & 0 & -1 & 0 & 0 & 0 & 0 \\
v_2\quad\, & 0 & 0 & 0 & 0 & 0 & 0 & 0 & 0 & 1 & 0 & -1 & 0 & 0 & 0 \\

v_6\quad\, & 0 & 0 & 1 & -1 & 0 & 0 & 0 & 0 & 0 & 0 & 0 & -1 & -1 & 0 \\

v_9\quad\, & 0 & 0 & 0 & 0 & 0 & 0 & 1 & 0 & 0 & 0 & 0 & 0 & 0 & -1 \\
        \end{block} 
         & {\color{mydarkred}e_5} & {\color{mydarkred}e_6}&  {\color{mydarkred}e_7} & {\color{mydarkred}e_8}& {\color{mydarkred}e_9}& {\color{mydarkred}e_{11}}&  {\color{mydarkred}e_{12}} & e_1 & e_2  & e_3 & e_4 & e_{10} &  e_{13} & e_{14}
      \end{blockarray} 
          $
      }; 
      \draw[mydarkred, dashed,line width=1] 
      (-3.25,0) rectangle (0.9,2);
      \node at (-1, 2.3) {\color{mydarkred}$S_{11}$} ;
  \end{tikzpicture}  .
\end{center}
We choose the subnetwork $\gamma = (\{v_3,v_4,v_5,v_7,v_8 \},\{e_5,e_6,e_7,e_8,e_9,e_{11},e_{12} \})$ to be reduced. The reduced subnetwork is given by 
\begin{center}
  \begin{tikzpicture}
    \node at (0, 0) {
      $
      S' =
      \begin{blockarray}{cccccccc}
        && \\
        \begin{block}{c(ccccccc)}
v_1\quad\, &1 & 0 & -1 & 0 & 0 & 0 & 0 \\
v_2\quad\,&0 & 1 & 0 & -1 & 0 & 0 & 0 \\
v_6\quad\, & 0 & 0 & 1 & 1 & -2 & -1 & 0 \\
v_9\quad\, &  0 & 0 & 0 & 0 & 1 & 0 & -1 \\
        \end{block} 
         &e_1 & e_2  & e_3 & e_4 & e_{10}& e_{13} & e_{14}
      \end{blockarray} 
          $
      }; 
     \end{tikzpicture}.  
\end{center}
The subnetwork $\gamma$ is a buffering structure: $\lambda(\gamma) = -5 + 7 - 2 + 0 = 0$. 
The reduction is visually expressed as
\begin{equation}\begin{tikzpicture} 

    \node[species] (v1) at (0,0) {$v_1$}; 
    \node[species] (v2) at (2,0) {$v_2$};
    \node[species] (v3) at (0,-1.25) {$v_3$};
    \node[species] (v4) at (2,-1.25) {$v_4$}; 
    \node[species] (v5) at (0,-2.5) {$v_5$}; 
    \node[species] (v6) at (2,-2.5) {$v_6$};
    \node[species] (v7) at (0,-3.75) {$v_7$};
    \node[species] (v8) at (2,-3.75) {$v_8$}; 
    \node[reaction] (e12) at (1,-4.5) {$e_{12}$}; 
    \node[species] (v9) at (1,-5.5) {$v_9$}; 
    
    \node (d1) at (0,1.125) {};
    \node (d2) at (2,1.125) {};
    \node (d3) at (3.125,-2.5) {};
    \node (d4) at (1, -6.625) {};

    \draw [-latex,line width=0.5mm] (d1) edge node[left]{$e_1$} (v1); 
    \draw [-latex,line width=0.5mm] (d2) edge node[left]{$e_2$} (v2); 
    \draw [-latex,line width=0.5mm] (v1) edge node[left]{$e_3$} (v3); 
    \draw [-latex,line width=0.5mm] (v2) edge node[left]{$e_4$} (v4); 
    \draw [-latex,line width=0.5mm] (v3) edge node[below]{$e_5$} (v4); 
    \draw [-latex,line width=0.5mm] (v3) edge node[left]{$e_6$} (v5); 
    \draw [-latex,line width=0.5mm] (v4) edge node[left]{$e_7$} (v6); 
    \draw [-latex,line width=0.5mm] (v6) edge node[below]{$e_8$} (v5); 
    \draw [-latex,line width=0.5mm] (v5) edge node[left]{$e_9$} (v7); 
    \draw [-latex,line width=0.5mm] (v6) edge node[right]{$e_{10}$} (v8); 
    \draw [-latex,line width=0.5mm,out=180,in=180] (v7) edge node[left]{$e_{11}$} (v3); 
    \draw [-latex,line width=0.5mm] (v6) edge node[above]{$e_{13}$} (d3); 
    \draw [-latex,line width=0.5mm] (v9) edge node[left]{$e_{14}$} (d4); 
    \draw [-latex,line width=0.5mm] (v7) -- (e12); 
    \draw [-latex,line width=0.5mm] (v8) -- (e12); 
    \draw [-latex,line width=0.5mm] (e12) -- (v9);

   \draw [mybrightblue,ultra thick,-latex]  (3.2,-3) -- (4.2,-3); 
  
   \draw [mydarkred,dashed,line width=1]
  (-2,-5.2) -- (2.3,-5.2); 
   \draw [mydarkred,dashed,line width=1]
  (-2,-0.9) -- (2.3,-0.9); 
    \draw [mydarkred,dashed,line width=1]
  (-2,-0.9) -- (-2,-5.2); 
   \draw [mydarkred,dashed,line width=1]
  (2.3,-0.9) -- (2.3,-2.2); 
    \draw [mydarkred,dashed,line width=1]
  (1.6,-2.2) -- (2.3,-2.2); 
   \draw [mydarkred,dashed,line width=1]
  (1.6,-2.2) -- (1.6,-3.4); 
   \draw [mydarkred,dashed,line width=1]
  (2.3,-3.4) -- (1.6,-3.4); 
   \draw [mydarkred,dashed,line width=1]
  (2.3,-3.4) -- (2.3,-5.2);

    \node at (-1.5, -4.5) {  \scalebox{1.1} {\color{mydarkred} $\gamma$} };
  
    \node[species] (rv1) at (5,0) {$v_1$}; 
    \node[species] (rv2) at (7,0) {$v_2$};
    \node[species] (rv6) at (7,-2.5) {$v_6$};
    \node[species] (rv9) at (6,-5.5) {$v_9$}; 

    \node (rd1) at (5,1.125) {};
    \node (rd2) at (7,1.125) {};
    \node (rd3) at (8.125,-2.5) {};
    \node (rd4) at (6, -6.625) {};

    \draw [-latex,line width=0.5mm] (rd1) edge node[left]{$e_1$} (rv1); 
    \draw [-latex,line width=0.5mm] (rd2) edge node[left]{$e_2$} (rv2); 
    \draw [-latex,line width=0.5mm] (rv1) edge node[left]{$e_3$} (rv6); 
    \draw [-latex,line width=0.5mm] (rv2) edge node[left]{$e_4$} (rv6); 
    \draw [-latex,line width=0.5mm] (rv6) edge node[left]{$e_{10}$} (rv9); 
    \draw [-latex,line width=0.5mm] (rv9) edge node[left]{$e_{14}$} (rd4); 
    \draw [-latex,line width=0.5mm] (rv6) edge node[above]{$e_{13}$} (rd3); 

\end{tikzpicture}
\end{equation}
 We note that, under the reduction, the stoichiometries for reactions $e_3,\ e_4, \ e_{10}$ are changed from the original ones. In particular, 
 $e_{10}$, which is originally  monomolecular,  becomes non-monomolecular, $2 v_6 \rightarrow v_9$. To reproduce steady states of the original system, the rate $r_{10}$ is required to be the same as before the reduction; for example, in the mass-action kinetics, $r_{10}$ is given by $r_{10}(x_6)=k_{10} x_6$ rather than by $r_{10}(x_6)=k_{10} x_6^2$,  even after the reduction. 

\end{example}

\subsection{Reduction as a morphism of chemical reaction networks} \label{sec:red-mor}

The structure of a reduced network is characterized 
by the generalized Schur complement \eqref{eq:def-sp}. 
Here, let us show that this form arises if we consider 
a map between chemical reactions that shrink a subnetwork to a point. 
The morphisms of chemical reaction networks have been discussed, for example, in Ref.~\cite{cardelli2014morphisms}. 
Let us prepare some terminologies. 
\begin{definition}(Degenerate reactions)
  A reaction $e \in E$ is said to be 
  {\it degenerate in stoichiometry} 
  if $s(e)(v_i) = t(e)(v_i)$ for any $v_i \in V$. 
\end{definition}
A degenerate reaction is a trivial reaction since it does not change anything, and the removal of degenerate reactions 
does not affect the chemical properties of the reaction network. 
A degenerate reaction is represented as a $0-$column in the stoichiometric matrix. 

Let us slightly extend the definition of CRNs
for technical reasons. 
\begin{definition}[Generalized CRNs] 
  A {\it generalized chemical reaction network} $\Gamma$ 
  is a quadruple $\Gamma =(V,E,s,t)$,
  where 
  $V$ is a set of chemical species, 
  $E$ is a set of chemical reactions, 
  and $s$ and $t$ are source and target functions, 
  \begin{equation}
    s: E \to {\mathbb R}^{V}, 
    \quad 
    t: E \to {\mathbb R}^{V}. 
\end{equation}
\end{definition}
Compared with the previous definition of a CRN, 
$\mathbb N$ is replaced with real numbers, $\mathbb R$. 
We also call an element of ${\mathbb R}^V$ 
as a chemical complex. 
In the remainder of this paper, 
we will mean a generalized CRN 
when we write a CRN. 
This extension is needed because the reductions we consider do not necessarily preserve the integrality of the source and target functions. 
However, we note that the integrality 
can be always recovered by reactionwise rescaling, 
if the original $s$ and $t$ functions are valued in integers. 
\begin{definition}[CRN morphisms] 
A {\it CRN morphism} 
$\varphi$ 
from 
$\Gamma = (V, E,s,t)$ 
to 
$\Gamma' = (V', E',s',t')$
is a pair of maps,  
$(\varphi_0, \varphi_1)$, where 
\begin{eqnarray}
  \varphi_0 &:& {\mathbb R}^V \to {\mathbb R}^{V'} ,
  \\
  \varphi_1 &:& E \to E' , 
\end{eqnarray}
which we call a chemical complex map 
and a reaction map, respectively, 
such that the following diagrams commute, 
\begin{equation}
    \xymatrix{ 
      E
      \ar[r]^{ s } 
      \ar[d]^{ \varphi_1 } 
      & 
      {\mathbb R}^V
      \ar[d]^{ \varphi_0 } 
      & 
      E 
      \ar[r]^{t} 
      \ar[d]^{ \varphi_1 } 
      & 
      {\mathbb R}^V 
      \ar[d]^{ \varphi_0 } 
      \\
      E'
      \ar[r]^{ s' }
      & 
      {\mathbb R}^{V'}
            & 
      E' 
      \ar[r]^{t'} 
      & 
      {\mathbb R}^{V'} . 
    }  
\end{equation}  
\end{definition}



We introduce the matrix representation
of a chemical complex map and a reaction map, 
\begin{align}
  \varphi_0 (v_{i}) &=\sum_{i'} (\varphi_0)_{i i'} \, v_{i'} ,  \\
  \varphi_1 (e_{A}) &=\sum_{A'} (\varphi_1)_{A A'} \, e_{A'}. 
\end{align}
On the spaces of chains, 
a CRN morphism induces the following commutative diagram,
\begin{equation}
  \xymatrix{ 
    C_1(\Gamma )
    \ar[d]^{ \varphi_1 } 
    \ar[r]^{ \p_1 } 
    & 
    C_0(\Gamma ) 
    \ar[d]^{ \varphi_0 }  
    \\
    C_1(\Gamma' )
    \ar[r]^{ \p'_1 } 
    & 
    C_0(\Gamma' ) 
    }  
\end{equation}
where $\p'_1$ is the boundary operator on 
$C_1(\Gamma')$. 
Namely, 
\begin{equation}
    \varphi_0 \circ \p_1
      = 
    \p'_1 \circ \varphi_1 . 
    \label{eq:crn-hom-com}
\end{equation}
In terms of the matrix components, 
\begin{equation}
\sum_{i}
  S_{iA}  
  \, 
  (\varphi_0)_{i i' }
  = 
\sum_{A'}
  (\varphi_1)_{A A'} 
  \, 
  S'_{i'A'} . 
\end{equation}
We write this relation in the matrix form, 
\begin{equation}
  \varphi_0^T S = S' \varphi_1^T . 
  \label{eq:commutativity-mat}
\end{equation}

Now we are ready to discuss a morphism that corresponds to the reduction: 
\begin{definition}[Reduction morphisms] 
  We define a {\it reduction morphism} 
  from $\Gamma$ to $\Gamma'$, 
associated with a subnetwork $\gamma \subset \Gamma$, 
as a CRN morphism satisfying the following properties: 
\begin{enumerate} 
  \item The chemical complexes and reactions in 
$\Gamma \setminus \gamma$ are unchanged. 
  \item 
  All the chemical complexes in $\gamma$ are collapsed 
  into one chemical complex $\bar c$ in  $\Gamma \setminus \gamma$, 
  in such a way that image of all the reactions in $\gamma$ are degenerate in stoichiometry. 
\end{enumerate}
\end{definition}
%


Let us here show that a reduction morphism 
gives rise to the reduced stoichiometric matrix given by the generalized Schur complement \eqref{eq:def-sp}. 
We consider the matrix representation of a reduction morphism. 
From property 2 of reduction morphisms, 
the chemical complex map and the reaction map are both identity 
on $v_i \in V_{\Gamma \setminus \gamma}$ and 
$e_A \in E_{\Gamma \setminus \gamma}$, 
\begin{equation}
    \varphi_0 |_{\Gamma \setminus \gamma} = \bm 1, 
    \quad 
    \varphi_1 |_{\Gamma \setminus \gamma} = \bm 1. 
\end{equation}
Furthermore, 
we can always set $\varphi_1 |_{\gamma} = \bm 1$ without affecting the chemical properties, since degenerate reactions do nothing chemically. 
By arranging the rows and columns, 
the species and reaction maps 
of a reduction morphism can be written in the following form, %
\begin{equation}
  \varphi_0 
  = 
  \begin{pmatrix}
    F^T \\ 
    \bm 1 
  \end{pmatrix}, 
\quad 
\varphi_1 = \bm 1 , 
\label{eq:phi-0-1}
\end{equation}
where $F$ is some matrix (see examples later in this section). 
The explicit form of $F$ does not matter here\footnote{ 
In the case of a directed graph (i.e. a monomolecular reaction network), $F$ has one row whose elements are all $1$ and 
other components are all zero, 
\begin{equation}
  F = 
  \begin{pmatrix}
    0 & \cdots & 0 \\
    \vdots & & \vdots \\
    1 & \cdots & 1 \\
    \vdots & & \vdots \\
    0 & \cdots & 0 
  \end{pmatrix}. 
\end{equation}
}. 
By plugging Eq.~\eqref{eq:phi-0-1}
into the commutativity condition \eqref{eq:commutativity-mat}, 
we find that $S'$ is written as 
\begin{equation}
  S' = 
    \begin{pmatrix}
    F & \bm 1 
  \end{pmatrix}
  \begin{pmatrix}
    S_{11} & S_{12} \\
    S_{21} & S_{22}
  \end{pmatrix}
 = 
  \begin{pmatrix}
    F S_{11} + S_{21} & \,\,  F S_{12} + S_{22}
  \end{pmatrix}. 
\end{equation}
From the condition that 
the image of the reactions in $\gamma$ 
under a reduction morphism 
is degenerate reactions, we have 
\begin{equation}
  F S_{11} + S_{21} = \bm 0. 
  \label{eq:f-s11-s21}
\end{equation}
This condition implies 
\begin{equation}
\ker S_{11} \subset \ker S_{21}. 
\label{eq:kers11-in-kers21}
\end{equation}
This is because, if $\bm c \in \ker S_{11}$, 
Eq.~\eqref{eq:f-s11-s21} implies 
$S_{21} \bm c = \bm 0$, we have Eq.~\eqref{eq:kers11-in-kers21}. 
A generic solution to Eq.~\eqref{eq:f-s11-s21} for $F$ 
is written as 
\begin{equation}
  F = - S_{21} S_{11}^+  + D, 
\end{equation}
where $D$ is a matrix satisfying $D S_{11} = 0$. 
The stoichiometric matrix $S'$ can be now written as 
\begin{equation}
  S' =
\begin{pmatrix}    
  \bm 0 & \,\, S_{22} - S_{21} S_{11}^+ S_{12}  + D S_{12} 
\end{pmatrix}. 
\label{eq:sp-0-sc}
\end{equation}
The term $D S_{12}$ vanishes if and only if 
\begin{equation}
  \coker S_{11} \subset \coker S_{12} . 
  \label{cokers11-in-cokers12}
\end{equation}
Note that the combination $S_{21} S_{11}^+ S_{12}$ does not depend on the choice of the pseudo-inverse, as long as 
Eqs.~\eqref{eq:kers11-in-kers21} and \eqref{cokers11-in-cokers12} are satisfied. 
After removing degenerate reactions
from Eq.~\eqref{eq:sp-0-sc}, 
which does not change the chemical property of the system, 
we arrive at the 
generalized Schur complement \eqref{eq:def-sp} that we introduced earlier. 
In this way, a CRN morphism that shrinks 
a subnetwork gives rise to the reduced stoichiometric matrix given by the Schur complement. 

What we have just shown can be summarized 
as the following statement: 
\theorem{
Under a reduction morphism associated with $\gamma \subset \Gamma$, 
the stoichiometric matrix of $\Gamma'$ 
can be written uniquely (up to the changes of rows and columns) in the form 
\begin{equation}
  S' =
\begin{pmatrix}
  \bm 0 &\,\, S_{22} - S_{21} S_{11}^+ S_{12}   
\end{pmatrix},
\label{eq:sp-0-schur}
\end{equation}
if and only if the following conditions are satisfied
for the subnetwork $\gamma$, \footnote{
We note that the condition \eqref{eq:kers11-in-kers21} 
is equivalent to the absence of 
``emergent cycles'', 
which can be written as $\widetilde c(\gamma) =0$
in the notation of Sec.~\ref{sec:red-buff}. 
We show this equivalence
in Sec.~V below Eq.~\eqref{eq:com-2}. 
The condition \eqref{cokers11-in-cokers12}
implies the absence of ``emergent conserved charges,''
which can be written as $\widetilde d (\gamma)=0$, 
but the converse is not true. 
We discuss more on the meaning of 
emergent cycles and conserved charges 
in Appendix~\ref{sec:app-cycles}. 
}
\begin{equation}
    \ker S_{11} \subset \ker S_{21}, 
    \quad 
    \coker S_{11} \subset \coker S_{12}. 
    \label{eq:no-emergent-cond}
\end{equation}
}

Conversely, for a given output-complete 
subnetwork such that  Eq.~\eqref{eq:no-emergent-cond} is satisfied, 
we can construct the following  reduction map by 
\begin{equation}
  \varphi_0 
  = 
  \begin{pmatrix}
    (- S_{21} S^{+}_{11} + D)^T 
    \\ 
    \bm 1 
  \end{pmatrix}, 
\quad 
\varphi_1 = \bm 1 , 
\label{eq:phi0-phi1-red}
\end{equation}
where $D$ is a matrix satisfying $D S_{11} = \bm 0$. 
The commutativity condition reads 
\begin{equation}
    S' = 
    \begin{pmatrix}
      S_{21} (\bm 1 - S_{11}^+ S_{11}) 
      & 
      \,\, 
      S_{22} - S_{21} S^{+}_{11} S_{12} + D S_{12} 
    \end{pmatrix}. 
\end{equation}
Since $(\bm 1 - S_{11}^+ S_{11})$ is a projection matrix 
to $\ker S_{11}$
and 
we have $\ker S_{11} \subset \ker S_{21}$ by assumption, 
the matrix $S_{21} (\bm 1 - S_{11}^+ S_{11})$ is a zero matrix. Furthermore, $D S_{12} = \bm 0$ by the assumption 
$\coker S_{11} \subset \coker S_{12}$. 
Thus, we arrive at the reduced stoichiometric matrix of the form \eqref{eq:sp-0-schur}. 
\\

Below, let us illustrate reduction morphisms in simple examples. 

\begin{example}
Let us consider the following closed directed graph. 
We consider the morphism, 
which can be pictorially represented as 
\begin{equation}
\begin{tikzpicture}
  
    \node[species] (D1) at (-1.5,0) {$v_1$} ;
    \node[species] (x)  at (   0,0) {$v_2$} ;
    \node[species] (y)  at ( 1.5,0) {$v_3$} ;
    \node[species] (D2) at (   3,0) {$v_4$} ;
  
    \draw[-latex, line width=0.5mm] (D1) edge node[below] {$e_1$} (x);
    \draw[-latex, line width=0.5mm] (x)  edge node[below] {$e_2$} (y);
    \draw[-latex, line width=0.5mm] (y)  edge node[below] {$e_3$} (D2);

     \draw[mydarkred, dashed,line width=1] (-0.3,-1) rectangle (1.2,1);
     \node at (0, 0.7) {  \scalebox{1.1} {\color{mydarkred} $\gamma$} };
     
     \draw [mybrightblue,ultra thick,-latex] (4.00,0) -- (5.00,0); 

    \node[species] (y1) at   (6,0) {$v_1'$}; 
    \node[species] (y2) at (7.5,0) {$v_3'$}; 
    \node[species] (y3) at   (9,0) {$v_4'$}; 
    
    \draw[-latex, line width=0.5mm] (y1) edge node[below] {$e_1'$} (y2);
    \draw[-latex, line width=0.5mm] (y2) edge node[below] {$e_3'$} (y3);
    \draw[-latex,line width=0.5mm]
    (y2) to [out=135,in=45,looseness=5] node[above] {$e_2'$} (y2);
    
  \end{tikzpicture}
\end{equation}
The reduction shrinks the vertices in $\gamma$ 
to a single complex, $\bar c=v_3'$. 
The species and reactions are mapped as 
\begin{align}
  \varphi_0(v_1) &=v_1',  
  \quad \varphi_0(v_2) =v_3' , 
  \quad \varphi_0(v_3) =v_3' ,
  \quad \varphi_0(v_4) =v_4' ,
  \\
  \varphi_1(e_1) &=e_1' ,
  \quad \varphi_1(e_2) =e_2' ,
  \quad \varphi_1(e_3) =e_3' . 
\end{align}
In the matrix form, 
\begin{equation}
  \varphi_0
  =
  \begin{blockarray}{cccc}
    && \\
    \begin{block}{c(ccc)}
     {\color{mydarkred}v_2} \quad\, &0&1&0\\
     v_1\quad\, &1&0 &0\\
     v_3\quad\, &0&1 &0\\
     v_4\quad\, &0&0 &1\\
    \end{block} 
     & {v_1'} & v_3' & v_4' 
  \end{blockarray}
  \quad , 
  \quad 
  \varphi_1 = \bm 1_3 . 
\end{equation}
Using the consistency condition \eqref{eq:commutativity-mat}, 
the stoichiometric matrix of $\Gamma'$ can be written as 
\begin{equation}
S' =   
\begin{pmatrix}
  0&1&0&0 \\
  1&0&1&0 \\
  0&0&0&1 
\end{pmatrix}
\begin{pmatrix}
  -1 & 1 & 0 \\
  0 & -1 & 0 \\
  1 & 0 & -1 \\
  0 & 0 & 1 
\end{pmatrix}
\bm 1_3
=
\begin{blockarray}{cccc}
  && \\
  \begin{block}{c(ccc)}
   v_1' \quad\, &0 &-1&0\\
   v_3' \quad\, &0 &1&-1\\
   v_4' \quad\, &0 &0&1\\
  \end{block} 
   & e_2' & e_1' &e_3'
\end{blockarray} \,\,\,\,. 
\end{equation}
This is indeed of the form~\eqref{eq:sp-0-schur}. 


\end{example}

\begin{example}
$\Gamma = (\{ v_1,v_2,v_3\}, \{ e_1, e_2 \})$. 
We consider the reduction of $\gamma = (\{v_1 \}, \{ e_1\})$.
The corresponding reduction morphism is visualized as 
\begin{equation}
\begin{tikzpicture}[bend angle=45] 

    \draw[mydarkred, dashed,line width=1] (-0.5,0) rectangle (2.5,2);
    \node at (0, 0.2) {  \scalebox{1.1} {\color{mydarkred} $\gamma$} };

    \node[species] (v1) at (0,1) {$v_1$}; 
    \node[species] (v2) at (3,2) {$v_2$};
    \node[species] (v3) at (3,0) {$v_3$}; 

    \node[reaction] (e1) at (1.5,1) {$e_1$}; 

    \draw[-latex,line width=0.5mm] (v1) -- (e1);
    \draw[-latex,line width=0.5mm] (e1) -- (v2);
    \draw[-latex,line width=0.5mm] (e1) -- (v3);
    
    \draw[-latex,line width=0.5mm] (v2) edge node[right] {$e_2$} (v3); 
    \draw [mybrightblue,ultra thick,-latex] (4,1) -- (5,1); 

    \node[species] (v2p) at (7.5,2) {$v'_2$};
    \node[species] (v3p) at (7.5,0) {$v'_3$}; 

    \node[reaction] (e1p) at (6,1) {$e_1'$}; 

    \draw[-latex,line width=0.5mm] (v2p) edge node[right] {$e_2'$} (v3p); 

    \draw[-latex,line width=0.5mm,out=-120,in=10] (v2p) edge node[right] {} (e1p); 
    \draw[-latex,line width=0.5mm,out=120,in=-10] (v3p) edge node[right] {} (e1p); 

    \draw[-latex,line width=0.5mm,out=80,in=180] (e1p) edge node[right] {} (v2p); 
    \draw[-latex,line width=0.5mm,out=-80,in=180] (e1p) edge node[right] {} (v3p); 

\end{tikzpicture}
\end{equation}
The chemical complex map and the reaction map are given by 
\begin{align}
    \varphi_0 (v_1) &= v'_2 + v'_3 , 
    \quad 
    \varphi_0 (v_2) = v'_2 , 
    \quad 
    \varphi_0 (v_3) = v'_3  , \\
    \varphi_1 (e_1) &= e_1', \quad 
    \varphi_1 (e_2) = e_2'. 
\end{align}
The action $\varphi_0(v_1)$ is determined 
so that the image of $e_1$ be degenerate in stoichiometry. 
The image of $e_1$ is the following degenerate reaction, 
\begin{equation}
    e_1' : v'_2 + v'_3 \to v'_2 + v'_3 . 
\end{equation}
In the matrix form, 
\begin{equation}
  \varphi_0
  =
  \begin{blockarray}{ccc}
    && \\
    \begin{block}{c(cc)}
     {\color{mydarkred}v_1} \quad\, &1&1\\
     v_2\quad\, &1&0 \\
     v_3\quad\, &0&1 \\
    \end{block} 
     & {v_2'} & v_3'  
\end{blockarray}\,\,\, , 
\quad 
\varphi_1 = \bm 1_2. 
\end{equation}
The stoichiometric matrix of $\Gamma'$ is written as 
\begin{equation}
    S' = 
    \begin{pmatrix}
      1 & 1 & 0 \\
      1 & 0 & 1 
    \end{pmatrix}
    \begin{pmatrix}
      -1 & 0 \\
      1 & -1 \\
      1 & 1 
    \end{pmatrix}
    = 
    \begin{pmatrix}
      0 & -1 \\
      0 & 1 
    \end{pmatrix} . 
\end{equation}

\end{example}

\section{Reduction and buffering structures}\label{sec:red-buff}

We here explore the close connection between 
the structural sensitivity analysis and 
the reduction method we introduced in the previous section. 
The structural sensitivity analysis works as a guide 
to identify 'unimportant' subnetworks. 
In this section, we present a key result of this paper: 
we show that, when a subnetwork is a buffering structure, 
the reduced network has exactly the same steady-state solution 
as the original reaction network. 
The proof will be completed in 
Sec.~\ref{sec:red-buff-th}. 

The structure of this section is as follows:
In Sec.~\ref{sec:red-buff-decom}, 
we show that 
the influence index allows for a decomposition 
in terms of the numbers of cycles and conserved charges. 
In Sec.~\ref{sec:red-buff-les}, 
we construct a short exact sequence of the chain complexes for a subnetwork $\gamma \subset \Gamma$, under some conditions. 
This short exact sequence automatically derives a long exact sequence of homology groups. 
Using this exact sequence, we can describe the relationship among cycles and conserved charges of $\gamma$, $\Gamma$ and $\Gamma'$.
In Sec.~\ref{sec:red-buff-th}, 
we show the main result, that is, that the steady state of the reduced network is the same as the one of the original network, under some conditions.
In the proof, the long exact sequence prepared in subsection B plays an important role.
In Sec.~\ref{sec:order}, 
we study the situation where we have nested subnetworks $\gamma' \subset \gamma \subset \Gamma$.
In this case, we have a subnetwork $\gamma/ \gamma' \subset \Gamma/\gamma'$.
We will show that the reduced network $\Gamma/\gamma$ is the same as $(\Gamma/\gamma')/(\gamma/\gamma')$.
This ensures that the eventual network does not depend on the ordering of the reductions.

\subsection{Decomposition of the influence index} \label{sec:red-buff-decom}

As we detailed in Sec.~\ref{sec:crn}, 
steady-state properties are captured by 
cycles and conserved charges, 
which are the elements of homology groups. 
In this subsection, 
we study their meaning in more detail, 
and 
discuss the relation between 
the influence index $\lambda (\gamma)$ 
and cycles/conserved charges in $\gamma$, $\Gamma$, and $\Gamma'$. 
We introduce a decomposition of the influence index
in terms of the spaces of cycles/conserved charges of certain classes. 

We first note that the index can be written as 
\begin{equation}
  \begin{split}
  \lambda(\gamma)
  &=
  - |V_\gamma| + |E_\gamma| - |(\ker S)_{{\rm supp}\, \gamma}| 
  + |P^0_\gamma (\coker S)| \\
  & = 
  |\ker S_{11}| - |(\ker S)_{{\rm supp}\, \gamma}| 
  + |P^0_\gamma (\coker S)|
  - |\coker S_{11}| ,
\end{split}
\label{eq:lambda-decom-4}
\end{equation}
where we used Eq.~(\ref{eq:euler-sub}).
With the first two terms, we define 
\begin{equation}
  \widetilde c (\gamma) \coloneqq |\ker S_{11}| - |(\ker S)_{{\rm supp}\, \gamma}| . 
  \label{eq:c-til-def}
\end{equation}
The number $\widetilde c (\gamma)$ is a nonnegative integer, 
because there is an injective map 
from $(\ker S)_{{\rm supp}\, \gamma}$ to $\ker S_{11}$. 
Indeed, an element of $(\ker S)_{{\rm supp}\, \gamma}$ is written as 
$
\bm c = 
\begin{pmatrix}
  \bm c_1 \\
  \bm 0
\end{pmatrix}
$
satisfying the condition 
\begin{equation}
  \begin{pmatrix}
    S_{11} & S_{12}\\
    S_{21} & S_{22}
  \end{pmatrix}
  \begin{pmatrix}
    \bm c_1 \\
    \bm 0
  \end{pmatrix}
  =
  \begin{pmatrix}
    S_{11}\bm c_1 \\
    S_{21}\bm c_1 
  \end{pmatrix}
  = \bm 0. 
  \label{eq:sc1}
\end{equation}
Consider an injective map $\bm c \mapsto \bm c_1$. 
Equation~(\ref{eq:sc1}) indicates 
that the image of this map is always included in $\ker S_{11}$, 
\begin{equation}
  (\ker S)_{{\rm supp}\, \gamma} 
  \ni \bm c \mapsto \bm c_1 
  \in \ker S_{11} . 
\end{equation}
Thus, we have $\widetilde c(\gamma)\ge 0$. 

Now let us turn to the latter two terms in Eq.~(\ref{eq:lambda-decom-4}). 
Note that \footnote{
  For a vector space $V$ and 
  a projection matrix $P$, 
  \begin{equation}
    |P V| = 
    |\{ P \bm v \,|\, \bm v \in V  \}|
    = 
    |V / ( {\rm im}\, \bar P \cap V )|
    =
    |V| - |  {\rm im}\, \bar P \cap V |, 
  \end{equation}
  where $\bar P \coloneqq 1 - P$. 
} 
\begin{equation}
  |P^0_\gamma (\coker S)| 
  = 
  |\coker S|
  - 
  |{\rm im}\, \bar P^0_\gamma  \cap \coker S|, 
  \label{eq:p-cok-s}
\end{equation}
where $\bar P^0_\gamma \coloneqq 1 - P^0_\gamma$. 
The second term 
of the RHS of Eq.~(\ref{eq:p-cok-s}) is the number of 
the conserved charges of $\Gamma$ supported 
in $\Gamma \setminus \gamma$, 
\begin{equation}
\bar d' (\gamma) \coloneqq  |{\rm im}\, \bar P^0_\gamma  \cap \coker S| 
=
|\bar D'(\gamma)|, 
\end{equation}
where the space $\bar D'(\gamma)$ is given by 
\begin{equation}
    \bar D'(\gamma)
    \coloneqq 
    \left\{ 
      \begin{pmatrix}
        \bm d_1 \\
        \bm d_2
      \end{pmatrix}
      \in \coker S 
      \, \middle| \, \bm d_1 = \bm 0 
      \right\} . \label{eq:Dbp-gamma}
\end{equation}
We divide the space $\coker S$ 
according to the following distinctions: 
\begin{itemize}
    \item Projection to $\gamma$ is also a conserved charge in $\gamma$. 
    \item Projection to $\gamma$ is not a conserved charge in $\gamma$. 
\end{itemize}
Correspondingly to the two distinctions above, 
we introduce the following spaces, \footnote{
Note that we can regard the element of $D(\gamma)$ as a vector in $\coker S$ 
by the isomorphism 
$X(\gamma) / \bar D' (\gamma) \cong X(\gamma) \cap [\bar D'(\gamma)]^\perp$. 
The isomorphism in Eq.~\eqref{eq:Dp-gamma}  
can be derived as follows: 
\[
(\coker S) / D(\gamma)
=
(\coker S) /
\left( X(\gamma) \cap [\bar D'(\gamma)]^\perp \right) 
\cong 
(\coker S) \cap \left( X(\gamma) \cap [\bar D'(\gamma)]^\perp \right)^\perp 
= 
(\coker S) \cap \left( [X(\gamma)]^\perp + \bar D'(\gamma) \right) 
= 
(\coker S) /  X(\gamma) \oplus \bar D'(\gamma) ,
\] 
where we used the relations
$V / W \cong V \cap W^\perp$, 
$(V \cap W)^\perp = V^\perp + W^\perp$ 
for vector spaces $W \subset V$, 
and 
$[X(\gamma)]^\perp + \bar D' (\gamma) 
= [X(\gamma)]^\perp \oplus \bar D' (\gamma)$ 
since $[X(\gamma)]^\perp \cap \bar D'(\gamma) = \bm 0$. 
}
\begin{align}
  D(\gamma)
  &\coloneqq X(\gamma)  / \bar D' (\gamma) , 
    \label{eq:D-gamma} \\
  D'(\gamma)
  &\coloneqq 
   \coker S / D(\gamma) 
   \cong 
    (\coker S) / X(\gamma)  \oplus \bar D' (\gamma) 
 ,\label{eq:Dp-gamma} 
\end{align}
where we defined 
\begin{equation}
X (\gamma) 
\coloneqq 
  \left\{ 
    \begin{pmatrix}
      \bm d_1 \\
      \bm d_2
    \end{pmatrix}
    \in \coker S 
    \, \middle| \,
    \bm d_1 \in \coker S_{11}
    \right\}. 
\end{equation}
Namely, we have the following decomposition of $\coker S$, 
\begin{equation}
    \coker S 
    \cong D(\gamma) \oplus D'(\gamma) 
    \cong D(\gamma) \oplus (\coker S)/X(\gamma) 
    \oplus \bar D'(\gamma) . 
    \label{eq:cokers-decom}
\end{equation}
The dimension of $\coker S$ is written as 
\begin{equation}
  |\coker S| = d(\gamma) + d'(\gamma)  , 
\end{equation}
where 
$d(\gamma)\coloneqq |D(\gamma)|$, and $d'(\gamma)\coloneqq |D'(\gamma)|$. 
We now have the expression 
\begin{equation}
  |P^0_\gamma (\coker S)| 
  = d (\gamma)
  + d'(\gamma) - \bar d'(\gamma). 
  \label{eq:p-cok-s-d}
\end{equation}

To rewrite $|\coker S_{11}|$, we introduce the following spaces, 
%
\begin{align}
  D_{11}(\gamma)
  &\coloneqq 
  \left\{ 
    \bm d_1 \in \coker S_{11} 
  \, \middle| \,
  {}^{\exists} \bm d_2 {\text{ such that }} 
  \begin{pmatrix}
    \bm d_1 \\
    \bm d_2
  \end{pmatrix}
  \in \coker S 
  \right\} ,
   \\
  \widetilde D(\gamma)
  &\coloneqq \coker S_{11} / D_{11}(\gamma) 
   . 
\end{align}
The elements of $D_{11}(\gamma)$ are conserved charges in $\gamma$ 
that can be extended to a global conserved charge, 
while those in $\widetilde D(\gamma)$ are emergent conserved charges
that are only conserved in the subnetwork $\gamma$.

Observe that $D (\gamma) \cong D_{11}(\gamma)$. 
Indeed, there is a surjection 
\begin{equation}
X(\gamma) \ni \begin{pmatrix}
  \bm d_1  \\
  \bm d_2
\end{pmatrix} 
\to \bm d_1 \in  D_{11}(\gamma) . 
\end{equation}
The kernel of this map is $\bar D' (\gamma)$, 
and the induced map $D(\gamma) = X(\gamma) / \bar D' (\gamma) \to D_{11}(\gamma)$ 
is an isomorphism. 
%
%
%
Thus, 
$|D_{11}(\gamma)| = |D(\gamma)| = d (\gamma)$ and 
we have the decomposition, 
\begin{equation}
  |\coker S_{11}| = d(\gamma) + \widetilde d (\gamma) , 
  \label{eq:d-cok-s11}
\end{equation}
where $\widetilde d(\gamma) \coloneqq |\widetilde D(\gamma)|$ is the number of charges 
that cannot be obtained as the projections of conserved charges in $\Gamma$. 

Combining Eqs.~(\ref{eq:c-til-def}), (\ref{eq:p-cok-s-d}), and (\ref{eq:d-cok-s11}), 
we find that the influence index is written as  
\begin{equation}
  \lambda (\gamma)
  = \widetilde c (\gamma) + d_l (\gamma)  - \widetilde d (\gamma) ,
  \label{eq:lambda-decom}
\end{equation}
where we defined 
\begin{equation}
d_l (\gamma)
\coloneqq
d' (\gamma) - \bar d' (\gamma)
=
|(\coker S) / X(\gamma)| 
.   
\end{equation}
The decomposition (\ref{eq:lambda-decom}) is the central result of this subsection. 
Each term of Eq.~(\ref{eq:lambda-decom}) allows for 
the following intuitive interpretations: 
\begin{itemize}
\item The first term 
$\widetilde c(\gamma)= |\ker S_{11} / (\ker S)_{{\rm supp}\, \gamma}|$ 
represents the number of {\it emergent cycles} in $\gamma$. 
Namely, $\widetilde c(\gamma)$ is the number of cycles in $\gamma$, which are not cycles in $\Gamma$. 
\item The second term 
$d_l (\gamma)=|(\coker S) / X(\gamma)| $
is the dimension of the space of {\it lost conserved charges} 
by focusing on $\gamma$, namely those that are conserved in $\Gamma$ 
but their projection to $\gamma$ are not. 
%
%
\item The third term $\widetilde d(\gamma) = |\widetilde D(\gamma)|$ 
is the number of {\it emergent conserved charges} in $\gamma$. 
It is the number of conserved charge in $\gamma$
that cannot be extended to conserved charges in $\Gamma$. 
The meaning becomes evident if we note 
that $\widetilde D(\gamma)$ is isomorphic to 
the space 
that consists of $\bm d_1 \in \coker S_{11}$ 
that are orthogonal to the vectors that can be extended to 
conserved charges in $\coker S$. 
\end{itemize}
For more detailed explanations with examples, 
see Appendix~\ref{sec:app-interpretations}. 
In Appendix~\ref{sec:app-embedding}, 
we show that the decomposition \eqref{eq:lambda-decom}
can be visually understood from the structure of A-matrices.

An element of $D'(\gamma)$ can be regarded as a conserved charge in $\Gamma'$ via an injective map $\bar \varphi_0 : D'(\gamma) \to \coker S'$, which we will construct as follows.
We define 
$\bar \varphi_0$  
on each component of $D'(\gamma) = (\coker S)/X(\gamma) \oplus \bar{D}'(\gamma)$.
The map $\bar{D}'(\gamma) \to \coker S'$ is given by 
$
\begin{pmatrix}
  \bm 0 \\
  \bm d_2
\end{pmatrix}
\mapsto {\bm d_2},
$ 
which is obviously injective, and is well-defined since ${\bm d_2}$ belongs to $\coker S'$ by 
\begin{equation}
    \bm d_2^T S' 
    = 
    \bm d_2^T (S_{22} -  S_{21} S_{11}^+ S_{12} )
  = \bm 0.
\end{equation}
Note that the second equality follows from
$\bm d_2^T S_{22} = \bm 0$
and 
$\bm d_2^T S_{21} = \bm 0$, 
which hold by the assumption $\begin{pmatrix}
  \bm 0 \\
  \bm d_2
\end{pmatrix} \in \bar{D}'(\gamma)$.
Next, we construct an injection $(\coker S) / X(\gamma) \to \coker S'$.
For 
$
\left[ 
\begin{pmatrix}
  \bm d_1 \\
  \bm d_2
\end{pmatrix}
\right]
\in( \coker S)/X(\gamma)
$, 
we can always choose a representative 
$
\begin{pmatrix}
  \bm d_1 \\
  \bm d_2
\end{pmatrix}
$
such that $\bm d_1 \in (\coker S_{11})^\perp$
and 
$
\begin{pmatrix}
  \bm d_1 \\
  \bm d_2
\end{pmatrix}
\in [\bar D'(\gamma)]^\perp . 
$
Using $\bm d^T S = \bm 0$, 
\begin{equation}
  \begin{split}
  \bm d_2^T S' 
  &= 
  \bm d_2^T S_{22}
  - 
  \bm d_2^T S_{21} S_{11}^+ S_{12}  \\
  &=
  - \bm d_1^T S_{12}
  +
  \bm d_1^T S_{11} S_{11}^+ S_{12}  \\
  &= 
  - 
  \bm d_1^T 
  (1 - S_{11} S_{11}^+)
  S_{12} . 
\end{split}
\label{eq:d2sp}
\end{equation}
Since $(1-S_{11} S_{11}^+)$ is a projection matrix to $\coker S_{11}$, 
we have 
$\bm d_1^T (1-S_{11} S_{11}^+) = \bm 0$, 
and thus $\bm d_2 \in \coker S'$. 
This defines an injective map 
$(\coker S)/X(\gamma) \to \coker S'$. 

Thus, we have obtained a map $\bar \varphi_0 : D'(\gamma) = (\coker S)/X(\gamma) \oplus \bar{D}'(\gamma) \to \coker S'$. 
To see the injectivity of $\bar \varphi_0$, since it is injective on each component, it suffices to show that the intersection of the images of $(\coker S)/X(\gamma)$ and $\bar D'(\gamma)$ 
by $\bar \varphi_0$ is zero.
To show this, let 
us pick an arbitrary element 
$\bm d_2 \in \bar \varphi_0 ( (\coker S)/X(\gamma) ) \cap \bar\varphi_0 (\bar D'(\gamma))$. 
It suffices to show that $\bm d_2 = \bm 0$.
Since $\bm d_2$ comes from $(\coker S) /  X(\gamma)$ by assumption, 
there is an element 
$
\begin{pmatrix}
  \bm d_1 \\
  \bm d_2
\end{pmatrix}
\in \coker S
$
such that $\bm d_1 \in (\coker S_{11})^\perp$. 
Since $\bm d_2$ is also in the image of $\bar D'(\gamma)$, 
we have $\bm d_1=\bm 0$, 
and 
$\begin{pmatrix}
  \bm 0 \\
  \bm d_2
\end{pmatrix} \in \bar D'(\gamma)^\perp$. 
This means that $\bm d_2 = \bm 0$ as desired.

We also define $\bar \varphi_0$ for the elements of 
$D(\gamma)$ by $\bar \varphi_0 |_{D(\gamma)}=\bm 0$. 
Hence, $\bar \varphi_0$ is now defined as a map 
from $\coker S$ to $\coker S'$,
and its kernel and coimage are given by 
$\ker \bar \varphi_0 = D(\gamma)$ and 
${\rm coim}\, \bar \varphi_0 = D'(\gamma)$.

In general, the conserved charges in $\Gamma'$ 
consists of those obtained from the conserved charges of $\Gamma$ 
and emergent ones, 
\begin{equation}
 | \coker S' |  = d' (\gamma )+ \widetilde d'(\gamma) , 
\end{equation}
where 
$\widetilde d' (\gamma) \coloneqq | (\coker S') / {\rm im\,} \bar\varphi_0 |$
indicates the number of emergent conserved charges in $\Gamma'$.

\subsection{Long exact sequence of a pair of chemical reaction networks} \label{sec:red-buff-les}

The reduction of a reaction network naturally induces 
the reduction of (co)homology groups, 
which are the steady-state characteristics of reaction networks. 
Suppose that we have a reaction network $\Gamma$, 
and choose a subnetwork $\gamma \subset \Gamma$, 
and reduce it to obtain $\Gamma' = \Gamma/ \gamma$. 
The inter-relations of homologies of $\gamma$, $\Gamma$, 
and $\Gamma'$, 
can be systematically treated 
using a long exact sequence 
for a pair of chemical reaction networks, 
which we define momentarily. 
We consider the following short exact sequence of chain complexes, 
\begin{equation}
  \xymatrix{
    &
    0 
    \ar[d]
    &
    0 \ar[d]
    & 
    0 \ar[d]
    \\
   0 \ar[r] &
   C_1 (\gamma)
   \ar[r]^{\psi_1 }
   \ar[d]^{\p_\gamma } 
   &
   C_1 (\Gamma) 
   \ar[r]^{\varphi_1}
   \ar[d]^{\p }
   & 
   C_1(\Gamma' )
   \ar[r]
   \ar[d]^{\p' }
   & 0
   \\
0
 \ar[r] 
&  
C_0(\gamma)
   \ar[r]^{\psi_0 }
   \ar[d]   
   & 
C_0 (\Gamma) 
  \ar[r]^{\varphi_0}
  \ar[d] 
   &
C_0 (\Gamma' )
   \ar[r]
   \ar[d]
   &
   0
   \\
   & 
   0 
   & 0 
   & 0 
 }
 \label{eq:diag-b}
\end{equation}
where 
the space of chains in $\Gamma'$ is given by 
$C_n(\Gamma') \coloneqq C_n(\Gamma) / C_n (\gamma)$. 
In the linear-algebra notations, 
the boundary maps are given by the following 
multiplications of matrices on vectors, 
\begin{equation}
  \p_\gamma: \bm c_1 \mapsto S_{11} \bm c_1, 
  \quad 
  \p: \bm c = 
  \begin{pmatrix}
    \bm c_1 \\
    \bm c_2
  \end{pmatrix}
  \mapsto 
  S \bm c, 
  \quad 
  \p': \bm c_2 \mapsto S' \bm c_2. 
\end{equation}
We define the horizontal maps by 
\begin{equation}
  \psi_1: 
  \bm c_1 \mapsto 
  \begin{pmatrix}
    \bm c_1 \\ 
    \bm 0
  \end{pmatrix}, 
  \quad 
  \varphi_1: 
  \begin{pmatrix}
    \bm c_1 \\ 
    \bm c_2
  \end{pmatrix}
  \mapsto \bm c_2 , 
\end{equation}
\begin{equation}
  \psi_0: 
  \bm d_1 \mapsto 
  \begin{pmatrix}
    \bm d_1 \\ 
    S_{21} S^+_{11} \bm d_1
  \end{pmatrix}, 
  \quad 
  \varphi_0: 
  \begin{pmatrix}
    \bm d_1 \\ 
    \bm d_2
  \end{pmatrix}
  \mapsto \bm d_2 - S_{21}S^+_{11} \bm d_1 . 
\end{equation}
The exactness of the rows of Eq.~(\ref{eq:diag-b}) can be checked easily. 
Note that $\varphi$ is the reduction morphism \eqref{eq:phi0-phi1-red} 
followed by the removal of degenerate reactions. 
One can check that 
the diagram (\ref{eq:diag-b}) commutes when the following condition is satisfied: 
\begin{equation}
  S_{21} (1 - S_{11}^+ S_{11}) \bm c_1 = \bm 0, 
  \label{eq:com-1}
\end{equation}
where $\bm c_1 \in C_1(\gamma)$. 
The matrix $ (1 - S_{11}^+ S_{11})$ 
is the projection matrix to $\ker S_{11}$, 
and Eq.~(\ref{eq:com-1}) is equivalent to 
\begin{equation}
  \ker S_{11} \subset \ker S_{21}  . 
  \label{eq:com-2}
\end{equation}
This condition is the same as the condition 
that an arbitrary term in Eq.~(\ref{eq:rate-red-1}) vanishes.

The condition (\ref{eq:com-2}) 
has a natural interpretation in terms of cycles: 
 Eq.~(\ref{eq:com-2}) is equivalent to $\widetilde c(\gamma) = 0$, 
namely the absence of emergent cycles, 
which can be checked as follows. 
When $\widetilde c (\gamma) = |\ker S_{11} / (\ker S)_{{\rm supp}\, \gamma}| = 0$, 
any $\bm c_1\in \ker S_{11}$ is a cycle in $\Gamma$ 
by an inclusion to $C_1(\Gamma)$. 
Thus, $\bm c_1$ satisfies 
\begin{equation}
  \begin{pmatrix}
    S_{11} & S_{12} \\
    S_{21} & S_{22}
  \end{pmatrix}
  \begin{pmatrix}
    \bm c_1 \\
    \bm 0
  \end{pmatrix}
  = \bm 0. 
  \label{eq:sc1-eq-0}
\end{equation}
This implies $S_{21} \bm c_1 = \bm 0$ 
and we have $\ker S_{11} \subset \ker S_{21}$. 
Conversely, when $\ker S_{11} \subset \ker S_{21}$
is true, 
the map 
$
\ker S_{11} \ni \bm c_1 \mapsto \begin{pmatrix}
  \bm c_1 \\
  \bm 0
\end{pmatrix}
\in (\ker S)_{{\rm supp}\, \gamma} 
$
is a bijection.
This implies $\widetilde c (\gamma) = 0$. 
Thus, we have shown that the diagram (\ref{eq:diag-b}) commutes 
if and only if $\gamma$ has no emergent cycle.

Applying the snake lemma to Eq.~(\ref{eq:diag-b}), 
we obtain a long exact sequence, \\
\begin{equation}
  \xymatrix{
   0 \ar[r] &
   H_1(\gamma) 
   \ar[r]^{\psi_1} 
   &
   H_1(\Gamma) 
   \ar[r]^{\varphi_1}
   &
   H_1(\Gamma')
   \ar[r]^{\delta_1\quad } &  
   H_0(\gamma) 
   \ar[r]^{\bar{\psi}_0}
   & 
   H_0(\Gamma) 
   \ar[r]^{\bar{\varphi}_0} 
   &
   H_0 (\Gamma') 
   \ar[r]
   &
   0 , 
 }  
 \label{eq:les}
\end{equation}
\\ 
where $\bar \psi_0$ and $\bar \varphi_0$ 
are induced maps of $\psi_0$ and $\varphi_0$. 
The map $\delta_1: H_1(\Gamma') \to H_0(\gamma)$ 
is called the connecting map. 
For a given $\bm c_2 \in H_1 (\Gamma')$, 
the connecting map is given by \footnote{
The connecting map is identified as follows. 
An element $\bm c_2 \in H_1(\Gamma')$, 
can be included in $C_1(\Gamma')$. 
$\varphi_1$ is surjective and there exists 
$
\bm c = 
\begin{pmatrix}
  \bm c_1 \\
  \bm c_2 
\end{pmatrix}$ 
such that $\varphi_1 (\bm c) = \bm c_2$. 
From the commutativity of the diagram (\ref{eq:diag-b}), 
we have $\varphi_0 (S\bm c) = S' \bm c_2 = \bm 0$. 
From the exactness of the row of Eq.~(\ref{eq:diag-b}), 
there exists $\bm d_1 \in C_0(\gamma)$ 
such that $\psi_0 (\bm d_1) = S \bm c$. 
We obtain $[\bm d_1] \in H_0(\gamma)$ by identifying the differences in ${\rm im\,} S_{11}$. 
More explicitly, 
$[\bm d_1] = [S_{11} \bm c_1 + S_{12} \bm c_2]
= [S_{12} \bm c_2]$. 
The mapping $\bm c_2 \mapsto [S_{12} \bm c_2]$ is the connecting map. 
The well-definedness of the map 
(indifference to the choice of $\bm c_1$) is obvious in this expression. 
} 
\begin{equation}
\delta_1: 
\bm c_2 \mapsto 
[S_{12} \bm c_2] 
\in H_0 (\gamma) = \coker S_{11}
, 
\end{equation}
where $[...]$ means to identify the differences in ${\rm im\,} S_{11}$.

Let us look at the consequences of the long exact sequence (\ref{eq:les}). 
Suppose that 
we choose $\gamma$ so that its homology groups are trivial, 
\begin{equation}
  H_1(\gamma) \cong \ker S_{11} \cong \bm 0, 
  \quad 
  H_0(\gamma) \cong \coker S_{11} \cong \bm 0. 
\end{equation}
Then, we have the isomorphisms, 
\begin{equation}
  H_1(\Gamma) 
  \cong 
  H_1(\Gamma'), 
  \quad 
  H_0(\Gamma) 
  \cong 
  H_0(\Gamma') , 
\end{equation}
equivalently, 
\begin{equation}
  \ker S
  \cong 
  \ker S',
  \quad 
  \coker S
  \cong 
  \coker S' .  
\end{equation}
Thus, the spaces of cycles and conserved charges 
before and after the reduction are isomorphic 
when $\gamma$ has trivial homologies. 
Example \ref{ex:v2e3} in Sec.~\ref{sec:reduction} corresponds to this situation, 
where the partial stoichiometric matrix is given by 
$S_{11} = \begin{pmatrix} -1 \end{pmatrix}$, 
whose kernel and cokernel are trivial.

The exact sequence applies as long as the commutativity condition, 
$\ker S_{11} \subset \ker S_{21}$, 
is satisfied, and we can consider more general cases 
with $\ker S_{11} \neq \bm 0$. 
If the connecting map $\delta_1: H_1(\Gamma') \to H_0(\gamma)$ is 
a zero map, 
the long exact sequence (\ref{eq:les}) results in the following two exact sequences, 
\begin{eqnarray}
&&  \xymatrix{
   0 \ar[r] &
   H_1(\gamma) 
   \ar[r]
   &
   H_1(\Gamma) 
   \ar[r] 
   &
   H_1(\Gamma')
  \ar[r] 
  & 0
 }  , \\  
 &&
 \xymatrix{
  0 \ar[r] &
  H_0(\gamma) 
  \ar[r]
  &
  H_0(\Gamma) 
  \ar[r] 
  &
  H_0(\Gamma') 
 \ar[r] 
 & 0
}  . 
\end{eqnarray}
This implies the isomorphisms, 
\begin{equation}
  \ker S / \ker S_{11} \cong \ker S', 
  \quad 
  \coker S / \coker S_{11} \cong \coker S'.
  \label{eq:iso-2}
\end{equation}
Note that $\ker S_{11}$ consists of only locally supported global cycles, 
due to the assumption $\ker S_{11} \subset \ker S_{21}$. 
The isomorphisms (\ref{eq:iso-2}) represent 
equivalence of chemical reaction networks 
up to locally supported global cycles
and locally supported global conserved charges 
(emergent conserved charges are also absent when $\delta_1$ is a zero map, as we see below). 

Let us examine the condition when the connecting map is a zero map. 
$\delta_1$ is a zero map if
\begin{equation}
  S_{12} \bm c_2 \in {\rm im\,} S_{11}
  = 
  (\coker S_{11})^\perp 
  ,
  \label{eq:b12-c2}
\end{equation}
for any $\bm c_2 \in H_1(\Gamma')$. 
Below we show that, 
if every conserved charge in $\gamma$
is obtained by the projection of a global conserved charge in $\Gamma$ (namely, when there is no emergent conserved charge), 
the connecting map $\delta_1$ is a zero map. 
For a given $\bm d_1 \in \coker S_{11}$, there exists 
an element of $\coker S$, $\bm d^T = (\bm d_1^T, \bm d_2^T)$. 
The condition $\bm d^T S = \bm 0$ reads 
\begin{eqnarray}
  \bm d_2^T S_{21} &=& \bm 0 , \label{eq:db1}\\
  \bm d_1^T S_{12} + \bm d_2^T S_{22} &=& \bm 0,  \label{eq:db2}
\end{eqnarray}
where we used $\bm d^T_1 S_{11}=\bm 0$.
Let us pick $\bm c_2 \in H_1(\Gamma')=\ker S'$. 
The quantity $\bm d_1^T S_{12} \bm c_2$
can be shown to vanish as follows: 
\begin{equation}
  \bm d_1^T S_{12} \bm c_2 
  \underset{{\rm (\ref{eq:db2})}}{=}
  - \bm d_2^T S_{22} \bm c_2 
  \underset{S'\bm c_2=\bm 0}{=}
  - \bm d_2^T S_{21} S^+_{11} S_{12} \bm c_2 
  \underset{\rm (\ref{eq:db1})}{=} 
  0 . 
\end{equation}
Therefore, we have shown 
$\bm d_1^T S_{12} \bm c_2 =  0$ 
for any $\bm d_1 \in \coker S_{11}$ and $\bm c_2 \in \ker S'$. 
This is equivalent to $S_{12} \bm c_2 \in (\coker S_{11})^\perp$.

The relation between the long exact sequence 
and the numbers of cycles and conserved charges of various types 
is summarized in Fig.~\ref{fig:les}. 
The vertical lines represent the spaces, 
and the kernels are shown in black. 
Since it is an exact sequence, the kernel and image coincide 
at each space, such as ${\rm im\,} \psi_1 = \ker \varphi_1$ and so on. 
The exactness is the key to the connections between cycles and conserved charges of particular types. 
Let us see an example. 
The image of $\delta_1$ is the space of emergent conserved charges,  ${\rm im \,} \delta_1 = \widetilde D(\gamma)$. 
They are emergent, because the image of $\delta_1$ is the kernel of $\bar\psi_0$, and there is no counterpart in $\Gamma$. 
The connecting map $\delta_1$ provides us with 
a one-to-one mapping between an emergent cycle in $\Gamma'$
and an emergent conserved charge in $\gamma$ 
(elements of $\ker \delta_1$ are not emergent, since they can be written as an image of $\varphi_1$ due to the exactness). 
The numbers $d(\gamma)$, $d'(\gamma)$ in Fig.~\ref{fig:les} are 
the same as the dimensions 
of the spaces (\ref{eq:D-gamma}) and (\ref{eq:Dp-gamma})
that we defined previously. 

Compare Fig.~\ref{fig:les} also with Fig.~\ref{fig:mat-a} in the Appendix~\ref{sec:app-embedding}, 
where we discuss the relation between the numbers of cycles and conserved charges and the structure of the A-matrix. 
The long exact sequence is valid when $\widetilde c(\gamma)=0$ (i.e., when the diagram (\ref{eq:diag-b}) commutes). 
This implies $\widetilde d'(\gamma)=0$ 
and there is not emergent conserved charge in $\Gamma'$, 
since $\bar \varphi_0$ is surjective. 

\begin{figure}[tbh]
  \centering
  \includegraphics[clip, trim=0cm 8cm 0cm 5cm, width=\textwidth]{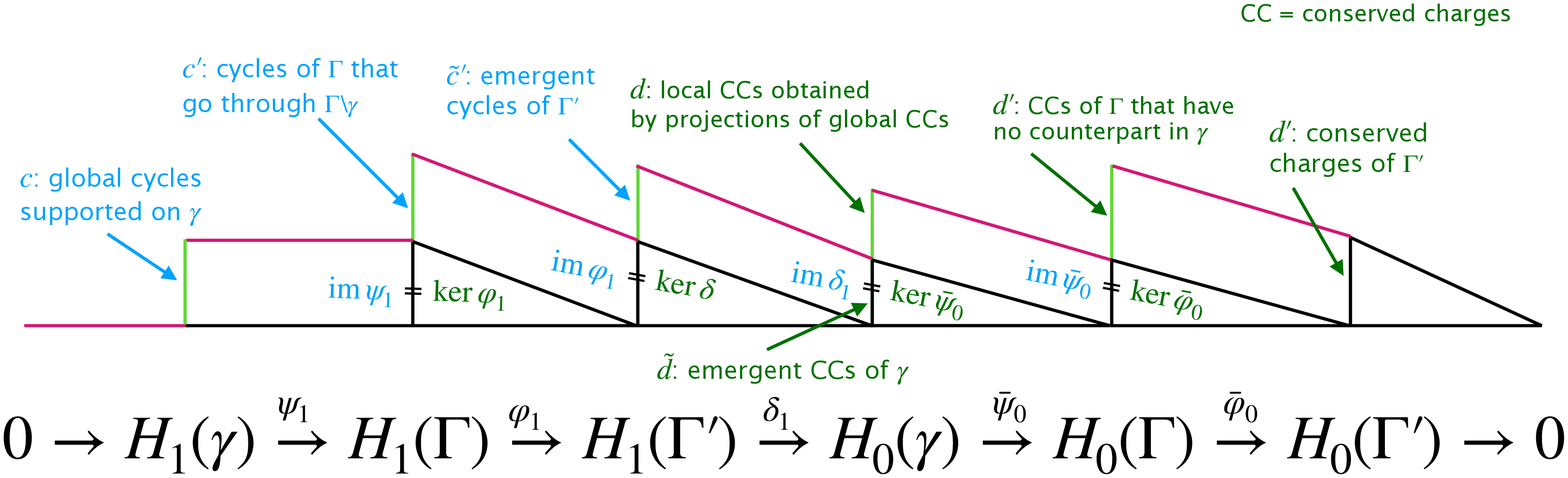} 
  \caption{
  Long exact sequence and conserved charges/cycles of various types. 
  }
  \label{fig:les}
\end{figure}

\subsection{Reduction of buffering structures}\label{sec:red-buff-th}

Here we present the main result, 
regarding the reduction of buffering structures. 
The following theorem represents a particularly nice property of 
buffering structures under reductions. 
We show that the steady-state concentrations and rates 
of the network obtained by reducing a buffering structure 
are exactly the same as those of the network before reduction, 
without any modification of parameters. 
Thus, the reduction of a buffering structure preserves the steady-state properties of the boundary degrees of freedom. 
The theorem only relies on topological information of the network and is true {\it regardless of the kinetics.} 

\begin{theorem} 
 Let $\Gamma$ be a regular chemical reaction network with kinetics $\bm r(\bm x)$ 
 and 
 let $\gamma$ be an output-complete subnetwork of $\Gamma$. 
 We assume that 
 the subnetwork $\gamma$ does not have an emergent conserved charge. 
 We consider a reduced network $\Gamma' = \Gamma/\gamma$. 
 If $\gamma$ is a buffering structure, 
 we have the isomorphisms, 
\begin{equation}
    \ker S / \ker S_{11} \cong \ker S', 
    \quad 
    \coker S / \coker S_{11} \cong \coker S', 
    \label{eq;ker-coker-mapping}
\end{equation}
Furthermore, when $(\bm r, \bm x)$ is 
steady-state reaction rates concentrations of $\Gamma$, 
whose components we separate into those in 
$\gamma$ and $\Gamma \setminus \gamma$ as 
\begin{equation}
  \bm r = \begin{pmatrix}
    \bm r_1  \\
    \bm r_2
  \end{pmatrix}, 
  \quad 
  \bm x = \begin{pmatrix}
    \bm x_1 \\
    \bm x_2
  \end{pmatrix}, 
\end{equation}
then, $(\bm r_2, \bm x_2)$ is a steady-state solution of $\Gamma'$. 
\label{th:reduction}
\end{theorem}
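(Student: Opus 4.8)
The plan is to split the statement into its two halves---the homological isomorphisms \eqref{eq;ker-coker-mapping} and the steady-state correspondence---and to feed both from a single preliminary observation: that the hypotheses (\emph{buffering structure} together with \emph{no emergent conserved charge}) force the two conditions that power the machinery of Sec.~\ref{sec:red-buff-les}, namely $\widetilde c(\gamma)=0$ and the vanishing of the connecting map $\delta_1$. First I would invoke the index decomposition \eqref{eq:lambda-decom}, $\lambda(\gamma)=\widetilde c(\gamma)+d_l(\gamma)-\widetilde d(\gamma)$. By assumption $\widetilde d(\gamma)=0$ (no emergent conserved charge) and $\lambda(\gamma)=0$ (buffering), so $\widetilde c(\gamma)+d_l(\gamma)=0$; since each summand is a nonnegative integer, both vanish. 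In particular $\widetilde c(\gamma)=0$, which by the discussion around \eqref{eq:com-2} is equivalent to $\ker S_{11}\subset\ker S_{21}$.

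For the isomorphisms \eqref{eq;ker-coker-mapping} I would then observe that $\widetilde c(\gamma)=0$ is precisely the commutativity condition that makes the diagram \eqref{eq:diag-b} commute, so the long exact sequence \eqref{eq:les} is available. The vanishing of $\widetilde d(\gamma)$ is exactly the ``no emergent conserved charge'' situation already shown in Sec.~\ref{sec:red-buff-les} to force $\delta_1=\bm 0$. With $\delta_1=\bm 0$ the long exact sequence breaks into the two short exact sequences $0\to H_1(\gamma)\to H_1(\Gamma)\to H_1(\Gamma')\to 0$ and $0\to H_0(\gamma)\to H_0(\Gamma)\to H_0(\Gamma')\to 0$, which are exactly the isomorphisms \eqref{eq:iso-2}, i.e.\ $\ker S/\ker S_{11}\cong\ker S'$ and $\coker S/\coker S_{11}\cong\coker S'$. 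This half is therefore a direct application of results already in hand.

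The steady-state correspondence I would prove by direct computation. Writing the steady-state condition $S\bm r=\bm 0$ in block form gives $S_{11}\bm r_1+S_{12}\bm r_2=\bm 0$ and $S_{21}\bm r_1+S_{22}\bm r_2=\bm 0$. Substituting $S_{12}\bm r_2=-S_{11}\bm r_1$ from the first relation into $S'=S_{22}-S_{21}S_{11}^+ S_{12}$ from \eqref{eq:def-sp} yields
\begin{equation}
S'\bm r_2=S_{22}\bm r_2+S_{21}S_{11}^+ S_{11}\bm r_1 .
\end{equation}
Because $\bm 1-S_{11}^+ S_{11}$ is the orthogonal projection onto $\ker S_{11}$ and $\ker S_{11}\subset\ker S_{21}$, one has $S_{21}S_{11}^+ S_{11}\bm r_1=S_{21}\bm r_1$, so $S'\bm r_2=S_{22}\bm r_2+S_{21}\bm r_1=\bm 0$ by the second block equation. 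Since $\gamma$ is output-complete, $\bm r_2=\bm r_2(\bm x_2)$ depends only on $\bm x_2$, so the reduced kinetics evaluated at $\bm x'=\bm x_2$ returns the same $\bm r_2$; hence $S'\bm r'(\bm x_2)=\bm 0$ and $(\bm r_2,\bm x_2)$ is a steady state of $\Gamma'$.

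The one genuinely delicate point---and the step I expect to be the main obstacle---is the logical reduction in the first paragraph: extracting $\widetilde c(\gamma)=0$ from the purely \emph{topological} buffering hypothesis. It hinges on the fact that each term in the decomposition \eqref{eq:lambda-decom} is separately a nonnegative integer, so that $\lambda(\gamma)=0$ together with $\widetilde d(\gamma)=0$ cannot mask a cancellation between a positive $\widetilde c(\gamma)$ and some negative contribution. Everything downstream---the existence of the exact sequence, the vanishing connecting map, and the block computation of $S'\bm r_2$---rests on the resulting containment $\ker S_{11}\subset\ker S_{21}$, so I would make sure to cite the nonnegativity of $\widetilde c(\gamma)$ (established around \eqref{eq:c-til-def}) and of $d_l(\gamma)$ explicitly rather than take it for granted.
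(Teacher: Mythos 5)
Your proposal is correct and follows essentially the same route as the paper: the index decomposition \eqref{eq:lambda-decom} plus the nonnegativity of $\widetilde c(\gamma)$ and $d_l(\gamma)$ forces $\widetilde c(\gamma)=d_l(\gamma)=0$ and hence $\ker S_{11}\subset\ker S_{21}$; the long exact sequence with vanishing connecting map $\delta_1$ (guaranteed by $\widetilde d(\gamma)=0$) gives the isomorphisms; and your block substitution yielding $S'\bm r_2=\bm 0$ is a minor algebraic variant of the paper's step of solving the first block equation for $\bm r_1$ via the Moore--Penrose inverse. The only element of the paper's proof you omit is the conserved-charge analysis (using $d_l(\gamma)=0$ to show every element of $\coker S'$ is inherited from $\coker S$ via $\bar\varphi_0$, so that the full steady-state equation systems coincide), but that part establishes a stronger two-way correspondence and is not needed for the stated one-way implication, which your computation already covers.
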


\begin{remark}
Let us comment on the assumption of the absence of emergent conserved charges. 
Under the assumption of the regularity, 
the appearance of emergent conserved charges in 
an output-complete subnetwork $\gamma$ is quite unlikely. 
In fact, in the case of monomolecular reaction networks, 
we can prove $\widetilde d(\gamma) = 0$ 
for a connected and output-complete subnetwork $\gamma$, 
assuming that $\Gamma$ is regular (see Appendix~\ref{sec:app-mono}), 
and this condition is redundant. 
So far, 
the examples of buffering structures 
with nonzero emergent conserved charges 
are pathological in some sense. 
Presently, we have not been able to prove the absence of 
emergent conserved charges for a generic (sound) reaction network, and thus it is assumed. 
We have more discussions on this point in Appendix~\ref{sec:app-emergent}. 
\end{remark}

\begin{remark}
We note that there is a possibility that the reduced system 
might have some solutions which are not allowed in the original system, depending on the kinetics\footnote{
We appreciate the anonymous referee 
for pointing out this possibility. 
}. 
This can occur when the reactions in a subnetwork 
have limitations in the values of reaction rates.
When such a subnetwork is removed, 
the reduced system does not have the restrictions, 
and there may appear additional solutions. 
Let us illustrate this with an example. 
We consider a reaction network $\Gamma = (\{v_1,v_2,v_3 \},\{ e_1,e_2,e_3,e_4,e_5,e_6,e_7 \} )$ 
given by the following set of reactions, 
\begin{align}
    e_1 &: v_1 \to v_2, \notag \\
    e_2 &: v_2 \to \text{(output)},  \notag \\
    e_3 &: 2 v_3 \to 3 v_3  , \notag\\
    e_4 &: v_3 \to \text{(output)} ,  \\
    e_5 &: \text{(input)} \to v_3, \notag \\
    e_6 &: v_3 \to v_1 .\notag  \\
    e_7 &: 3 v_3 \to \text{(output)}. \notag 
\end{align}
Let us here choose the kinetics as 
\begin{equation}
        \begin{pmatrix}
      r_1(x_1) \\
      r_2(x_2)  \\
      r_3(x_3) \\
      r_4(x_3) \\
      r_5 \\
      r_6(x_3) \\
      r_7 (x_3)
    \end{pmatrix}
= 
    \begin{pmatrix}
      \frac{k_1 x_1}{c_1 + x_1} \\
      k_2 x_2 \\
      k_3 (x_3)^2 \\
      k_4 x_3  \\
      k_5  \\
      k_6 x_3 \\
      k_7 (x_3)^3 
    \end{pmatrix}. 
\end{equation}
For reaction $r_1$, we adopted the Michaelis-Menten kinetics, 
and we chose the mass-action kinetics for other reactions. 
The rate equations read 
\begin{align}
    \frac{d}{dt} x_1 &= r_6(x_3) - r_1 (x_1) 
    = k_6 x_3 - \frac{k_1 x_1} {c_1 + x_1} , \label{eq:kin-x1} 
    \\
    \frac{d}{dt} x_2 
    &=   r_1 (x_1) - r_2 (x_2) 
    = \frac{k_1 x_1} {c_1 + x_1} 
    - k_2 x_2 ,
    \\
    \frac{d}{dt} x_3 &= 
    - k_7 (x_3)^3
    + k_3 (x_3)^2 - k_4 x_3 + k_5 - k_6 x_3 
    \eqqcolon 
    - k_7  (x_3 - d_1) (x_3 - d_2) (x_3 - d_3), 
    \label{eq:kin-x3}
\end{align}
where we reparametrized the equation for $\frac{d}{dt} x_3$ 
using $d_1$, $d_2$, and $d_3$
such that $d_1 < d_2 < d_3$. 
From Eq.~\eqref{eq:kin-x3}, 
we get two candidates of stable steady-state values, 
$\bar x_3 = d_1, d_3$ (note that $d_2$ is unstable). 
However, those candidates may not lead to the solutions 
of whole equations 
when the reaction rate $r_1$ has a bound, as in the current example. 
Using Eq.~\eqref{eq:kin-x1}, 
the steady-state value of $x_1$ is given by 
\begin{equation}
    \bar x_1 = \frac{c_1 k_6 \bar x_3}{k_1 - k_6 \bar x_3} . 
    \label{eq:kin-barx1}
\end{equation}
If the denominator of Eq.~\eqref{eq:kin-barx1} is negative, 
it is not a valid solution. 
Thus, depending on the values of $d_1$ and $d_3$, 
the original network may have no, or one, or two solutions. 
The subnetwork $\gamma = \{ (v_1), (e_1) \}$ is a buffering 
structure, and we can consider the corresponding reduced network $\Gamma' = \Gamma /\gamma$. 
In the reduced network, such a restriction on the values of (internal) reaction rates is invisible. 
Hence, the reduced system may admit more solutions 
that were not possible in the original system. 
\end{remark}

\begin{proof}
  The regularity of $\Gamma$ requires $\lambda(\gamma) \ge 0$ (Remark~\ref{rem:lambda-ge-0}). 
  In the absence of the emergent conserved charges, 
  we have 
\begin{equation}
  \lambda (\gamma ) = \widetilde c(\gamma) + d_l (\gamma) = 0 . 
\end{equation}
Since $\widetilde c(\gamma)$ and $d_l (\gamma)$ are 
nonnegative integers, 
we have 
$\widetilde c (\gamma) = 0$ and $d_l (\gamma) = 0$. 
Since $\widetilde c(\gamma)=0$, we can use the long exact sequence (\ref{eq:les}). 
Because there is no emergent conserved charge, 
$\widetilde d(\gamma) = 0$, by assumption, 
the connecting map $\delta_1$ is a zero map in the long exact sequence. This proves Eq.~(\ref{eq;ker-coker-mapping}).

Let us proceed to the latter part of the claim. 
The steady-state condition of $\Gamma$ is written as 
\begin{align}
  S \bm r (\bm x) &= 0, \label{eq:sr-0} \\  
  \bm d^{\bar\alpha} \cdot \bm x &= \ell^{\bar\alpha} . \label{eq:dx-l}
\end{align}
As usual, we divide the degrees of freedom to those in $\gamma$ and $\Gamma \setminus \gamma$. Then Eq.~(\ref{eq:sr-0}) is written as 
\begin{equation}
  \begin{pmatrix}
    S_{11} & S_{12} \\
    S_{21} & S_{22} 
  \end{pmatrix}
  \begin{pmatrix}
    \bm r_1 (\bm x_1, \bm x_2)\\
    \bm r_2 (\bm x_2 )
  \end{pmatrix}
  = \bm 0 . 
\end{equation}
The reactions $\bm r_2 (\bm x_2)$ depend only on $\bm x_2$, 
because $\gamma$ is chosen to be output-complete. 
The first equation 
can be solved for $\bm r_1$ as 
$
  \bm r_1 = - S_{11}^+ S_{12} \bm r_2 + \bm c_{11}, 
$ 
with  $\bm c_{11} \in \ker S_{11}$, 
and we have 
\begin{equation}
S' \bm r_2 (\bm x_2 ) 
 = - S_{21} \bm c_{11} 
 = \bm 0 , 
 \label{eq:sp-r2-0}
\end{equation}
where the last equality is due to $\widetilde c(\gamma) = 0$, 
that is equivalent to $\ker S_{11} \subset \ker S_{21}$.

Let us turn to the conserved charges. 
Recall that $d_l (\gamma)$ is written as 
$d_l (\gamma) = |(\coker S) / X(\gamma)|$. 
Because of the decomposition \eqref{eq:cokers-decom}, 
when $d_l (\gamma) = 0$, 
the space $\coker S$ is written as the direct sum 
of $D(\gamma)$ and $\bar D'(\gamma)$, 
\begin{equation}
     \coker S 
     \cong 
     D(\gamma) \oplus 
   (\coker S) / X(\gamma) \oplus \bar D' (\gamma) 
     \cong D(\gamma) \oplus \bar D' (\gamma). 
\end{equation}
Correspondingly, we can divide the basis vectors of $\coker S$ 
into two classes, 
$\{ \bm d^{\bar\alpha} \} 
= 
\{
   \bm d^{\bar\alpha_\gamma}, \bm d^{\bar \alpha' } 
\}$, 
where 
$\{ \bm d^{\bar\alpha_\gamma}\}$ 
is a basis of $D(\gamma)$, 
and 
$\{ \bm d^{\bar \alpha'} \}$ 
is a basis of $\bar D'(\gamma)$. 
The basis vectors are of the form, 
\begin{equation}
  \bm d^{\bar\alpha_\gamma}
  = 
  \begin{pmatrix}
    \bm d^{\bar\alpha_\gamma}_1 
    \\
    \bm d^{\bar\alpha_\gamma}_2
  \end{pmatrix}
  \text{ with } 
  \bm d^{\bar\alpha_\gamma}_1 \neq \bm 0, 
  \quad 
  \bm d^{\bar\alpha'}
  = 
  \begin{pmatrix}
    \bm 0 \\
    \bm d^{\bar\alpha'}_2
  \end{pmatrix}. 
\end{equation}
With this basis of $\coker S$, 
Eq.~(\ref{eq:dx-l}) is written as 
\begin{align}  
 \bm d_1^{\bar\alpha_\gamma} \cdot \bm x_1 
  + 
  \bm d_2^{\bar\alpha_\gamma} \cdot \bm x_2 
  &= \ell^{\bar\alpha_\gamma}, 
\\
\bm d_2^{\bar\alpha'} \cdot \bm x_2 &= \ell^{\bar\alpha'} . 
\label{eq:d2x2-l}
\end{align}

In fact, 
$\bm d^{\bar\alpha'}_2$ is a conserved charge in $\Gamma'$, 
$\bm d^{\bar\alpha'}_2 \in \coker S'$, 
as we see in the following. 
Since $\bm d^{\bar\alpha'} \in \coker S$ , 
it satisfies 
\begin{equation}
  \begin{pmatrix}
    \bm 0 & (\bm d^{\bar\alpha'}_2)^T
  \end{pmatrix}
\begin{pmatrix}
  S_{11} & S_{12} \\
  S_{21} & S_{22}
\end{pmatrix}
= 
\begin{pmatrix}
  (\bm d^{\bar\alpha'}_2)^T S_{21} & (\bm d^{\bar\alpha'}_2)^T S_{22}
\end{pmatrix}
= \bm 0 . 
\end{equation}
This implies that $\bm d^{\bar\alpha'}_2$ satisfies 
\begin{equation}
  (\bm d^{\bar\alpha'}_2)^T S' 
  = (\bm d^{\bar\alpha'}_2)^T ( S_{22} - S_{21} S^{+}_{11} S_{12}) 
  = \bm 0 , 
\end{equation}
hence $\bm d^{\bar\alpha'}_2 \in \coker S'$. 
Thus we have obtained an injective map, 
\begin{equation}
  \coker S \ni 
  \bm d^{\bar\alpha'} = 
  \begin{pmatrix}    
    \bm 0 \\
  \bm d^{\bar\alpha'}_2 
\end{pmatrix}
\mapsto 
\bm d^{\bar\alpha'}_2  \in \coker S'. 
  \label{eq:sd-ds}
\end{equation}
This map is nothing but the induced map $\bar \varphi_0$. 
It is important to note that, 
when $\widetilde c(\gamma)=0$, this map is a surjection, 
that is evident from the long exact sequence (\ref{eq:les})\footnote{
Another way to see this is by Eq.~(\ref{eq:c-tc=td-tdp}). 
We have $\widetilde c (\gamma) = \widetilde d (\gamma)=0$ from the assumption, 
and $\widetilde c' (\gamma)= \widetilde d (\gamma)$ holds by the connecting map $\delta_1$. 
Thus, we have $\widetilde d'(\gamma)=0$, which means that there is no emergent conserved charge in $\Gamma'$. 
}. 
The equations satisfied by 
the boundary part (denoted by 2) of the concentrations/rates of $\Gamma$ 
are Eqs.~(\ref{eq:sp-r2-0}) and (\ref{eq:d2x2-l}). 
%
Since all the conserved charges in $\Gamma'$ 
is given as a image $\bar \varphi_0$, 
we find that the set of Eqs.~(\ref{eq:sp-r2-0}) and  (\ref{eq:d2x2-l}) 
are exactly the same as 
the steady-state condition for the reduced network $\Gamma'$, 
\begin{align}
  S' \bm r' (\bm x') &= \bm 0,  \\
   \bm d_2^{\bar\alpha'} \cdot \bm x'  &= \ell^{\bar\alpha'} . 
\end{align}
where $\bm x' = \bm x_2$ and $\bm r' (\bm x') = \bm r_2 (\bm x_2)$. 
Thus, the steady-state solution of $\Gamma'$ should also be 
the steady-state solution of $\Gamma$ for the boundary degrees of freedom. 
This concludes the proof. 
\end{proof}

\subsection{Hierarchy of subnetworks}\label{sec:order}

Let us consider nested subnetworks $\gamma' \subset \gamma \subset \Gamma$. 
Given the stoichiometric matrix $S$ of the whole network, 
we denote 
the stoichiometric matrices of the subnetworks $\gamma$ 
and $\gamma'$ by $S_\gamma$ and $S_{\gamma'}$, respectively. 
The submatrices are included in the following form, 
\begin{equation}
  S = 
  \begin{pmatrix}
    S_\gamma & \ast \\
    \ast & \ast 
  \end{pmatrix}
, 
\quad   
  S_\gamma = 
  \begin{pmatrix}
    S_{\gamma'} & \ast \\
    \ast & \ast 
  \end{pmatrix},  
\end{equation}
where $\ast$ indicates an arbitrary matrix. 
Let us consider the situation where 
$\gamma$ and $\gamma'$ 
has no emergent cycle and emergent conserved charge in $\Gamma$,
namely $\widetilde c(\gamma)=\widetilde d (\gamma) = 0$, 
and $\widetilde c(\gamma') = \widetilde d (\gamma') = 0$. 
Under those assumptions, 
the quotient formula of the generalized Schur complement  \cite{carlson1974generalization} holds, 
\begin{equation}
  S/S_{\gamma} 
  =( S / S_{\gamma'}) / ( S_\gamma / S_{\gamma'}) . 
\end{equation}
This indicates the isomorphisms of homology groups, 
\begin{equation}
  H_n (\Gamma/\gamma)
  \cong 
  H_n ( (\Gamma/\gamma')/(\gamma/\gamma') ), 
  \label{eq:excision}
\end{equation}
for $n=0,1$. 
Thus, when we perform the reductions of nested subnetworks 
that have no emergent cycles and emergent conserved charges, 
the order of the reduction of them does not matter.

\section{Example of reduction: metabolic pathway of {\it E.~coli}}\label{sec:ex-ecoli}

As an application of the reduction method, 
let us examine the central metabolism of {\it E.~coli}.  
We use the stoichiometric matrix presented in Ref.~\cite{doi:10.1002/mma.3436}, which is constructed based on Ref.~\cite{ishii2007multiple} with minor modifications.  The network structure is shown in Fig.~\ref{fig:ecoli-full}, which consists of the glycolysis, the pentose phosphate pathway (PPP), and the tricarboxylic acid cycle (TCAC).
The list of the reactions for this system 
is given in Appendix~\ref{sec:ecoli-list}. 
Here, we assume that 
H\textsubscript{2}O and cofactors such as ATP and NADH are abundant and do not affect the behavior of the system. 
Buffering structures in this network have been identified 
in Ref.~\cite{PhysRevLett.117.048101}
and there are in total 17 buffering structures,
which we list in Appendix~\ref{sec:ecoli-buffering-list}. 
As we showed in Sec.~\ref{sec:submodularity}, 
the intersections or unions of 
buffering structures are also buffering structures. 
They form a hierarchy, 
and such an architecture can be regarded 
as a source of robustness against perturbations, since buffering structures work as a kind of firewalls.

Let us now perform reductions of buffering structures, 
under which the steady state is ensured to be the same as the original network as we showed in Sec.~\ref{sec:red-buff-th}. 
We denote the whole network by $\Gamma$. 
We can pick a buffering structure $\gamma_8$, which is a part of the pentose phosphate pathway (the yellow subnetwork in Fig.~\ref{fig:ecoli-full}) and given by\footnote{
In fact, $\gamma_8 = \gamma_5 \cap \gamma_{14}$, 
and if we allow taking intersections of buffering structures, 
$\gamma_8$ is redundant. 
This is consistent with Corollary~\ref{cor:intersection-union}. 
}
\begin{equation}
  \gamma_8 = 
  (
    \{
   \text{X5P, S7P, E4P}   
    \},
    \{ 17,18 ,19 ,20 ,21  \}
  ), 
\end{equation} 
and perform a reduction to obtain $\Gamma_1 \coloneqq \Gamma /\gamma_8$. 
The stoichiometric matrix of the reduced reaction network 
can be computed by Eq.~(\ref{eq:def-sp}). 
The resulting network is shown in Fig.~\ref{fig:ecoli-reduced-1}. 
The reduction procedure induces rewiring of the reactions, which are colored in magenta in Fig.~\ref{fig:ecoli}. 
Reactions 15 and 22 are rewired, and 22 is now a degenerate reaction. 
The fraction $1/2$ shown at reaction 15 indicates the weight 
of the species. 
Those reconnections including the change of weights 
are {\it necessary} 
if we want the steady state to be the same  
as those of the original network. 
Otherwise, the steady state is changed in general. 
We can proceed further and reduce the subnetwork $( \{\text{G3P,R5P}\},\{7,22,40\})$ (colored in red and orange in Fig.~\ref{fig:ecoli-full}).  This reduction is the same as reducing $\gamma_5 \cup \gamma_{14}$ from $\Gamma$. 
The result of the reduction is shown in Fig.~\ref{fig:ecoli-reduced-2}. 
Again, rewiring occurs and 
the reactions 5,6,15, and 36 are modified from the original system. 
Finally, let us focus on the part colored in green in Fig.~\ref{fig:ecoli-full}, 
which consists of the following subsets of chemical species and reactions, 
\begin{equation}
( 
  \{ \text{G6P}, \text{F6P}, \text{F16P}, \text{6PG}, \text{Ru5P}, \text{DHAP} \},  
\{2, 3, 4, 5, 6, 13, 14, 15, 16, 36, 43\} 
 ) . 
 \label{eq:green}
\end{equation}
The complement of the subset (\ref{eq:green}) 
is given by 
$\gamma_5 \cup \gamma_7 \cup \gamma_{14}$,
which hence is a buffering structure, 
and a reduction can be performed. 
The structure of the reduced network 
$\Gamma_3 = \Gamma / (\gamma_5 \cup \gamma_7 \cup \gamma_{14})$ 
is given in Fig.~\ref{fig:ecoli-reduced-glycolysis}. 
Compared to the original network, 
we notice that the reactions 15 and 43 are rewired. 

To demonstrate our theoretical prediction, 
we numerically solve the rate equations for 
the four systems in Fig.~\ref{fig:ecoli} (the original network $\Gamma$ and the reduced ones, $\Gamma_1,\,\Gamma_2,\,\Gamma_3$), 
using the same initial condition and reaction rate constants in all of the four cases (see Appendix~\ref{sec:parameters} for details of parameter values).
The time-series of concentrations are presented in Fig.~\ref{fig:dyn}. 
After the initial transient dynamics, the original system   approaches a (stable) steady state [Fig.~\ref{fig:dyn}(a)]. We can see that the reduced systems can reproduce the steady-state concentrations that the original system eventually reaches,  although they have distinct short-time dynamics [Fig.~\ref{fig:dyn}(b-d)].

In this way,  buffering structures work as a guide 
as to how to perform the reduction 
and simplify a complex reaction network. 
As long as the reduced part is a buffering structure 
and 
we use the generalized Schur complement (\ref{eq:def-sp})
as a stoichiometric matrix of a reduced network, 
the steady-state concentrations and rates of the remaining part 
stay the same as the original ones
regardless of the details of the kinetics, 
as a consequence of Theorem \ref{th:reduction}.


\begin{figure}[htb]
\subfloat[
    \label{fig:ecoli-full} 
]{%
\includegraphics[clip, trim=5cm 0cm 5cm 0cm, width=0.45\textwidth]
   {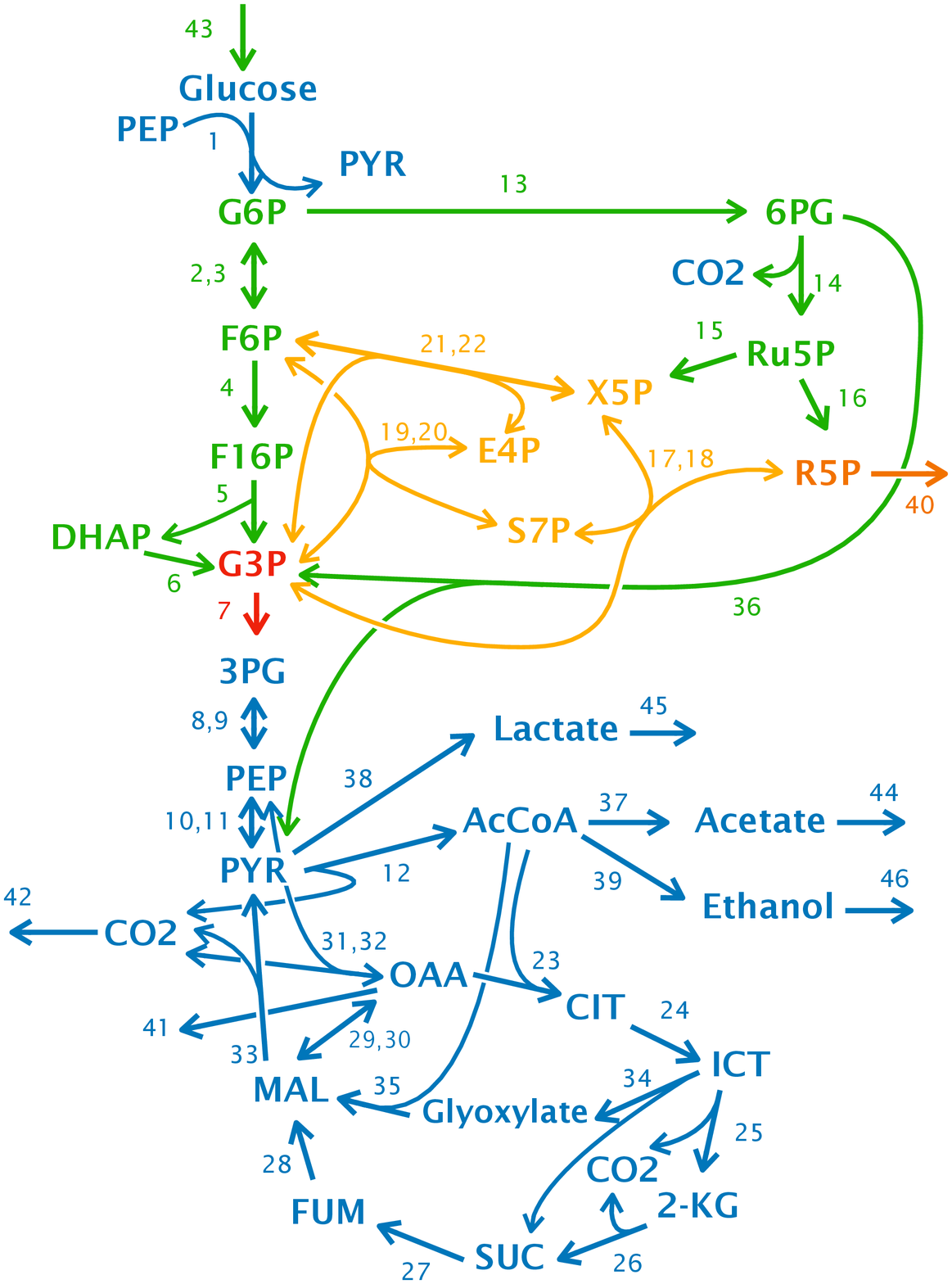}
}
\subfloat[ \label{fig:ecoli-reduced-1}]{%
\includegraphics[clip, trim=5cm 0cm 5cm 0cm, width=0.45\textwidth]
   {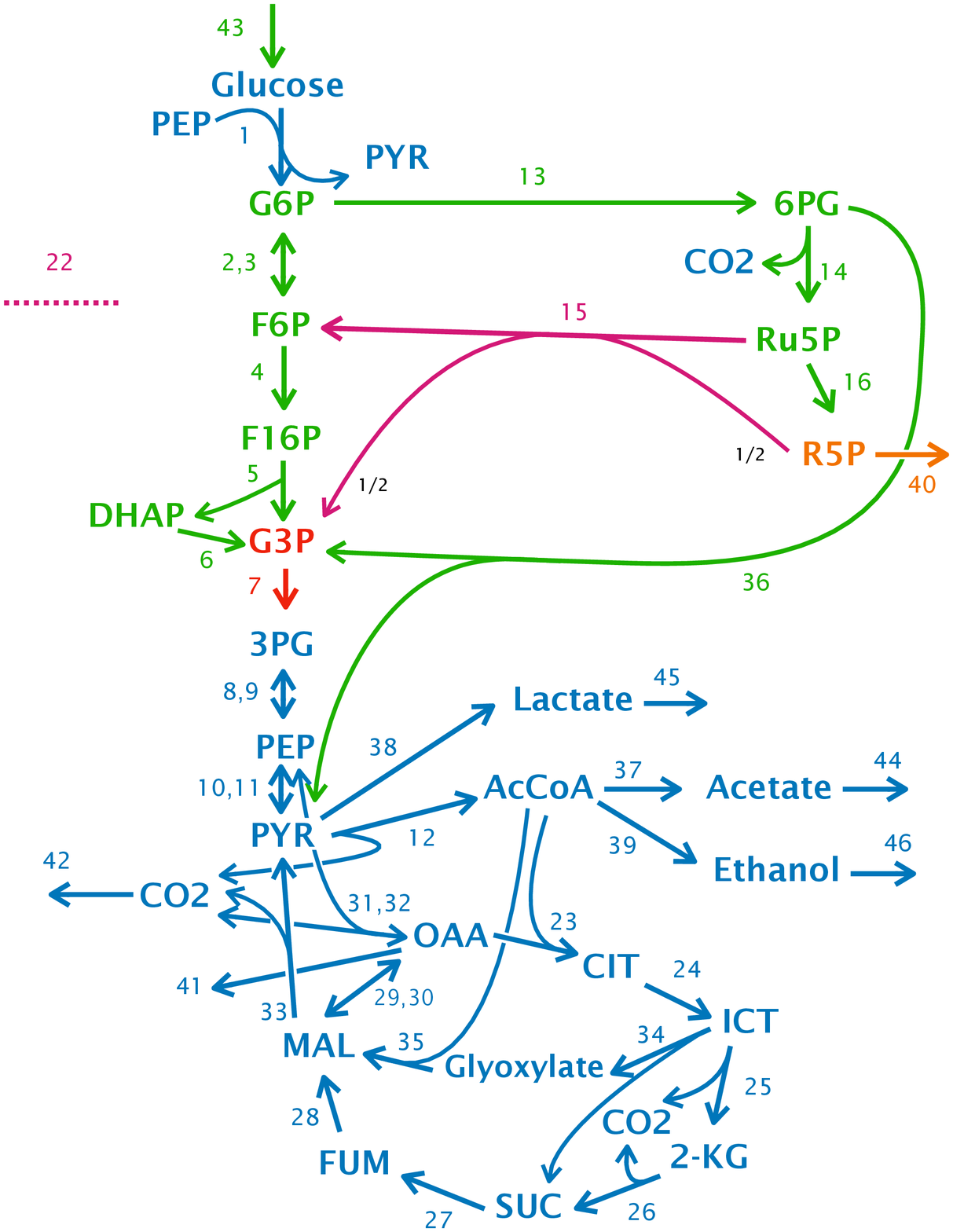}
} \\
\subfloat[ \label{fig:ecoli-reduced-2}]{%
\includegraphics[clip, trim=5cm 0cm 5cm 0cm, width=0.45\textwidth]
   {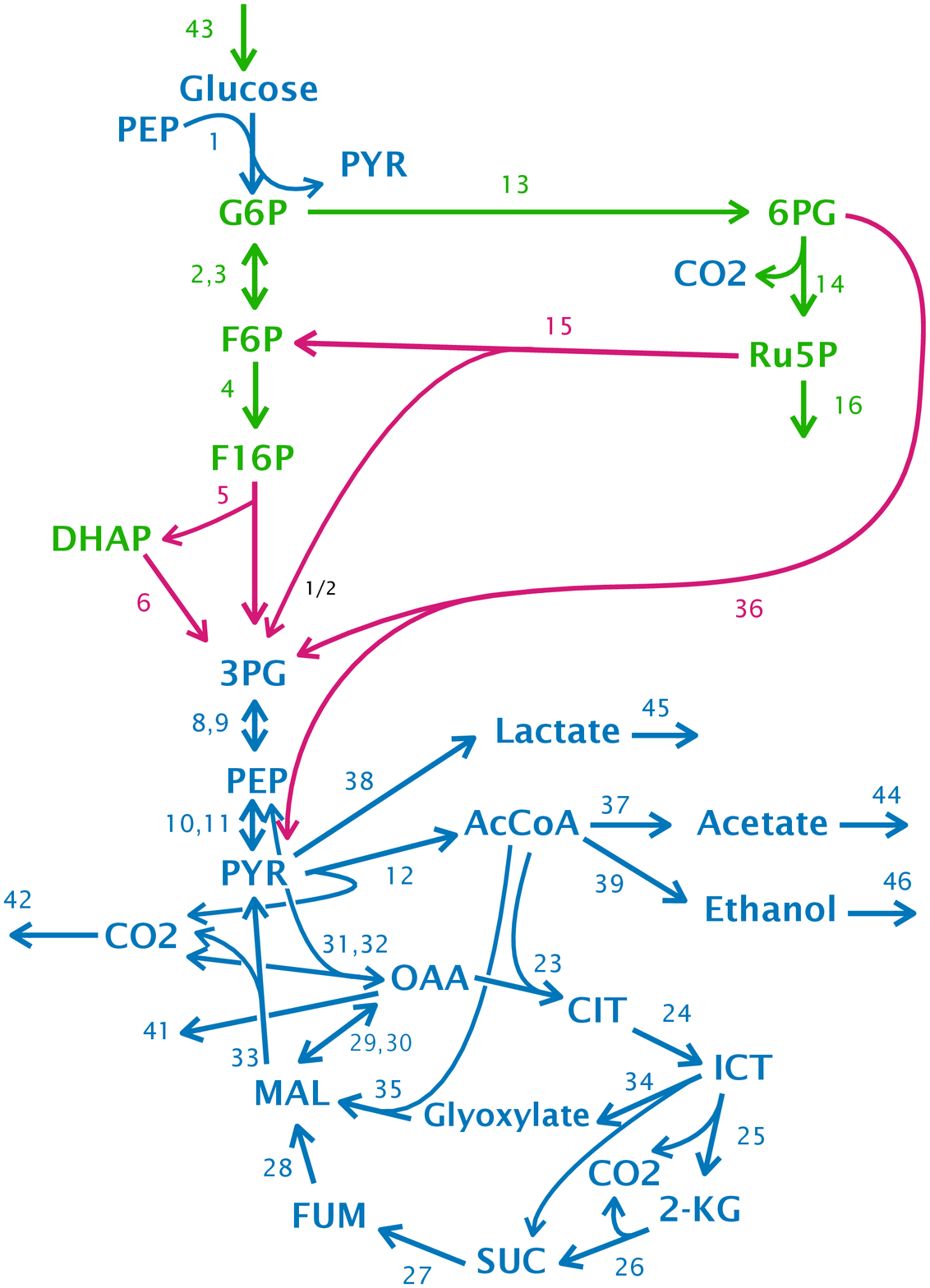}
} 
\subfloat[ \label{fig:ecoli-reduced-glycolysis} ]{%
\includegraphics[clip, trim=5cm 0cm 5cm 0cm, width=0.45\textwidth]
   {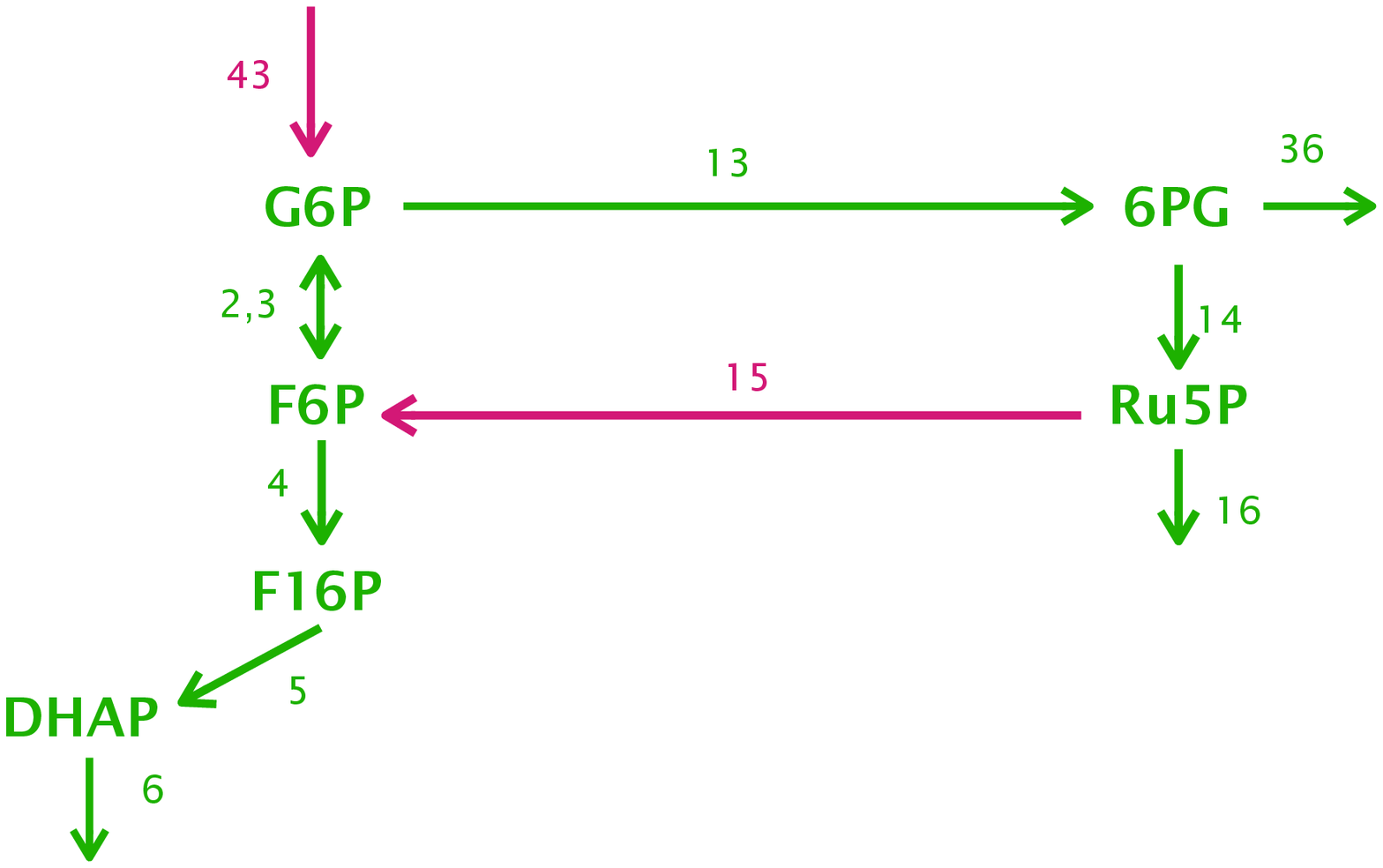}
}
\caption{
  Central metabolic pathway (a) of {\it E.~coli} 
  and reduced networks (b, c, d). 
  In the reduced networks, rewired reactions under the reductions are colored in magenta. 
  (b) Reduced network $\Gamma_1 = \Gamma/\gamma_8$, where $\gamma_8$ is colored in yellow in (a).  
    The fraction $1/2$ written in black indicates the weight in the stoichiometric matrix.
(c) $\Gamma_2 = \Gamma / (\gamma_5 \cup \gamma_{14} )$, where $\gamma_5 \cup \gamma_{14}$ is colored in yellow, red, and orange in (a).
(d) $\Gamma_3 = \Gamma/(\gamma_5 \cup \gamma_7 \cup \gamma_{14})$, where $\gamma_5 \cup \gamma_7 \cup \gamma_{14}$ is colored in  yellow, red, orange, and blue in (a).
  }
  \label{fig:ecoli} 
\end{figure}

\begin{figure}[htb]
  \centering
  \includegraphics[keepaspectratio, scale=0.7]{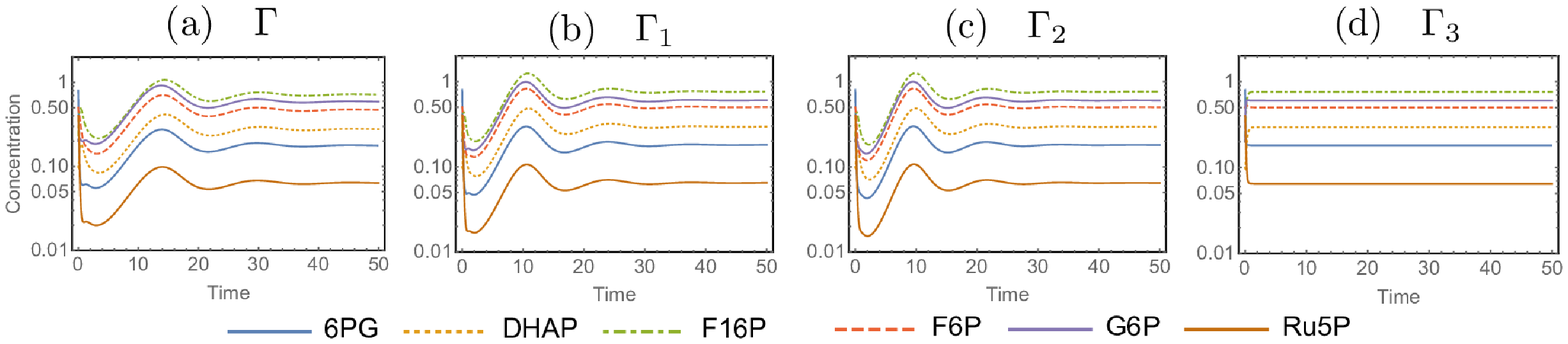} 
  \caption{Time-series of concentrations of DHAP, F16P, F6P, G6P, PG6, Ru5P computed by solving the whole system $\Gamma$ (a) and the reduced systems $\Gamma_1,\, \Gamma_2,\, \Gamma_3$ (b, c, d).  The same initial condition and reaction rate constants are used in the four cases. 
  }
  \label{fig:dyn}
 \end{figure}

\section{Summary and outlook}\label{sec:summary} 

The main focus of the present paper was 
the relationship between the structure and functions of the chemical reaction network. 
As a characterization of the structure, 
homology and cohomology groups for chemical reaction networks
were introduced, in which the actions of boundary and coboundary operators 
are determined by the stoichiometry. 
The elements of homology groups correspond to cycles and conserved charges of chemical reaction networks, 
and steady states were shown to be determined by the elements of the cohomology groups. 
In a similar way to the homology and cohomology groups of topological spaces, 
the Mayer-Vietoris sequence 
and the long exact sequence of a pair of chemical 
reaction networks were introduced,
the latter being particularly useful for studying the reduction of reaction networks.

We propose a method of reduction of chemical reaction networks. 
The reduced network is characterized by 
the stoichiometric matrix 
obtained by eliminating the chemical species and reactions 
of an output-complete subnetwork
via the Schur complementation. 
The reduction relies only on the stoichiometry, which determines the topology of the reaction networks, 
and thus {\it is applicable to any kind of kinetics}. 
This represents an advantage since in many biological systems it is difficult to experimentally determine the kinetics and parameters of the reactions. 
For tracking the change of cycles and conserved charges under the reductions, 
the tools of algebraic topology, such as the long exact sequence, have been useful. 
We have studied how the law of localization can be understood from 
this perspective. 
We showed that the influence index is expressed 
in terms of the numbers of cycles/conserved charges of particular types, as in Eq.~\eqref{eq:lambda-decom}. 
We also showed that the influence index is a submodular function over output-complete subnetworks. 
A corollary of this is that 
buffering structures are closed under
intersection and union, 
which is useful when we enumerate the buffering structures 
of a large reaction network. 
As a central result of the paper, 
we showed that buffering structures, 
which are subnetworks with vanishing influence index, 
behave nicely under the reduction. 
Namely, under the reduction of a buffering structure,
the steady state of the remaining elements of the network 
stays the same as the original network (Theorem~\ref{th:reduction}). 
The theorem justifies the intuition that 
buffering structures are regarded as `irrelevant' substructures:  
they can be safely eliminated 
through the reduction method proposed here 
without changing the long-time behavior of the system. 
The reduction procedure introduces rewiring of reactions,
which is necessary so that the steady state is not modified under the reduction.
As an application of the reduction method, 
we discussed the reduction of the central metabolic pathway of {\it E. coli} and illustrated that reactions are rewired non-trivially under the reduction. 
We also demonstrated the invariance of the steady state under the reduction of buffering structures by numerically solving the rate equations before and after the reduction\footnote{
We remark that, in our analysis of the central carbon metabolism, cofactors are not included as variables
on the assumption that they are abundant
and their concentrations are stable. 
If this is not the case, the identifications of buffering structures will be modified. 
The applicability of such assumptions should be examined 
depending on the situations one wants to consider. 
}.

Our results highlight that special care should be taken 
when simplifying a reaction network.
A naive elimination of a subnetwork not of interest would alter steady-state properties of the original system. 
 As long as the subnetwork has the vanishing influence index and reactions are rewired appropriately using the generalized Schur complement, it can be eliminated while keeping the steady state intact. 

Another significance of our method is that it allows us to identify the modules in a complex network and facilitates the biological interpretation of the whole system.
For example, the central metabolic pathway of {\it E.~coli} consists of three modules; glycolysis, TCAC, and PPP.  Interestingly, the reduced network in Fig.~\ref{fig:ecoli-reduced-glycolysis} roughly corresponds to the glycolysis. The fact of glycolysis being a reduced network may suggest that  E. coli can control the glycolysis in an isolated manner, and the expression levels of enzymes in the TCAC and the PPP do not affect the physiological states of the glycolysis.

For practical applications, 
one important issue is how to find the buffering structures efficiently in large-scale reaction networks. 
Although we defer this as a future problem, let us make some comments on this point. 
One practical way of finding  buffering structures is as follows: We first compute the sensitivity matrix $A^{-1}$ by assigning random values to $\frac{\partial r_A}{\partial x_i}$. 
From this, we can identify, for each parameter $k_A$ (and for each conserved concentration $l^{\bar \alpha}$ if exists), the subset $V_A$ of chemicals that show nonzero responses to the perturbation of $k_A$ under generic kinetics. 
The inclusion relation among  $V_A$'s indicates candidate buffering structures (see Figs.~3 and 5 in \cite{PhysRevLett.117.048101} for the illustrations). 
For example, $V_A \subset V_B$ indicates the existence of two nested buffering structures. Finally, for those candidates, we can compute the influence index and verify if they are indeed  buffering structures.

Establishing a combinatorial method for identifying buffering structures is an amusing problem. 
We believe that the basic properties of buffering structures 
that we showed in this paper would be useful for this purpose. 
For example, if a network contains many small buffering structures, we can use the reduction method repeatedly
and make the network smaller one we fine a small buffering structure. 
This procedure is possible because the order of reduction 
does not matter for the buffering structures, 
as we showed in Sec.~\ref{sec:order}. 
The submodular property of the influence index and 
the subsequent closure property of buffering structures 
under unions/intersections 
would also be useful in enumerating buffering structures.

We believe that the mathematical formulation that we used 
to characterize the topology of chemical reaction networks 
will be useful for understanding the static and dynamical properties\footnote{ 
In Ref.~\cite{cardelli2014morphisms}, morphisms of chemical reaction networks are considered and 
a condition is given as to when 
a reaction network can dynamically emulate another one. 
} of reaction systems. 
The eigenvalues of the Laplacian operators 
entail the information of the topology of the network connectivity. 
Steady states correspond to the eigenvectors with zero eigenvalues 
and they incorporate the crudest topological information of the reaction network. 
The eigenvectors with higher eigenvalues are going to be needed 
if we want to extend the reduction method to approximate the dynamics as well as the steady states.

\begin{acknowledgments}
This work is in part supported by RIKEN iTHEMS Program.
Y.~Hirono is supported by the Korean Ministry of Education, Science and Technology, Gyeongsangbuk-do and Pohang City at the Asia Pacific Center for Theoretical Physics (APCTP) and by the National Research Foundation (NRF) funded by the Ministry of Science of Korea (Grant No. 2020R1F1A1076267).
H.~Miyazaki is supported by JSPS KAKENHI Grant (19K23413) and JST CREST Grant Number JPMJCR1913.
Y.~Hidaka is supported by JSPS KAKENHI Grant Numbers~17H06462.
The authors thank Atsushi Mochizuki, Tetsuo Hatsuda, Hideaki Aoyama, Yuichi Ikeda, and Genki Ouchi for useful discussions and comments. 
Y.~Hirono is grateful to Benjamin for helpful discussions and the constant encouragement. 
\end{acknowledgments}

\appendix

\section{Laplace operators and Hodge decomposition}\label{sec:app-hodge} 

In this section, we discuss the Hodge decomposition 
and Laplace operators, which are closely related 
to the cohomology groups introduced in the main text. 

We can define Laplace operators,  
$\Delta_n: C^n(\Gamma) \to C^n(\Gamma)$, 
as 
\begin{equation}
  \Delta_0 \coloneqq \rd^\dag_0 \rd_0, 
  \quad 
  \Delta_1 \coloneqq \rd_0 \rd^\dag_0. 
\end{equation}
Recall that the coboundary operator~\eqref{eq:coboundary} and its adjoint~\eqref{eq:ajoint coboundary} are 
given by 
$(\rd_0 a_0)(e_A) =\sum_i (S^T)_{Ai} a_0(v_i)$ for $a_0 \in C^0(\Gamma)$
  and $(\rd^\dag_0 a_1)(v_i) = \sum_A S_{iA} a_1(e_A)$ 
  for $a_1 \in C^1(\Gamma)$. 
The action of the Laplacians are written 
in the matrix form as 
\begin{equation}
  (\Delta_0 a_0) (v_i)  =\sum_j (S S^T)_{ij} \, a_0(v_j) ,  
  \quad 
  (\Delta_1 a_1)(e_A)  =\sum_{B} (S^T S)_{AB}\, a_1(e_B), 
\end{equation}
for $a_0 \in C^0 (\Gamma)$ and $a_1 \in C^1 (\Gamma)$. 
Those are generalizations of the graph Laplacian 
to hypergraphs. 
The properties of hypergraph Laplacians 
were discussed recently 
in Refs.~\cite{JOST2019870,MULAS2021112372, 10.1093/comnet/cnab009}. 
When all the reactions are monomolecular, 
the Laplacian reduces to the graph Laplacian of the directed graph.

The space $C^1(\Gamma)$ admits 
the following orthogonal decomposition, 
\begin{equation}
    C^1 (\Gamma) = {\rm im\,} \rd_0 
    \oplus \ker \Delta_1 . 
\end{equation}
This is a natural generalization of the Hodge decomposition 
of flows on networks \cite{jiang2011statistical}
to the case of a hypergraph. 
Thus, given a 1-cochain $f \in C^1(\Gamma)$, 
we can decompose it in a unique way as 
\begin{equation}
  f = \rd_0 a + c , 
  \label{eq:f}
\end{equation}
where $c \in \ker \rd_0^\dag \,\cap\, \ker \rd_1$
is a harmonic cochain 
and $a \in C^0(\Gamma)$. 
This is the Hodge decomposition associated with the complex 
(\ref{eq:gamma-comples}). 
By acting $\rd^\dag_0$ on Eq.~(\ref{eq:f}), 
we have 
\begin{equation}
  \rd^\dag_0 f = \rd^\dag_0 \rd_0 a = \Delta_0 a. 
\end{equation}
We can solve this for the potential $a$ as 
\begin{equation}
  a = \Delta_0^+ \rd^\dag_0 f + a_0 . 
\end{equation}
Here, $\Delta_0^+:C^0(\Gamma) \to C^0(\Gamma)$ is 
the operator defined by 
$(\Delta^+_0 b_0)(v_i) \coloneqq\sum_j (S S^T)^+_{ij} b_0(v_j)$
for $b_0 \in C^0(\Gamma)$, 
where $M^+$ indicates the Moore-Penrose inverse of a matrix $M$, 
and $a_0 \in \ker \Delta_0$. 
The harmonic component $c$ can be obtained by 
\begin{equation}
  c = f- \rd_0 a = (1 - \rd_0 \Delta^+_0 \rd^\dag_0) f . 
  \label{eq:c-f-da}
\end{equation}
Using the properties of the Moore-Penrose inverse, 
the action of the operator that appears 
on the RHS of Eq.~(\ref{eq:c-f-da}) is written as 
\begin{equation}
 [(1 - \rd_0 \Delta^+_0 \rd^\dag_0) b_1] (e_A) 
 =\sum_B (1 - S^+ S)_{AB} \, b_1 (e_B) , 
\end{equation}
for an arbitrary $b_1 \in C^1(\Gamma)$. 
The matrix $1 - S^+ S$ is the projection matrix to 
$\ker S$. 
Thus, the harmonic component can be identified by 
the projection to $\ker S$, 
\begin{equation}
c(e_A) =\sum_B ( 1 - S^+ S )_{AB} f(e_B) . 
\end{equation}
This is consistent with the fact that $c \in H^1(\Gamma) = \ker S$. 
The potential $a$ can be obtained 
by the multiplication of the Moore-Penrose inverse of $S$
to $f$, 
\begin{equation}
  a (v_i) =\sum_A (S^T)^+_{iA} f (e_A) 
  + a_0 (v_i) , 
\end{equation}
where $a_0 \in \ker \Delta_0$.

\section{Cycles and conserved charges}\label{sec:app-cycles} 

\subsection{Interpretation of $\widetilde c(\gamma)$ and $d_l (\gamma)$, 
  and $\widetilde d(\gamma)$}\label{sec:app-interpretations}  

Let us here discuss intuitive interpretations 
of the integers appearing 
in the decomposition (\ref{eq:lambda-decom}). 
We will refer to the elements of $\ker S$ as ``global cycles''
and those of $\coker S$ as ``global conserved charges.'' 
Similarly, 
the elements $\ker S_{11}$ and 
$\coker S_{11}$ 
are referred to as ``local cycles'' and ``local conserved charges.''
With this terminology, 
the elements of $(\ker S)_{{\rm supp}\, \gamma}$ 
are called as ``locally supported global cycles.''

We first look at $\widetilde c(\gamma)$. We denote the space by 
\begin{equation}
{\rm \widetilde C}(\gamma) 
  \coloneqq
  \ker S_{11} /  
  (\ker S)_{{\rm supp}\, \gamma}. 
  \label{eq:space-gamma-c}
\end{equation}
Then, $\widetilde c(\gamma) = |{\rm \widetilde C}(\gamma)|$. 
To clarify its meaning, 
we represent the space (\ref{eq:space-gamma-c}) as follows, 
\begin{equation}
  \begin{split}
 {\rm \widetilde C}(\gamma)  
 &= 
  \{ 
  \bm c \in C_1(\Gamma)
  \,|\,
  S \bm c = \bar P^0_{\gamma} \bm v
  ,\, 
  P^1_\gamma  \bm c = \bm c
  \}
  /
  \{ 
  \bm c 
  \in C_1(\Gamma) 
  \,|\,
  S \bm c = 0, \, 
  P^1_\gamma  \bm c = \bm c 
  \}   
  \\
  &=
  \{ 
    \bm c \in C_1(\Gamma)
    \,|\,
    S \bm c = \bm v \neq \bm 0, \, 
  \bm v \in 
  C_0 (\Gamma \setminus \gamma)
    ,\, 
    P^1_\gamma  \bm c = \bm c 
    \} . 
  \end{split}
\end{equation}
An element of $ {\rm \widetilde C}(\gamma)$ 
is a local cycle that is not a global cycle, 
by which we mean that 
$\bm c \in  {\rm \widetilde C}(\gamma)$  
has its boundary in $\Gamma \setminus \gamma$. 
For example, let us take 
the subnetwork $\gamma = ( \{ v_2 \}, \{ e_1,e_2 \} )$ 
of a monomolecular reaction network, 
\begin{equation}
  \begin{tikzpicture}
  
    \node[species] (x) at (0,0) {$v_1$}; 
    \node[species] (y) at (1.5,0) {$v_2$}; 
    \node[species] (z) at (3,0) {$v_3$}; 
    \node[species] (w) at (4.5,0) {$v_4$}; 
    
    \draw [-latex,draw,line width=0.5mm] (x) edge node[below]{$e_1$} (y);
    \draw [-latex,draw,line width=0.5mm] (y) edge node[below]{$e_2$} (z);
    \draw [-latex,draw,line width=0.5mm] (z) edge node[below]{$e_3$} (w);

   \draw[mydarkred, dashed,line width=1] (0.3,-1) rectangle (2.7,1);
   \node[left] at (0.8, 0.7) {
    \scalebox{1.1} {\color{mydarkred} $\gamma$} };
  \end{tikzpicture} 
  \label{eq:network-1} 
\end{equation}
Although $e_1 + e_2 \in C_1 (\Gamma)$ has its support in $\gamma$, 
its boundary, \footnote{
Recall that the boundary of each reaction is specified by 
the stoichiometric matrix as 
$
  \p_1 e_A =\sum_i (S^T)_{Ai} v_i . 
$
}
\begin{equation}
  \p_1 (e_1 + e_2) = -v_1 + v_3 , 
\end{equation}
is outside of $\gamma$. 
The element $e_1+e_2$ is a local cycle, 
since $\p_1 (e_1+e_2)$ is zero as a relative chain in 
$C_0(\gamma) = C_0(\Gamma)/C_0(\Gamma \setminus \gamma)$. 
Note that the network (\ref{eq:network-1}) as a whole does not have a cycle and $\ker S=\bm 0$. 
Thus, we can identify 
$\bm c \in  {\rm \widetilde C}(\gamma)$ 
to be a local cycle 
whose boundary is out of $\gamma$. 
When $\bm c$ is viewed in $\Gamma$, 
it may be extended to a global cycle,
but it does not have to be. 
Considering its meaning, 
we will refer to the elements of ${\rm \widetilde C}(\gamma)$
as {\it emergent cycles}, which only appear when we focus on a subnetwork.

Let us illustrate the space $ {\rm \widetilde C}(\gamma)$ pictorially. 
The matrix $S$ works as a boundary operator on 
the space of chemical reactions.  
Thus, the kernel of $S$ are 
linear combinations of reactions without boundaries. 
Cycles and non-cycles can be drawn pictorially as 
\begin{center}
  \begin{tikzpicture}[scale=0.8]
  
    \draw[color=mydarkred, line width=1.8]
    (3, 0) -- (2.5,3.6);


     \draw[color=mydarkred, line width=1.8]
     (1, 0) -- (0,1);    

    \draw[color=mydarkred, line width=1.8]
    (1,2.2) ellipse (0.4 and 0.6); 
    
    \draw[color=myblue, line width=1.8]
    (4.5, 0) -- (4.5,1.6); 
    \filldraw[color=myblue, fill=myblue, line width=1.4]
    (4.5,1.6) circle (0.08);

    \draw[color=myblue, line width=1.8]
    (3.9, 2.2) -- (5,3.2); 
    \filldraw[color=myblue, fill=myblue, line width=1.4]
    (3.9,2.2) circle (0.08);
    \filldraw[color=myblue, fill=myblue, line width=1.4]
    (5,3.2) circle (0.08);

    \draw[black, line width=2] (0,0) rectangle (6,3.6);
  
    \draw[->,>=stealth,color=black, line width=1]
     (-1,3) -- (2.4,3);

     \draw[->,>=stealth,color=black, line width=1]
     (-1,2.9) -- (0.5,2.6);

     \draw[->,>=stealth,color=black, line width=1]
     (-1,2.8) -- (-0.1,1);

     \node[left] at (-1, 3) {
      \scalebox{1.1} {Cycles} };
  
      \draw[->,>=stealth,color=black, line width=1]
      (6.5,1.5) -- (4.7,1); 
      \draw[->,>=stealth,color=black, line width=1]
      (6.5,1.6) -- (4.7,2.6); 
 
      \node[left] at (9, 1.5) {
        \scalebox{1.05}
        {Non-cycles}
        };
  
  \end{tikzpicture}, 
\end{center} 
where the boundary of the box is identified. 

We consider an output-complete subnetwork $\gamma$. 
The space ${\ker}\, S_{11}$ 
is spanned by local cycles in $\gamma$, for example, 
\begin{center}
\begin{tikzpicture}[scale=0.8]

  \node[left] at (0, 1.8) {
    \scalebox{1.2}
    {
    $\quad \,\,\,\,{\ker}\, S_{11} = $
    }
    };

    \node[right] at (5.3, 2.4) {
      \scalebox{1.2}
      {$\gamma$} 
      };

  \draw[color=myblue, line width=1.8]
  (0.8, 0) -- (0.8,3.6);

  \draw[color=mydarkred, line width=1.8]
  (4.7,2) ellipse (0.3 and 0.5); 

  \draw[color=myblue, line width=1.8]
  (2, 0) -- (2,3.6);
  \draw[color=mydarkred, line width=1.8]
  (2, 1) -- (2,2.8);

  \draw[color=myblue, line width=1.8]
  (2.8, 0) -- (2.8,1.6);
  \filldraw[color=myblue, fill=myblue, line width=1.4]
  (2.8,1.6) circle (0.08);

  \draw[color=myblue, line width=1.8]
  (3.8, 0.5) -- (3.8,3.6);
  \draw[color=mydarkred, line width=1.8]
  (3.8, 1) -- (3.8,2.8); 
  \filldraw[color=myblue, fill=myblue, line width=1.4]
  (3.8,0.5) circle (0.08);

  \draw[black, line width=2] (0,0) rectangle (6,3.6);

  \draw[black, line width=1.5] (1.5,1) rectangle (6,2.8);

  \filldraw[color=mydarkred, fill=mydarkred, line width=1.4]
  (2,1) circle (0.08);
  \filldraw[color=mydarkred, fill=mydarkred, line width=1.4]
  (2,2.8) circle (0.08);

  \filldraw[color=mydarkred, fill=mydarkred, line width=1.4]
  (3.8,1) circle (0.08);
  \filldraw[color=mydarkred, fill=mydarkred, line width=1.4]
  (3.8,2.8) circle (0.08);

\end{tikzpicture}, 
\end{center} 
where the inner box represents a subnetwork $\gamma$ 
and the red lines constitute the basis of the space. 
Here, the symbol 
\begin{tikzpicture}
  \filldraw[color=mydarkred, fill=mydarkred, line width=1.4]
  (0,0) circle (0.08);
\end{tikzpicture}
means that the cut ends are reactions and not chemical species. 
See the following two choices for example: 
\begin{center} 
  \begin{tikzpicture}

    \node[species] (x) at (0,0) {$v_1$}; 
    \node[species] (y) at (1.5,0) {$v_2$}; 
    \node[species] (z) at (3,0) {$v_3$}; 
    \node[species] (w) at (4.5,0) {$v_4$}; 

    \draw [-latex,draw,line width=0.5mm] (x) edge node[below]{$e_1$} (y);
    \draw [-latex,draw,line width=0.5mm] (y) edge node[below]{$e_2$} (z);
    \draw [-latex,draw,line width=0.5mm] (z) edge node[below]{$e_3$} (w);

   \draw[mydarkred, dashed,line width=1] (0.3,-1) rectangle (2.7,1);
   \node[left] at (1, 0.7) {
    \scalebox{1.1} {\color{mydarkred} $\gamma_1$} };
\end{tikzpicture}
\qquad 
\begin{tikzpicture}

    \node[species] (x) at (0,0) {$v_1$}; 
    \node[species] (y) at (1.5,0) {$v_2$}; 
    \node[species] (z) at (3,0) {$v_3$}; 
    \node[species] (w) at (4.5,0) {$v_4$}; 

    \draw [-latex,draw,line width=0.5mm] (x) edge node[below]{$e_1$} (y);
    \draw [-latex,draw,line width=0.5mm] (y) edge node[below]{$e_2$} (z);
    \draw [-latex,draw,line width=0.5mm] (z) edge node[below]{$e_3$} (w);

 \draw[mydarkred, dashed,line width=1] (1.2,-1) rectangle (2.7,1);
 \node at (1.6, 0.7) {
  \scalebox{1.1} {\color{mydarkred} $\gamma_2$} };
\end{tikzpicture}
\end{center} 
For the left one, cut ends are both reactions. 
For the right one, the cut ends are a species and a reaction. 
Both ends have to be reactions so that the cut cycle can be a local cycle. 
An element of $\ker S_{11}$ may be extended to a global cycle, or it may be a part of a global noncycle.

The space $({\ker} S)_{{\rm supp}\,\gamma}$ for the same configuration takes into account only the global cycles supported on $\gamma$, 
\begin{center}
  \begin{tikzpicture}[scale=0.8]

    \node[left] at (0, 1.8) {
      \scalebox{1.2}
      {$(\ker S)_{{\rm supp}\,\gamma} = $ }
      };

    \node[right] at (5.3, 2.4) {
      \scalebox{1.2}
      {$\gamma$} 
      };

    \draw[color=myblue, line width=1.8]
    (0.8, 0) -- (0.8,3.6);
  
    \draw[color=mydarkred, line width=1.8]
    (4.7,2) ellipse (0.3 and 0.5); 
  
    \draw[color=myblue, line width=1.8]
    (2, 0) -- (2,3.6);
  
    \draw[color=myblue, line width=1.8]
    (2.8, 0) -- (2.8,1.6);
    \filldraw[color=myblue, fill=myblue, line width=1.4]
    (2.8,1.6) circle (0.08);
  
    \draw[color=myblue, line width=1.8]
    (3.8, 0.5) -- (3.8,3.6);
    \filldraw[color=myblue, fill=myblue, line width=1.4]
    (3.8,0.5) circle (0.08);
  
    \draw[black, line width=2] (0,0) rectangle (6,3.6);
  
    \draw[black, line width=1.5] (1.5,1) rectangle (6,2.8);
  
  \end{tikzpicture}. 
  \end{center} 
Therefore, the coset space is generated by the following elements, 
  \begin{center}
    \begin{tikzpicture}[scale=0.8]
    
      \node[left] at (0, 1.8) {
        \scalebox{1.2}
        {
        $\ker S_{11} / 
        ({\ker} S)_{{\rm supp}\,\gamma} = 
        $
        }
        };

        \node[right] at (5.3, 2.4) {
          \scalebox{1.2}
          {$\gamma$} 
      };
  
      \draw[color=myblue, line width=1.8]
      (0.8, 0) -- (0.8,3.6);
    
      \draw[color=myblue, line width=1.8] 
      (4.7,2) ellipse (0.3 and 0.5); 
    
      \draw[color=myblue, line width=1.8]
      (2, 0) -- (2,3.6);
      \draw[color=mydarkred, line width=1.8]
      (2, 1) -- (2,2.8);
    
      \draw[color=myblue, line width=1.8]
      (2.8, 0) -- (2.8,1.6);
      \filldraw[color=myblue, fill=myblue, line width=1.4]
      (2.8,1.6) circle (0.08);
    
      \draw[color=myblue, line width=1.8]
      (3.8, 0.5) -- (3.8,3.6);
      \draw[color=mydarkred, line width=1.8]
      (3.8, 1) -- (3.8,2.8); 
      \filldraw[color=myblue, fill=myblue, line width=1.4]
      (3.8,0.5) circle (0.08);
    
      \draw[black, line width=2] (0,0) rectangle (6,3.6);
    
      \draw[black, line width=1.5] (1.5,1) rectangle (6,2.8);

  \filldraw[color=mydarkred, fill=mydarkred, line width=1.4]
  (2,1) circle (0.08);
  \filldraw[color=mydarkred, fill=mydarkred, line width=1.4]
  (2,2.8) circle (0.08);

  \filldraw[color=mydarkred, fill=mydarkred, line width=1.4]
  (3.8,1) circle (0.08);
  \filldraw[color=mydarkred, fill=mydarkred, line width=1.4]
  (3.8,2.8) circle (0.08);

    \end{tikzpicture}. 
\end{center} 
As we see in the figure, 
the space 
$\ker S_{11} / 
(\ker S)_{{\rm supp}\,\gamma}$
consists of local cycles that are not global cycles.

We can similarly interpret conserved charges. 
%
The transpose of the stoichiometric matrix, 
$S^T$, can be regarded as a boundary operator acting on 
$C_0(\Gamma)$, 
which is the space of chemical species. 
In this sense, an element of $\coker S$ 
has no boundary, 
with respect to this boundary operator. 
We here visualize this in a similar way to the cycles, 
\begin{center}
  \begin{tikzpicture}[scale=0.8]
  
    \draw[color=mydarkred, line width=1.8]
    (3, 0) -- (2.5,3.6);

     \draw[color=mydarkred, line width=1.8]
     (1, 0) -- (0,1);    

    \draw[color=mydarkred, line width=1.8]
    (1,2.2) ellipse (0.4 and 0.6); 
    
    \draw[color=mydarkpurple, line width=1.8]
    (4.5, 0) -- (4.5,1.6); 
    \filldraw[color=mydarkpurple, fill=mydarkpurple, line width=1.4]
    (4.5,1.6) circle (0.08);

    \draw[color=mydarkpurple, line width=1.8]
    (3.9, 2.2) -- (5,3.2); 
    \filldraw[color=mydarkpurple, fill=mydarkpurple, line width=1.4]
    (3.9,2.2) circle (0.08);
    \filldraw[color=mydarkpurple, fill=mydarkpurple, line width=1.4]
    (5,3.2) circle (0.08);

    \draw[black, line width=2] (0,0) rectangle (6,3.6);
  
    \draw[->,>=stealth,color=black, line width=1]
     (-1,3) -- (2.4,3);

     \draw[->,>=stealth,color=black, line width=1]
     (-1,2.9) -- (0.5,2.6);

     \draw[->,>=stealth,color=black, line width=1]
     (-1,2.8) -- (-0.1,1);

     \node at (-3, 3) {
      \scalebox{1.1} {Conserved charges} };
  
      \draw[->,>=stealth,color=black, line width=1]
      (6.5,1.5) -- (4.7,1); 
      \draw[->,>=stealth,color=black, line width=1]
      (6.5,1.6) -- (4.7,2.6); 
 
      \node at (8, 1.5) {
        \scalebox{1.05}
        {Not conserved} 
        };
  
  \end{tikzpicture}. 
\end{center} 
Note that the boundary of the box is identified. 
The filled circles 
\begin{tikzpicture}
  \filldraw[color=mydarkpurple, fill=mydarkpurple, line width=1.4]
  (0,0) circle (0.08);  
\end{tikzpicture}
represent a source or a drain of chemical species, 
because of which the charge is not conserved.

The space $\coker S$ represents the global conserved charges, and 
$P^0_\gamma (\coker S)$ is the projection of $\coker S$ to $\gamma$, 
\begin{center}
  \begin{tikzpicture}[scale=0.8]
    
      \node[left] at (0, 1.8) {
        \scalebox{1.2}
        {
        $P^0_\gamma (\coker S) = $
        }
        };

        \node[right] at (5.3, 2.4) {
          \scalebox{1.2}
          {$\gamma$} 
      };

      \draw[color=mydarkpurple, line width=1.8]
      (0.5, 0) -- (0.5,3.6);
    
      \draw[color=mydarkpurple, line width=1.8]
      (1.3, 0) -- (1.3,3.6);
      \draw[color=mydarkred, line width=1.8]
      (1.3, 1) -- (1.3,2.8); 

      \draw[color=mydarkpurple, line width=1.8]
      (2.1, 0) -- (2.1,3.6);
      \draw[color=mydarkred, line width=1.8]
      (2.1, 1) -- (2.1,2.8);  

      \draw[color=mydarkpurple, line width=1.8]
      (2.9, 0) -- (2.9,1.6);
      \filldraw[color=mydarkpurple, fill=mydarkpurple, line width=1.4]
      (2.9,1.6) circle (0.08);
    
      \draw[color=mydarkpurple, line width=1.8]
      (3.8, 0.5) -- (3.8,3.2);
      \filldraw[color=mydarkpurple, fill=mydarkpurple, line width=1.4]
      (3.8,3.2) circle (0.08);
      \filldraw[color=mydarkpurple, fill=mydarkpurple, line width=1.4]
      (3.8,0.5) circle (0.08);

      \draw[color=mydarkred, line width=1.8] 
      (4.7,2.1) ellipse (0.5 and 0.3); 

      \draw[color=mydarkpurple, line width=1.8]
      (5, 0.5) -- (5,1.5) -- (6,1.5);

      \filldraw[color=mydarkpurple, fill=mydarkpurple, line width=1.4]
      (5,0.5) circle (0.08);
      
      \draw[black, line width=2] (0,0) rectangle (6,3.6);
    
      \draw[black, line width=1.5] (1,1) rectangle (6,2.8);
    
    \end{tikzpicture}. 
\end{center} 
Here, how the conserved charges are cut does not matter. 
The space $\coker S_{11}$ 
is generated by the red and green elements 
in the following figure, 
\begin{center}
  \begin{tikzpicture}[scale=0.8]
    
      \node[left] at (0, 1.8) {
        \scalebox{1.2}
        {
        $\coker S_{11} = $ 
        }
        };

        \node[right] at (5.3, 2.4) {
          \scalebox{1.2}
          {$\gamma$} 
      };

      \draw[color=mydarkpurple, line width=1.8]
      (0.5, 0) -- (0.5,3.6);
    
      \draw[color=mydarkpurple, line width=1.8]
      (1.3, 0) -- (1.3,3.6);
      \draw[color=mydarkred, line width=1.8]
      (1.3, 1) -- (1.3,2.8); 
 
      \draw[color=mydarkpurple, line width=1.8]
      (2.1, 0) -- (2.1,3.6);

      \draw[color=mydarkpurple, line width=1.8]
      (2.9, 0) -- (2.9,1.6);
      \filldraw[color=mydarkpurple, fill=mydarkpurple, line width=1.4]
      (2.9,1.6) circle (0.08);
    
      \draw[color=mydarkpurple, line width=1.8]
      (3.8, 0.5) -- (3.8,3.2);
      \draw[color=mygreen, line width=1.8]
      (3.8, 1) -- (3.8,2.9);
      \filldraw[color=mydarkpurple, fill=mydarkpurple, line width=1.4]
      (3.8,3.2) circle (0.08);
      \filldraw[color=mydarkpurple, fill=mydarkpurple, line width=1.4]
      (3.8,0.5) circle (0.08);

      \draw[color=mydarkred, line width=1.8] 
      (4.7,2.1) ellipse (0.5 and 0.3); 

      \draw[color=mydarkpurple, line width=1.8]
      (5, 0.5) -- (5,1.5) -- (6,1.5);
      \filldraw[color=mydarkpurple, fill=mydarkpurple, line width=1.4]
      (5,0.5) circle (0.08);

      \draw[color=mygreen, line width=1.8]
      (5, 1) -- (5,1.5) -- (6,1.5);
      
      \draw[black, line width=2] (0,0) rectangle (6,3.6);
    
      \draw[black, line width=1.5] (1,1) rectangle (6,2.8);


      \filldraw[fill=mydarkred, draw=mydarkred, line width=1.4]
       (1.2,0.9) rectangle (1.4,1.1);
       \filldraw[fill=mydarkred, draw=mydarkred, line width=1.4]
       (1.2,2.7) rectangle (1.4,2.9);

       \filldraw[fill=mydarkpurple, draw=mydarkpurple, line width=1.4]
       (2,0.9) rectangle (2.2,1.1);
       \filldraw[fill=white, draw=mydarkpurple, line width=1.4]
       (2.0,2.7) rectangle (2.2,2.9);

       \filldraw[fill=mygreen, draw=mygreen, line width=1.4]
       (3.7,0.9) rectangle (3.9,1.1);
       \filldraw[fill=mygreen, draw=mygreen, line width=1.4]
       (3.7,2.7) rectangle (3.9,2.9);
       \filldraw[fill=mygreen, draw=mygreen, line width=1.4]
       (4.9,0.9) rectangle (5.1,1.1);

       \draw[->,>=stealth,color=black, line width=1]
       (7,3.3) -- (4.0,2.5); 
       \draw[->,>=stealth,color=black, line width=1]
       (7,3.2) -- (5.5,1.6); 
  
       \node at (9, 3.7) {
         \scalebox{1.0} {Emergent conserved charges} };
 
    \end{tikzpicture}, 
\end{center} 
where filled and open rectangles mean boundaries 
with chemical species and reactions, respectively. 
The parts we denoted by green lines, 
\begin{tikzpicture}
  \draw[color=mygreen, line width=1.8]
  (0, 0) -- (1,0);
  \path (0,-0.1); 
\end{tikzpicture}, 
are {\it emergent conserved charges}, 
which are conserved when it is seen in a subnetwork 
but not conserved in $\Gamma$. 
In fact, the appearance of emergent conserved charges 
typically leads to ``unphysical'' systems, 
in the sense that 
either a steady state does not exist or 
the matrix $A$ is not invertible
and the response of the system to the perturbation of parameters is not well-defined.\footnote{
We discuss more on this point in Appendix~\ref{sec:app-emergent}. 
} 
For example, one can consider the following 
network and subnetwork, 
\begin{center}
\begin{tikzpicture}

    \node (x) at (0.25,0) {};
    \node[species] (y) at (1.5,0) {$v_1$}; 
    \node[species] (z) at (3,0) {$v_2$}; 
    \node[species] (w) at (4.5,0) {$v_3$}; 

    \draw [-latex,draw,line width=0.5mm] (x) edge node[below]{$e_1$} (y);
    \draw [-latex,draw,line width=0.5mm] (y) edge node[below]{$e_2$} (z);
    \draw [-latex,draw,line width=0.5mm] (z) edge node[below]{$e_3$} (w);

    \draw[mydarkred, dashed,line width=1] (1.2,-1) rectangle (5,1);
    \node at (1.5, 0.7) {\scalebox{1.1} {\color{mydarkred} $\gamma$} };
    
\end{tikzpicture} 
\end{center}
The whole network does not have a conserved charge, 
but the subnetwork $\gamma$ has one, $v_1 + v_2 + v_3$. 
However, such a reaction network cannot reach a steady state, 
since the concentration of $v_3$ continues to increase. 
%
%
When we take the difference 
$|P^0_\gamma (\coker S)| - |\coker S_{11}|$ , 
we can count the number of lost conserved charges 
minus the number of emergent conserved charges (if any), 
\begin{center}
  \begin{tikzpicture}[scale=0.8, every node/.style={scale=0.8}]
    
      \node[left] at (0, 1.8) {
        \scalebox{1.3}
        {
        $|P^0_\gamma (\coker S)|-|\coker S_{11} |
        = $
        }
        };

        \node[right] at (5.3, 2.4) {
          \scalebox{1.2}
          {$\gamma$} 
      };

      \draw[color=mydarkpurple, line width=1.8]
      (0.5, 0) -- (0.5,3.6);
    
      \draw[color=mydarkpurple, line width=1.8]
      (1.3, 0) -- (1.3,3.6);

      \draw[color=mydarkpurple, line width=1.8]
      (2.1, 0) -- (2.1,3.6);
      \draw[color=mydarkred, line width=1.8]
      (2.1, 1) -- (2.1,2.8);  

      \draw[color=mydarkpurple, line width=1.8]
      (2.9, 0) -- (2.9,1.6);
      \filldraw[color=mydarkpurple, fill=mydarkpurple, line width=1.4]
      (2.9,1.6) circle (0.08);
    
      \draw[color=mydarkpurple, line width=1.8]
      (3.8, 0.5) -- (3.8,3.2);
      \filldraw[color=mydarkpurple, fill=mydarkpurple, line width=1.4]
      (3.8,3.2) circle (0.08);
      \filldraw[color=mydarkpurple, fill=mydarkpurple, line width=1.4]
      (3.8,0.5) circle (0.08);

      \draw[color=mydarkpurple, line width=1.8] 
      (4.7,2.1) ellipse (0.5 and 0.3); 

      \draw[color=mydarkpurple, line width=1.8]
      (5, 0.5) -- (5,1.5) -- (6,1.5);

      \filldraw[color=mydarkpurple, fill=mydarkpurple, line width=1.4]
      (5,0.5) circle (0.08);
      
      \draw[black, line width=2] (0,0) rectangle (6,3.6);
    
      \draw[black, line width=1.5] (1,1) rectangle (6,2.8);

      \filldraw[fill=mydarkpurple, draw=mydarkpurple, line width=1.4]
       (1.2,0.9) rectangle (1.4,1.1);
       \filldraw[fill=mydarkpurple, draw=mydarkpurple, line width=1.4]
       (1.2,2.7) rectangle (1.4,2.9);

       \filldraw[fill=mydarkred, draw=mydarkred, line width=1.4]
       (2,0.9) rectangle (2.2,1.1);
       \filldraw[fill=white, draw=mydarkred, line width=1.4]
       (2.0,2.7) rectangle (2.2,2.9);

\end{tikzpicture}\begin{tikzpicture}[scale=0.8]

      \node[left] at (0, 1.8) {
        \scalebox{1.5}
        {
        $-$ 
        }
        };

        \node[right] at (5.3, 2.4) {
          \scalebox{1.2}
          {$\gamma$} 
      };

      \draw[color=mydarkpurple, line width=1.8]
      (0.5, 0) -- (0.5,3.6);
    
      \draw[color=mydarkpurple, line width=1.8]
      (1.3, 0) -- (1.3,3.6);
      \draw[color=mydarkpurple, line width=1.8]
      (1.3, 1) -- (1.3,2.8); 
 
      \draw[color=mydarkpurple, line width=1.8]
      (2.1, 0) -- (2.1,3.6);

      \draw[color=mydarkpurple, line width=1.8]
      (2.9, 0) -- (2.9,1.6);
      \filldraw[color=mydarkpurple, fill=mydarkpurple, line width=1.4]
      (2.9,1.6) circle (0.08);
    
      \draw[color=mydarkpurple, line width=1.8]
      (3.8, 0.5) -- (3.8,3.2);
      \draw[color=mygreen, line width=1.8]
      (3.8, 1) -- (3.8,2.9);
      \filldraw[color=mydarkpurple, fill=mydarkpurple, line width=1.4]
      (3.8,3.2) circle (0.08);
      \filldraw[color=mydarkpurple, fill=mydarkpurple, line width=1.4]
      (3.8,0.5) circle (0.08);

      \draw[color=mydarkpurple, line width=1.8] 
      (4.7,2.1) ellipse (0.5 and 0.3); 

      \draw[color=mydarkpurple, line width=1.8]
      (5, 0.5) -- (5,1.5) -- (6,1.5);
      \filldraw[color=mydarkpurple, fill=mydarkpurple, line width=1.4]
      (5,0.5) circle (0.08);

      \draw[color=mygreen, line width=1.8]
      (5, 1) -- (5,1.5) -- (6,1.5);
      
      \draw[black, line width=2] (0,0) rectangle (6,3.6);
    
      \draw[black, line width=1.5] (1,1) rectangle (6,2.8);


      \filldraw[fill=mydarkpurple, draw=mydarkpurple, line width=1.4]
       (1.2,0.9) rectangle (1.4,1.1);
       \filldraw[fill=mydarkpurple, draw=mydarkpurple, line width=1.4]
       (1.2,2.7) rectangle (1.4,2.9);

       \filldraw[fill=mydarkpurple, draw=mydarkpurple, line width=1.4]
       (2,0.9) rectangle (2.2,1.1);
       \filldraw[fill=white, draw=mydarkpurple, line width=1.4]
       (2.0,2.7) rectangle (2.2,2.9);

       \filldraw[fill=mygreen, draw=mygreen, line width=1.4]
       (3.7,0.9) rectangle (3.9,1.1);
       \filldraw[fill=mygreen, draw=mygreen, line width=1.4]
       (3.7,2.7) rectangle (3.9,2.9);
       \filldraw[fill=mygreen, draw=mygreen, line width=1.4]
       (4.9,0.9) rectangle (5.1,1.1);  

  
 
\end{tikzpicture}, 
\end{center}
where the part colored in red 
in the first term indicates lost conserved charges, 
that are conserved in $\Gamma$ 
but their projections to $\gamma$ are not. 
This equation is equal to 
the latter two terms of the decomposition (\ref{eq:lambda-decom}), 
$d_l (\gamma) - \widetilde d(\gamma)$.

\begin{example}
Consider a monomolecular network 
$\Gamma = (V,E)= ( \{ v_1,v_2,v_3 \}, \{e_1,e_2\} )$ 
with the following structure, 
\begin{center}
    \begin{tikzpicture}

    \node[species] (x) at (0,0) {$v_1$}; 
    \node[species] (y) at (1.5,0) {$v_2$}; 
    \node[species] (z) at (3,0) {$v_3$}; 

    \draw [-latex,draw,line width=0.5mm] (x) edge node[below]{$e_1$} (y);
    \draw [-latex,draw,line width=0.5mm] (y) edge node[below]{$e_2$} (z);

    \draw[mydarkred, dashed,line width=1] (0.3,-1) rectangle (2.7,1);
    \node[left] at (1, 0.7) {\scalebox{1.1} {\color{mydarkred} $\gamma_1$} };
\end{tikzpicture} 
\end{center}
We take a subnetwork $\gamma_1 = ( \{ v_2 \}, \{e_1,e_2\})$ 
that is indicated by a box. 
The whole network does not have a cycle, and 
the subnetwork $\gamma_1$ has one emergent cycle given by 
$c = e_1 + e_2$. 
Also, $\Gamma$ has one conserved charge, $d=v_1+v_2+v_3$. 
Its projection to $\gamma_1$ is given by $v_2$ 
and it is not a conserved charge in $\gamma_1$. 
So we have one lost conserved charge. 
Each integer appearing in the decomposition of $\lambda(\gamma_1)$ is 
\begin{equation}
  \widetilde c(\gamma_1)=1, 
  \quad d_l (\gamma_1) = 1, 
  \quad \widetilde d (\gamma_1) = 0,
\end{equation}
and $\lambda(\gamma_1) = 2$. 
For the same $\Gamma$, let us consider a different choice of a subnetwork, 
\begin{center}
  \begin{tikzpicture}

    \node[species] (x) at (0,0) {$v_1$}; 
    \node[species] (y) at (1.5,0) {$v_2$}; 
    \node[species] (z) at (3,0) {$v_3$}; 

    \draw [-latex,draw,line width=0.5mm] (x) edge node[below]{$e_1$} (y);
    \draw [-latex,draw,line width=0.5mm] (y) edge node[below]{$e_2$} (z);

    \draw[mydarkred, dashed,line width=1] (1.2,-1) rectangle (2.7,1);
    \node at (1.5, 0.7) {
    \scalebox{1.1} {\color{mydarkred} $\gamma_2$} };
\end{tikzpicture}
\end{center}
The subnetwork $\gamma_2$  does not have a cycle,
and there is one lost conserved charge, so we have 
\begin{equation}
  \widetilde c(\gamma_2)=0,
  \quad d_l (\gamma_2) = 1, 
  \quad \widetilde d(\gamma_2)=0,
  \quad \lambda(\gamma_2)=1. 
\end{equation}
\end{example}

\begin{example}
Consider a network $(V,E) = ( \{ v_1,v_2,v_3 \}, \{e_1,e_2,e_3\} )$ 
with the following structure, 
\begin{center}
  \begin{tikzpicture}

    \node[species] (x) at (0,0) {$v_1$}; 
    \node[species] (y) at (2,0) {$v_2$}; 
    \node[species] (z) at (1,1.2) {$v_3$}; 

    \draw [-latex,draw,line width=0.5mm] (x) edge node[below]{$e_1$} (y);
    \draw [-latex,draw,line width=0.5mm] (y) edge node[above right]{$e_2$} (z);
    \draw [-latex,draw,line width=0.5mm] (z) edge node[above left]{$e_3$} (x);

    \draw[color=mydarkred, line width=1, dashed]  
    (1,0.89) ellipse (1.2 and 0.8); 

    \node at (0,1.8) { \scalebox{1.1} {\color{mydarkred} $\gamma$} };

\end{tikzpicture}
\end{center}
If we choose a subnetwork $\gamma = ( \{v_3 \},\{e_2,e_3\})$, 
\begin{equation}
  \widetilde c(\gamma)=1, \quad d_l (\gamma) = 1, 
  \quad \widetilde d(\gamma)=0, 
  \quad \lambda(\gamma) = 2. 
\end{equation}
The subnetwork $\gamma$ has one emergent cycle and one lost conserved charge and the influence index is $2$. 
\end{example}

\subsection{Embedding of A-matrices}\label{sec:app-embedding} 

\begin{figure}[tbh]
  \centering
  \includegraphics[keepaspectratio, scale=0.5]{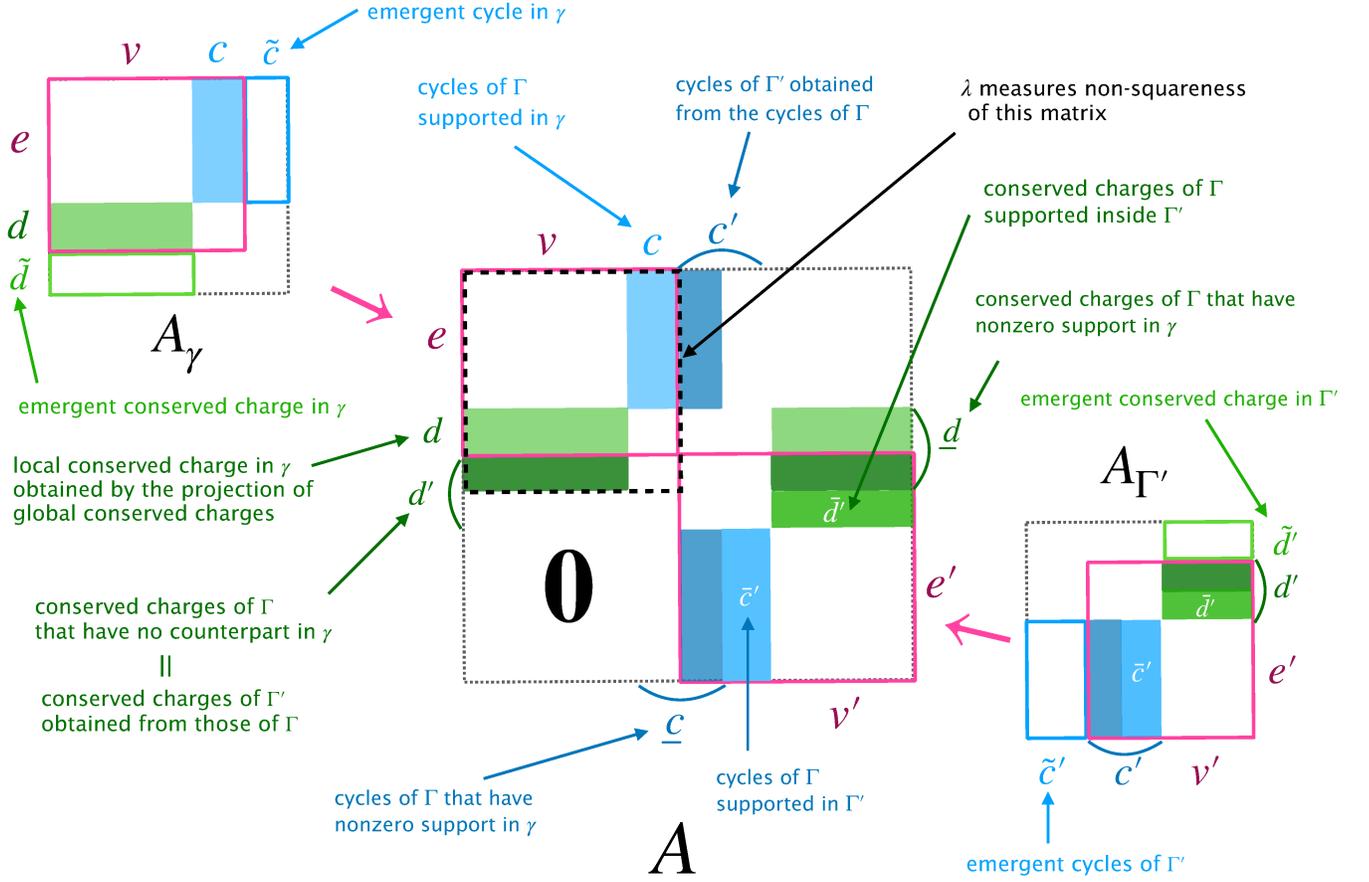} 
  \caption{
Embedding of the A-matrices 
for a generic output-complete subnetwork. 
  }
  \label{fig:mat-a}
 \end{figure}

It is useful to look at the A-matrix to visualize 
the relations among cycles/conserved charges of various types
in subnetworks and reduced networks. 

Let us first summarize the notations. 
In this section, we suppress the dependence on $\gamma$ for notational simplicity.
General rules are as follows. Quantities with a tilde are emergent ones, and we use character $c$ for cycles and $d$ for conserved charges. The numbers with a prime are associated with $\Gamma'$. 
The relevant numbers are listed as follows:
\begin{itemize}
    \item $v,v'$: number of chemical species in $\gamma,\Gamma'$
    \item $e,e'$: number of chemical reactions in $\gamma,\Gamma'$
    \item $\widetilde c, \widetilde c'$: number of emergent cycles of $\gamma, \Gamma'$
    \item $\widetilde d, \widetilde d'$: number of emergent conserved charges of $\gamma, \Gamma'$
    \item $c, c'$: number of cycles of $\Gamma$, whose projections to $\gamma, \Gamma'$  are also cycles of $\gamma, \Gamma'$
    \item $d, d'$: number of conserved charges of $\Gamma$, whose projections to $\gamma, \Gamma'$  are also conserved in $\gamma, \Gamma'$
    \item $\bar c', \bar d'$: number of cycles/conserved charges of $\Gamma$ that are locally supported in $\Gamma'$ 
    \item $\underline{c}, \underline{d}$: number of 
    cycles/conserved charges of $\Gamma$ that have nonzero support in $\gamma$ 
\end{itemize}

In Fig.~\ref{fig:mat-a}, 
we illustrate a more detailed structure of the matrix $A$ than Fig.~\ref{fig:mat-a-lol}. 
In the center is the matrix $A$ of the total system $\Gamma$. 
We choose an output-complete subnetwork $\gamma$, 
and bring the rows/columns related to $\gamma$ to the upper-left part. 
Then the matrix $A$ looks like one in the center of  Fig.~\ref{fig:mat-a}. 
We consider an output-complete subnetwork $\gamma$, 
and the A-matrix of $\gamma$, which we denote by $A_\gamma$, 
is shown in the upper-left part of  Fig.~\ref{fig:mat-a}. 
The part surrounded by a pink rectangle is the common part of $A_\gamma$ and $A$. 
The subnetwork $\gamma$ can in general contain
additional (i.e., emergent) cycles and conserved charges, 
whose numbers are denoted by $\widetilde c$ and $\widetilde d$. 
Because the matrix $A_\gamma$ is square, we have the relation, 
\begin{equation}
  e+d+\widetilde d = v+c+\widetilde c.  
  \label{eq:a-gamma-sq}
\end{equation}
This equation is in fact the same as Eq.~(\ref{eq:euler-sub}). 
Similarly, 
we can consider the matrix $A$ for the network $\Gamma' = \Gamma/\gamma$ \footnote{
  When $\widetilde c (\gamma) > 0$, 
  the equation of motion contains some terms 
  that cannot be determined, as in Eq.~(\ref{eq:rate-red-1}). 
  Here, we formally consider a reduced network $\Gamma'$ 
  which is defined with the generalized Schur complement $S'$. 
  In this sense, the property of the reduced network 
  defined this way cannot be fully constrained 
  from the properties of $\Gamma$ and $\gamma$. 
}
obtained by reducing $\gamma$ from $\Gamma$. 
The numbers of the emergent cycles and emergent conserved charges 
in $\Gamma'$ are denoted by $\widetilde c'$ and $\widetilde d'$. 
The matrix $A_{\Gamma'}$ is also square and we have 
\begin{equation}
  e'+d'+\widetilde d'  = v'+c'+\widetilde c'  . 
  \label{eq:a-gammap-sq}
\end{equation}
The influence index is given by 
\begin{equation}
\lambda\coloneqq e+d
+ d'  -\bar d' - v - c   , 
\end{equation}
which measures how far the rectangle 
in the upper-left part (indicated by black dashed lines)
is from a square matrix. 
Note that this expression is consistent with the one in Sec.~\ref{sec:lol-lol} since $\underline{d} = d+d'-\bar d'$. 
Using Eq.~(\ref{eq:a-gamma-sq}), we can also express $\lambda$ as 
\begin{equation}
  \lambda = \widetilde c
  + d' - \bar d' - \widetilde d , 
\end{equation}
which is the same the decomposition (\ref{eq:lambda-decom}). 
We can also consider a similar quantity 
that measure the non-squareness of the lower-right part, 
\begin{equation}
  \lambda'\coloneqq v'+c' 
  -e'-\bar d'
  = 
  \widetilde d' + d' 
  - \bar d'  
  - \widetilde c' , 
\end{equation}
where the second expression is obtained using Eq.~(\ref{eq:a-gammap-sq}). 
In fact, 
due to the squareness of the whole matrix $A$, 
$\lambda'$ is equal to the influence index, 
$\lambda = \lambda'$. 
This results in the following relation, 
\begin{equation}
  \widetilde c  - \widetilde d
  +
  \widetilde c' - \widetilde d'
  = 
  0 . 
  \label{eq:c-tc=td-tdp}
\end{equation}

\section{Emergent conserved charges in chemical reaction networks}\label{sec:app-emergent}

In this section, 
we discuss the role of emergent conserved charges in chemical reaction networks. 

\subsection{Systems with emergent conserved charges}

As far as we observe, the chemical reaction systems 
with emergent conserved charges in output-complete subnetworks 
are {\it pathological}, 
in either of the following senses: 
\begin{itemize}
  \item[(A)] The steady-state condition, 
  \begin{equation}
  \sum_A S_{iA} r_A (\bm x (\bm k, \bm \ell), k_A)=0, 
    \quad 
  \sum_i  d_i^{\bar\alpha} x_i = \ell^{\bar\alpha}  , 
  \end{equation}
  does not fully 
  determine the steady-state solution, and arbitrary parameters have to be introduced to specify the solution. 
  \item[(B)] No steady-state solution exists. 
  \item[(C)] The reaction kinetics is unphysical. 
\end{itemize}
Below, we discuss some examples of each case. 

\subsubsection{Pattern A : solutions have arbitrary parameters}

An example of pattern (A) is given by 
\begin{equation}
  \frac{d}{dt} 
  \begin{pmatrix}
    x \\
    y
  \end{pmatrix}
  = 
  \begin{pmatrix}
    1 & 0 
    \\ 
    -1 & -1 
  \end{pmatrix}
  \begin{pmatrix}
    r_1 \\
    r_2 
  \end{pmatrix}, 
  \quad \,\,\,
  \begin{pmatrix}
    r_1 \\
    r_2 
  \end{pmatrix}
 = 
 \begin{pmatrix}
   k_1 y \\
   k_2 y
 \end{pmatrix}. 
\end{equation}
The matrix $A$ for this system is 
\begin{equation}
  A 
  = 
  \begin{pmatrix}
    0 & k_1 \\
    0 & k_2 
  \end{pmatrix}. 
\end{equation}
This is not invertible. 
The steady-state solution 
for the mass-action kinetics is given by 
\begin{equation}
\begin{pmatrix}
  \bar r_1 \\
  \bar r_2 
\end{pmatrix}
= 
\bm 0 , 
\quad 
\begin{pmatrix}
  \bar x \\
  \bar y 
\end{pmatrix}
= 
\begin{pmatrix}
  m 
  \\ 
  0 
\end{pmatrix}, 
\end{equation}
where $m$ is an arbitrary parameter. 
If we choose a subnetwork $\gamma = \{ x \}$, 
there is an emergent conserved charge. 
Let us consider the fluctuations around the steady state, 
\begin{equation}
  \frac{d}{dt}
  \begin{pmatrix}
    \delta x \\
    \delta y 
  \end{pmatrix}
  = 
  \begin{pmatrix}
    1 & 0 \\
    -1 & -1 
  \end{pmatrix}
  \begin{pmatrix}
    k_1 \delta y \\
    k_2 \delta y 
  \end{pmatrix}
  = 
  \begin{pmatrix}
    k_1 \delta y  \\
    - (k_1 +k_2) \delta y 
  \end{pmatrix},  
\end{equation}
where $\delta x (t) \coloneqq x(t) - \bar x$ indicates the fluctuation from the steady state. 
The fluctuation 
associated with the emergent charge, $\delta x$, is a zero mode. 
This means that the system is not asymptotically stable. 

Generically, when we have to introduce arbitrary parameters $\bm m$, 
the matrix $A$ has a null vector, as we see below. 
The steady-state condition reads 
\begin{eqnarray}
 r_A (\bm x (\bm k,\bm \ell, \bm m),\bm k) 
  &= &
 - \sum_\alpha \mu_\alpha (\bm k, \bm \ell, \bm m) c^\alpha_A , 
  \\
 \sum_i  d^{\bar \alpha}_i x_i (\bm k,\bm \ell, \bm m) 
  &=& \ell^{\bar\alpha}. 
\end{eqnarray}
By taking the derivative of those equations 
with respect to $\bm m$, we find 
\begin{equation}
  \begin{pmatrix}
    r_{A,i} & c^\alpha_A \\
    d^{\bar \alpha}_i & \bm 0
  \end{pmatrix}
  \frac{\p}{\p m^a} 
  \begin{pmatrix}
    x_i \\
    \mu_\alpha 
  \end{pmatrix}
  = 
  \bm 0 . 
\end{equation}
This means that $A$ has a null vector
and $\det A = 0$.

\subsubsection{Pattern B: no steady-state solution}

An example of pattern (B) is given by 
the following reaction system
with the mass-action kinetics, 
\begin{equation}
\frac{d}{dt}
\begin{pmatrix}
  x_1 \\
  x_2 \\
  x_3
\end{pmatrix}
  = 
  \begin{pmatrix}
   -1 & 0 & 1 & 0 \\
   -1 & 0 & 0 & 1 \\
    1 & -1 & 0 & 0 
  \end{pmatrix}
  \begin{pmatrix}
    r_1 (x_1, x_2) \\
    r_2 (x_3)\\
    r_3 \\
    r_4 
  \end{pmatrix}, 
\quad 
  \begin{pmatrix}
    r_1 \\
    r_2 \\
    r_3 \\
    r_4 
  \end{pmatrix}
= 
\begin{pmatrix}
  k_1 x_1 x_2 \\
  k_2 x_3 \\
  k_3  \\
  k_4  \\
\end{pmatrix}  . 
\end{equation}
The steady-state solution does not exist in general. 
We need to fine-tune the parameters to have a solution. 
When $k_3 = k_4$ is satisfied, 
we have a steady state 
\begin{equation}
\bar {\bm r}
= 
k_3 
\begin{pmatrix}
  1 &  1&1&1
\end{pmatrix}^T . 
\end{equation}
The matrix $A$ is 
\begin{equation}
    A
    = 
    \begin{pmatrix}
     \p_1 r_1 & \p_2 r_1 & 0 & 1 \\
     0 & 0 & \p_3 r_2 & 1 \\
     0 & 0 & 0 & 1 \\
     0 & 0 & 0 & 1 \\
    \end{pmatrix},
\end{equation}
where $\p_j r_i \coloneqq \p r_i / \p x_j$ and it is evaluated at the steady state. 
This matrix is not regular, $\det A = 0$.

Let us choose an output-complete subnetwork 
$\gamma = (\{v_1,v_2\},\{e_1\})$. 
The matrix $A$ for the subnetwork is 
\begin{equation}
  A_\gamma 
  = 
  \begin{pmatrix}
    \p_1 r_1 & \p_2 r_1 \\
    1 & -1 
  \end{pmatrix}. 
\end{equation}
The subnetwork has an emergent conserved charge, 
$\widetilde {\bm d}_1^T 
  = 
  \begin{pmatrix}
    1 & -1 
  \end{pmatrix}  .
$
The time derivative of this charge is 
\begin{equation}
  \frac{d}{dt}
 \widetilde{\bm d}^T 
 \bm x_\gamma 
 = 
  \frac{d}{dt}
  \begin{pmatrix}
    1 & -1 
  \end{pmatrix}
  \begin{pmatrix}
    x_1 \\
    x_2
  \end{pmatrix}
  = 
  r_3 - r_4 
  = k_3 - k_4 .
\end{equation}
Although $\widetilde{\bm d}^T S \neq \bm 0$, 
where $\widetilde{\bm d}^T \coloneqq (\widetilde{\bm d}^T_1 \,\,a)$, 
for any parameter $a$, 
when the steady state exists, 
$k_3=k_4$, the combination $\widetilde{\bm d}^T \bm x$
is in fact a conserved charge of the whole system. 
It is not conserved unless the parameters are fine-tuned.

Let us consider the fluctuations around the steady state, 
\begin{equation}
  \frac{d}{dt}
  \begin{pmatrix}
    \delta x_1 \\
    \delta x_2 \\
    \delta x_3
  \end{pmatrix}
   = 
    \begin{pmatrix}
     -1 & 0 & 1 & 0 \\
     -1 & 0 & 0 & 1 \\
      1 & -1 & 0 & 0 
    \end{pmatrix}
    \begin{pmatrix}
     \delta  r_1 (x_1, x_2) \\
     \delta  r_2 (x_3)\\
     \delta r_3 \\
     \delta r_4 
   \end{pmatrix}. 
\end{equation}
The fluctuation associated with the emergent conserved charge 
leads to a zero mode, 
\begin{equation}
  \frac{d}{dt}
\delta ( x_1 -  x_2) 
= \delta (r_3 - r_4)
= 0. 
\end{equation}  

\subsubsection{Pattern C: example with emergent conserved charges and  unphysical kinetics} 

Here we discuss an example 
that has a subnetwork with vanishing influence index 
and also has an emergent conserved charge, 
while the kinetics is unphysical. 
The rate equation of this system is 
\begin{equation}
  \frac{d}{dt}
  \begin{pmatrix}
    x_1 \\
    x_2 \\
    x_3 \\
    x_4
  \end{pmatrix}
  = 
  \begin{pmatrix}
    -1 & 1 & 1&1 \\
    -1 & 1 & 1&2 \\
    1 & 0 & 0 & 0\\
    0&-2&-1&-1 
  \end{pmatrix}
  \begin{pmatrix}
    r_1 (x_1,x_2) \\
    r_2 (x_2,x_4)\\
    r_3 (x_3, x_4) \\
    r_4 (x_4) 
  \end{pmatrix}. 
  \label{eq:ex-unphysical}
\end{equation}
We have added catalytic dependencies 
in 
the reactions $r_2(x_2,x_4)$ and $r_3(x_3,x_4)$. 
The stoichiometric matrix has 
a trivial kernel and $\bar r_A=0$ at the steady state. 
The cokernel of $S$ is also trivial. 

The matrix $A$ is 
\begin{equation}
  A
  =
  \begin{pmatrix}
\p_1 r_1 & \p_2 r_1 & 0 & 0 \\
 0 & \p_2 r_2  & 0     & \p_4 r_2 \\
 0 & 0& \p_3 r_3 & \p_4 r_3 \\
0&0&0& \p_4 r_4  
\end{pmatrix} . 
\end{equation}
Its determinant is in general nonvanishing, 
\begin{equation}
  \det A = \p_1 r_1 \, \p_2 r_2 \, \p_3 r_3 \, \p_4 r_4  . 
\end{equation}

Let us consider an output-complete subnetwork 
$\gamma =( \{ x_1,x_2\},\{r_1,r_2 \})$. 
The index of $\gamma$ is zero, 
\begin{equation}
  \lambda(\gamma)
  = -2+2 - 0 + 0 
  = 0, 
\end{equation}
and hence it is a buffering structure. 
The matrix $A$ of the local system reads 
\begin{equation}
  A_\gamma 
  = 
  \begin{pmatrix}
    \p_1 r_1 & \p_2 r_1 &1 \\
    0 & \p_2 r_2 & 1 \\
    1 & - 1 & 0
\end{pmatrix}. 
\end{equation}
Although $\gamma$ is a buffering structure, 
the subnetwork $\gamma$ 
has one emergent cycle and one emergent conserved charge. 

For the mass-action kinetics, 
\begin{equation}
\begin{pmatrix}
  r_1 (x_1,x_2) \\
  r_2 (x_2,x_4)\\
  r_3 (x_3, x_4) \\
  r_4 (x_4) 
\end{pmatrix}
= 
\begin{pmatrix}
  k_1 x_1 x_2 \\
  k_2 x_2 x_4 \\
  k_3 x_3 x_4 \\
  k_4 x_4 
\end{pmatrix}  , 
\end{equation}
the steady-state concentrations are 
\begin{equation}
  \begin{pmatrix}
    \bar x_1 \\
    \bar x_2 \\
    \bar x_3 \\
    \bar x_4 \\
  \end{pmatrix}
  = 
  \begin{pmatrix} 
   0 \\
    m_1   \\
    m_2 \\
   0 
\end{pmatrix} 
\text{ or }
  \begin{pmatrix} 
   m'_1 \\
    0   \\
    m'_2 \\
   0 
\end{pmatrix} ,
\end{equation}
where $m_1,m_2,m'_1,m'_2$ are arbitrary parameters. 
With this kinetics, $\det A = 0$. 

Let us instead employ the following kinetics, 
\begin{equation}
  \begin{pmatrix}
    r_1 (x_1,x_2) \\
    r_2 (x_2,x_4)\\
    r_3 (x_3, x_4) \\
    r_4 (x_4) 
  \end{pmatrix}
  = 
  \begin{pmatrix}
    k_1 (x_1 + x_2) \\
    k_2 (x_2+ x_4) \\
    k_3 (x_3 +x_4 )\\
    k_4 x_4 
\end{pmatrix}  ,
\label{eq:kin-unphys}
\end{equation}
where all the concentrations vanish at the steady state, 
$\bar x_i = 0$. 
The matrix $A$ is now invertible, $\det A = k_1 k_2 k_3 k_4 \neq 0$. 
Although $A$ is regular, 
the sensitivity is trivial, $\p_A \bar x_i = 0$, 
since $\p_A \bar r_B = 0$ at the steady state. 
The kinetics (\ref{eq:kin-unphys}) is not physically sound, 
because the reaction $r_2 (x_2, x_4)$ can be nonzero 
even if the concentration of the reactant $x_4$ is zero (note that $x_2$ is catalytic). 
The same is true for $r_3$.

Let us consider the fluctuations around the steady state
of the emergent conserved charge. 
\begin{equation}
  \frac{d}{dt}
  (\delta x_1 - \delta x_2)
  = 
  -\delta r_4 (x_4)
  = - r_{4,4} \delta x_4 ,
\end{equation}
where we denote $r_{i,j}\coloneqq \p_j r_i$. 
The time derivative of $\delta x_4$ is 
\begin{equation}
\begin{split}
 \frac{d}{dt} \delta x_4 
 &= 
 - 2 (r_{2,2}\delta x_2 + r_{2,4} \delta x_4 )
 - (r_{3,3}\delta x_3 + r_{3,4} \delta x_4)
 - r_{4,4} \delta x_4 
 \\
&= 
- 2 r_{2,2}\, \delta x_2 
 - r_{3,3}\, \delta x_3 
 +(
   - 2 r_{2,4} 
   - r_{3,4}
   - r_{4,4}
 )\delta x_4 . 
  \end{split} 
\end{equation}
Hence, there is a zero mode when $r_{2,2}=0$ at the steady state, 
and then $A$ is not invertible.

\subsubsection{Emergent conserved charges and zero modes of fluctuations}

We denote the matrix $R$ whose components are given by 
$r_{i,j}= \p_j r_i$ and  
and separate it into block matrices, 
\begin{equation}
  R = 
  \begin{pmatrix}
    R_{11} & R_{12} \\
    R_{21} & R_{22}
  \end{pmatrix}, 
\end{equation}
according to the separation 
$\bm x = \begin{pmatrix}
  \bm x_1 \\
  \bm x_2
\end{pmatrix}$. 
The linear fluctuations around the steady state 
satisfy the following equations of motion, 
\begin{equation}
  \frac{d}{dt} 
\begin{pmatrix}
  \delta \bm x_1 \\
  \delta \bm x_2 
\end{pmatrix}  
= 
S R \delta \bm x 
= 
\begin{pmatrix}
S_{11 } R_{11}\delta \bm x_1 
+ 
(S_{11}  R_{12}+S_{12}R_{22} )\delta \bm x_2
\\
S_{21 } R_{11}\delta \bm x_1 
+ 
(S_{21}  R_{12}+S_{22}R_{22} )\delta \bm x_2 
\end{pmatrix} ,
\end{equation}
where we have also separated the stoichiometric matrix into submatrices,
and 
we used $R_{21}=\bm 0$, which follows from the output-completeness. 
We consider the fluctuation associated with 
the emergent conserved charge $\widetilde {\bm d}_1$, 
\begin{equation}
\frac{d}{dt} 
\widetilde{\bm d}_1^T \delta \bm x_1 
  = 
\widetilde{\bm d}_1^T S_{12} R_{22} \delta \bm x_2 . 
\label{eq:d-delta-x}
\end{equation}
Since it is an emergent charge, $\widetilde {\bm d}_1^T S_{12} \neq \bm 0$. 
The time derivative of the RHS of Eq.~(\ref{eq:d-delta-x}) reads 
\begin{equation}
  \frac{d}{dt} 
  \widetilde {\bm d}_1^T S_{12}R_{22}  \delta \bm x_2 
=
\widetilde {\bm d}_1^T S_{12}R_{22} S_{21 } R_{11}\delta \bm x_1 
 + (\cdots) \delta \bm x_2 .  
 \label{eq:dsrx}
\end{equation}
Therefore, 
an emergent conserved charge results in a zero mode when 
$\widetilde {\bm d}_1^T S_{12}R_{22} S_{21 } R_{11}$ vanishes. 
We are not aware of a physical example 
in which $\widetilde {\bm d}_1^T S_{12}R_{22} S_{21 } R_{11}$ does not vanish. 
In the example given by Eq.~(\ref{eq:ex-unphysical}), 
the first term of Eq.~(\ref{eq:dsrx}) is computed as 
\begin{equation}
  \begin{split}
  \widetilde {\bm d}_1^T
  S_{12}
  R_{22}
  S_{21}
  R_{11}
  \delta \bm x_1 
  &= 
  \begin{pmatrix}
    1 & -1
  \end{pmatrix}
  \begin{pmatrix}
    1 & 1 \\
    1 & 2
  \end{pmatrix}
  \begin{pmatrix}
    r_{3,3}&r_{3,4} \\
    0 & r_{4,4}
  \end{pmatrix}
  \begin{pmatrix}
    1 & 0 \\
    0 & -2
  \end{pmatrix}
  \begin{pmatrix}
    r_{1,1}&r_{1,2} \\
    0 & r_{2,2}
  \end{pmatrix}  
  \begin{pmatrix}
    \delta x_1 \\
    \delta x_2 
  \end{pmatrix}
\\
&= 
\begin{pmatrix}
  0 & 2 r_{4,4} r_{2,2}
\end{pmatrix}      
\begin{pmatrix}
  \delta x_1 \\
  \delta x_2 
\end{pmatrix}. 
\end{split}
\end{equation}
When this vanishes, the matrix $A$ acquires a zero mode and is not invertible.

\subsection{Absence of emergent conserved charges in monomolecular reaction networks }\label{sec:app-mono}

Here we consider monomolecular reaction networks and 
we show that, 
if there exists a nonzero emergent charge
in an output-complete subnetwork, 
the index $\lambda(\gamma)$ is necessarily negative. 
If the index is negative in an output-complete subnetwork, 
the matrix $A$ is not invertible, 
and the response of the system to the parameter-perturbation is not well-defined. 
We here show the following statement:  
\begin{theorem}
  Suppose that $\gamma$ is a connected and output-complete subnetwork of a monomolecular reaction network $\Gamma$. 
  If $\widetilde d(\gamma) > 0$, then  
  the influence index $\lambda(\gamma)$ is negative. 
\end{theorem}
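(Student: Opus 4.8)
The plan is to run the argument through the decomposition of the influence index in Eq.~\eqref{eq:lambda-decom}, $\lambda(\gamma) = \widetilde c(\gamma) + d_l(\gamma) - \widetilde d(\gamma)$, and to show that in the monomolecular case the hypothesis $\widetilde d(\gamma)>0$ forces $\widetilde c(\gamma)=d_l(\gamma)=0$ together with $\widetilde d(\gamma)=1$, so that $\lambda(\gamma)=-1<0$. Throughout I would exploit that here $S$ is the incidence matrix of a directed graph, so $\bm d\in\coker S$ iff $d_i=d_j$ for every internal edge $v_i\to v_j$ and $d_i=0$ whenever $v_i$ carries an in/out-flow; the analogous description holds for $\coker S_{11}$ with $\gamma$ in place of $\Gamma$.

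First I would pin down $\coker S_{11}$. Since $\gamma$ is connected, the equalities $d_i=d_j$ coming from the internal reactions of $\gamma$ propagate over all of $V_\gamma$, so $\coker S_{11}$ is at most one-dimensional, spanned by the constant vector $\bm 1_{V_\gamma}$. Hence $\widetilde d(\gamma)=|\coker S_{11}/D_{11}(\gamma)|\le 1$, and the hypothesis $\widetilde d(\gamma)>0$ yields $\widetilde d(\gamma)=1$, $\coker S_{11}=\mathbb R\,\bm 1_{V_\gamma}\neq\bm 0$, and $D_{11}(\gamma)=\bm 0$ (the local charge $\bm 1_{V_\gamma}$ does not extend to a global one).

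The central step is to deduce from $\coker S_{11}\neq\bm 0$ that $\gamma$ is \emph{forward-closed}: no reaction with reactant in $V_\gamma$ has its target outside $V_\gamma$, and there are no output reactions. Indeed, any exit or output reaction contributes a column of $S_{11}$ with a single nonzero entry $-1$ in the row of its reactant $v_i$, forcing $d_i=0$ and hence, by connectivity, $\coker S_{11}=\bm 0$; output-completeness guarantees that every reaction emanating from $V_\gamma$ already lies in $E_\gamma$, so no such column can be absent. The same single-entry argument rules out any entrance or input reaction in $E_\gamma$ (after discarding reactions not incident to $V_\gamma$, which give zero columns of $S_{11}$). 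Consequently every reaction of $\gamma$ is internal to $V_\gamma$ and $S_{21}=\bm 0$.

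Two consequences then close the proof. From $S_{21}=\bm 0$, a vector supported on $E_\gamma$ lies in $\ker S$ iff it lies in $\ker S_{11}$, so $(\ker S)_{{\rm supp}\,\gamma}\cong\ker S_{11}$ and $\widetilde c(\gamma)=0$ by \eqref{eq:c-til-def}. For the lost charges, any $\bm d\in\coker S$ satisfies $d_i=d_j$ along the internal reactions of $\gamma$; connectivity makes $\bm d$ constant on $V_\gamma$, so its restriction lies in $\coker S_{11}$, giving $X(\gamma)=\coker S$ and $d_l(\gamma)=|(\coker S)/X(\gamma)|=0$. Substituting into \eqref{eq:lambda-decom} gives $\lambda(\gamma)=0+0-1=-1<0$, as claimed. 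I expect the forward-closedness step to be the main obstacle, since it is where output-completeness, connectivity, and the monomolecular single-reactant structure must be combined and the reaction types (internal, exit/output, entrance/input) carefully classified; the remaining steps follow quickly once $S_{21}=\bm 0$ is in hand.
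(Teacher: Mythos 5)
Your proof is correct and follows essentially the same route as the paper: both use the decomposition $\lambda(\gamma)=\widetilde c(\gamma)+d_l(\gamma)-\widetilde d(\gamma)$ and show that an emergent conserved charge in a connected, output-complete, monomolecular subnetwork forces every reaction of $\gamma$ to be internal, i.e.\ $S_{21}=\bm 0$, whence $\widetilde c(\gamma)=d_l(\gamma)=0$ and $\lambda(\gamma)=-\widetilde d(\gamma)<0$. Your write-up is merely more explicit (classifying the edge types and pinning down $\widetilde d(\gamma)=1$, $\lambda(\gamma)=-1$), and it obtains $d_l(\gamma)=0$ from constancy of global charges on $V_\gamma$ rather than the paper's algebraic contradiction from $S_{21}=\bm 0$, but these are the same argument in substance.
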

\begin{proof}
To have an emergent conserved charge in $\gamma$, 
all the boundaries of $\gamma$ should be chemical species and 
not reactions, in a monomolecular reaction network. 
Then, all the reactions in $\gamma$ should end on 
the chemical species inside $\gamma$, which means $S_{21} = \bm 0$. 

Recall that an emergent cycle is 
$\bm c_1 \in \ker S_{11}$ which is not a cycle of the whole network, 
\begin{equation}
  S 
  \begin{pmatrix}
    \bm c_1 \\
    \bm 0
  \end{pmatrix}
  = 
  \begin{pmatrix}
    \bm 0 \\
    S_{21} \bm c_1 
  \end{pmatrix}
  \neq \bm 0 . 
\end{equation}
When $S_{21}=\bm 0$, 
there is no such $\bm c_1$ meaning that all the local cycles are also a global cycle. 
Namely, for a given 
$\bm c_1 \in \ker S_{11}$, 
$
\begin{pmatrix}
  \bm c_1 \\
  \bm 0
\end{pmatrix}
\in \ker S 
$
always holds. 
Thus, we have $\widetilde c(\gamma) = 0$. 

This also results in $d_l(\gamma) = d'(\gamma) - \bar d'(\gamma)=0$ as follows. 
If $d'(\gamma) - \bar d'(\gamma)$
is nonzero, 
there should exist $\bm d_1$ and $\bm d_2$ such that 
[recall the definitions of spaces, 
Eqs.~(\ref{eq:Dp-gamma}) and (\ref{eq:Dbp-gamma})] 
\begin{eqnarray}
  \bm d_1^T S_{11} + \bm d^T_2 S_{21}  &=& \bm 0,\label{eq:ds-1} \\
  \bm d_1^T S_{12} + \bm d^T_2 S_{22}  &=& \bm 0, \\
  \bm d^T_1 S_{11} \neq \bm 0. \label{eq:ds-3}
\end{eqnarray}
However, when $S_{21} = \bm 0$, 
Eqs.~(\ref{eq:ds-1}) and 
(\ref{eq:ds-3}) are contradictory. 
Thus $d'(\gamma) - \bar d'(\gamma)=0$ 
and we have $d_l (\gamma)=0$. 

Therefore, 
we have shown that $\widetilde c (\gamma)= 0$ and 
$d_l (\gamma) = 0$, 
and the index is written as 
\begin{equation}
\lambda (\gamma) 
  = \widetilde c (\gamma) + d_l (\gamma) - \widetilde d (\gamma) 
  = - \widetilde d (\gamma),
\end{equation}
which is negative due to the assumption $\widetilde d (\gamma)> 0$. 

\end{proof}

\section{Metabolic pathways of {\it E.~coli}}\label{sec:app-ecoli}

We here provide the details of the metabolic pathways 
discussed in Sec.~\ref{sec:ex-ecoli}. 

\subsection{List of reactions}\label{sec:ecoli-list}

1: Glucose  +  PEP  $\rightarrow$  G6P  +  PYR. 
 
 2: G6P   $\rightarrow$  F6P. 
 
 3: F6P   $\rightarrow$  G6P.
  
  4: F6P  $\rightarrow$  F16P. 
   
   5: F16P  $\rightarrow$  G3P  +  DHAP. 
   
   6: DHAP   $\rightarrow$  G3P.
   
 7: G3P   $\rightarrow$  3PG. 
  
  8: 3PG   $\rightarrow$  PEP. 
  
  9: PEP   $\rightarrow$  3PG.  
  
  10: PEP   $\rightarrow$  PYR. 
  
  11: PYR   $\rightarrow$  PEP. 
  
  12: PYR    $\rightarrow$  AcCoA  +   CO2. 
  
  13: G6P  $\rightarrow$  6PG. 
  
  14: 6PG  $\rightarrow$   Ru5P  +  CO2. 
  
  15: Ru5P  $\rightarrow$  X5P.
  
   16: Ru5P  $\rightarrow$   R5P. 
   
   17: X5P  +  R5P  $\rightarrow$   G3P  +  S7P. 
   
   18: G3P  +  S7P  $\rightarrow$   X5P  +  R5P. 
   
   19: G3P  +  S7P  $\rightarrow$   F6P  +  E4P.
   
    20: F6P  +  E4P  $\rightarrow$  G3P  +  S7P. 
    
21:  X5P  +  E4P  $\rightarrow$   F6P  +  G3P. 
  
  22: F6P   +  G3P  $\rightarrow$   X5P  +  E4P.  
  
  23: AcCoA  +  OAA  $\rightarrow$  CIT. 
  
  24: CIT   $\rightarrow$  ICT. 
  
  25: ICT  $\rightarrow$  2${\rm \mathchar`-}$KG  +  CO2. 
  
  26: 2-KG  $\rightarrow$   SUC  +  CO2.
  
   27: SUC   $\rightarrow$  FUM. 
   
  28:  FUM  $\rightarrow$  MAL. 
   
   29: MAL   $\rightarrow$  OAA.
   
 30: OAA   $\rightarrow$  MAL.

 31: PEP  +  CO2  $\rightarrow$  OAA.

 32: OAA  $\rightarrow$  PEP  +   CO2. 

 33: MAL  $\rightarrow$   PYR  +  CO2.

34: ICT   $\rightarrow$  SUC  +  Glyoxylate. 

 35: Glyoxylate  +  AcCoA  $\rightarrow$  MAL. 

 36: 6PG  $\rightarrow$   G3P  +  PYR. 

 37: AcCoA  $\rightarrow$   Acetate. 

38:  PYR  $\rightarrow$  Lactate. 

 39: AcCoA  $\rightarrow$  Ethanol. 

 40: R5P  $\rightarrow$ (output).

 41: OAA  $\rightarrow$ (output).

 42: CO2  $\rightarrow$ (output).

43:  (input) $\rightarrow$  Glucose. 

 44:  Acetate $\rightarrow$ (output).
 
  45: Lactate $\rightarrow$ (output).

46:  Ethanol $\rightarrow$ (output).
\\

\subsection{List of buffering structures}\label{sec:ecoli-buffering-list}
 
$\gamma_1=(\{ \rm 
Glucose
\},\{
1
\})$,

$\gamma_2=(\{ \rm 
Glucose,
PEP,
G6P,
F6P,
F16P,
DHAP,
G3P,
3PG,
PYR,
6PG,
Ru5P,
X5P,
R5P, 
S7P,
E4P,
AcCoA,
OAA, \\
CIT,
ICT, 
2{\rm \mathchar`-KG},
SUC,
FUM,
MAL,
CO2,
Glyoxylate,
Acetate,
Lactate,
Ethanol
\},  \{
1,
2 ,
3 ,
4 ,
5 ,
6 ,
7 ,
8 ,
9 ,
10 ,
11 ,
12 ,
13 ,
14 , \\
15 , 
16 ,
17 ,
18 ,
19 ,
20 , 
21 ,
22 ,
23 ,
24 ,
25 ,
26 ,
27 , 
28 ,
29 ,
30 ,
31 ,
32 ,
33 ,
34 ,
35 ,
36 ,
37 ,
38 ,
39 ,
40 ,
41 ,
42 ,
44 ,
45 ,
46 
\})$,

$\gamma_3=(\{ \rm 
F16P
\},\{
5 
\})$,

$\gamma_4=(\{ \rm 
DHAP
\},\{
6 
\})$,

$\gamma_5=(\{ \rm 
G3P,
X5P,
S7P,
E4P
\},\{
7 ,
17 ,
18 ,
19 ,
20 ,
21 ,
22 
\})$,

$\gamma_6=(\{ \rm 
3PG
\},\{
8 
\})$,

$\gamma_7=(\{ \rm 
Glucose,
PEP,
3PG,
PYR,
AcCoA,
OAA,
CIT,
ICT,
2{\rm \mathchar`-KG},
SUC,
FUM, 
MAL,
CO2,
Glyoxylate,
Acetate, \\
Lactate,  
Ethanol
\},
\{
1,
8 ,
9 ,
10 ,
11 ,
12 ,
23 ,
24 ,
25 ,
26 ,
27 ,
28 ,
29 ,
30 ,
31 ,
32 ,
33 ,
34 ,
35 ,
37 ,
38 ,
39 ,
41 ,
42 ,
44 ,
45 ,
46 
\})$,

$\gamma_8=(\{ \rm 
X5P,
S7P,
E4P
\},\{
17 ,
18 ,
19 ,
20 ,
21 
\})$ ,

$\gamma_9=(\{ \rm 
CIT
\},\{
24 
\})$,

$\gamma_{10}=(\{ \rm 
2{\rm \mathchar`-KG}
\},\{
26 
\})$,

$\gamma_{11}=(\{ \rm 
SUC
\},\{
27 
\})$ ,

$\gamma_{12}=(\{ \rm 
FUM
\},\{
28 
\})$ 

$\gamma_{13}=(\{ \rm 
Glyoxylate
\},\{
35 
\})$,

$\gamma_{14}=(\{ \rm 
X5P,
R5P,
S7P,
E4P
\},\{
17 ,
18 ,
19 ,
20 ,
21 ,
40 
\})$ ,

$\gamma_{15}=(\{ \rm 
Acetate
\},\{
44 
\})$,

$\gamma_{16}=(\{ \rm 
Lactate
\},\{
45 
\})$,

$\gamma_{17}=(\{ \rm 
Ethanol
\},\{
46 
\})$.

\subsection{Parameter values used in Figure~\ref{fig:dyn}}\label{sec:parameters}
In Fig.~\ref{fig:dyn}, for an illustration purpose, we employ the mass-action kinetics, where the rate of the $i$-th reaction is given by the product of its substrate concentrations, $r_i = k_i \prod_A (x_A(t))^{y_{iA}}$ (see Eq.~\eqref{eq:yybar} for the definition of $y_{iA}$).

In the simulation, the initial concentrations and the reaction rate constants are chosen randomly:
$x_{\text {6PG}} = 0.8$,\,$x_{\text {AcCoA}} = 0.8$,\,$x_{\text {Acetate}} = 0.4$,\,$x_{\text {CIT}} = 0.3$,\,$x_{\text {CO2}} = 0.6$,\,$x_{\text {DHAP}} = 0.1$,\,$x_{\text {E4P}} = 0.8$,\,$x_{\text {Ethanol}} = 0.2$,\,$x_{\text {F16P}} = 0.2$,\,$x_{\text {F6P}} = 0.5$,\,$x_{\text {FUM}} = 0.3$,\,$x_{\text {G3P}} = 0.3$,\,$x_{\text {G6P}} = 0.2$,\,$x_{\text {Glucose}} = 0.7$,\,$x_{\text {Glyoxylate}} = 0.6$,\,$x_{\text {ICT}} = 0.4$,\,$x_{\text {KG2}} = 0.5$,\,$x_{\text {Lactate}} = 1.$,\,$x_{\text {MAL}} = 0.4$,\,$x_{\text {OAA}} = 1.$,\,$x_{\text {PEP}} = 0.6$,\,$x_{\text {PG3}} = 1.$,\,$x_{\text {PYR}} = 0.1$,\,$x_{\text {R5P}} = 0.2$,\,$x_{\text {Ru5P}} = 0.4$,\,$x_{\text {S7P}} = 0.7$,\,$x_{\text {SUC}} = 0.1$,\,$x_{\text {X5P}} = 0.6$
and 
$k_{1} = 1$,\,$k_{2} = 4.7$,\,$k_{3} = 7.8$,\,$k_{4} = 5.7$,\,$k_{5} = 3.8$,\,$k_{6} = 9.7$,\,$k_{7} = 5.0$,\,$k_{8} = 6.2$,\,$k_{9} = 3.5$,\,$k_{10} = 9.8$,\,$k_{11} = 2.5$,\,$k_{12} = 6.1$,\,$k_{13} = 4.0$,\,$k_{14} = 3.8$,\,$k_{15} = 7.8$,\,$k_{16} = 2.6$,\,$k_{17} = 3.8$,\,$k_{18} = 5.5$,\,$k_{19} = 5.7$,\,$k_{20} = 4.7$,\,$k_{21} = 8.0$,\,$k_{22} = 7.3$,\,$k_{23} = 9.2$,\,$k_{24} = 1.1$,\,$k_{25} = 9.6$,\,$k_{26} = 7.4$,\,$k_{27} = 7.4$,\,$k_{28} = 8.3$,\,$k_{29} = 6.2$,\,$k_{30} = 6.4$,\,$k_{31} = 6.2$,\,$k_{32} = 7.9$,\,$k_{33} = 9.1$,\,$k_{34} = 6.7$,\,$k_{35} = 1.6$,\,$k_{36} = 9.6$,\,$k_{37} = 4.7$,\,$k_{38} = 5.1$,\,$k_{39} = 7.3$,\,$k_{40} = 3.8$,\,$k_{41} = 8.4$,\,$k_{42} = 9.7$,\,$k_{43} = 4.8$,\,$k_{44} = 2.0$,\,$k_{45} = 8.0$,\,$k_{46} = 3.7$.


\bibliographystyle{utphys.bst}

\bibliography{refs}

\end{document}